\title{Testing Support Size More Efficiently Than Learning Histograms}
\newcommand\IfRestateTF{%
  \ifx\label\thmt@gobble@label 
    \expandafter\@firstoftwo
  \else
    \expandafter\@secondoftwo
  \fi
}
\newcommand{\RestateRemark}{\IfRestateTF{{\normalfont\bfseries (Restated) }}{}}
\newcommand*\ie{i.\kern.1em e.\ }
\newcommand*\eg{e.\kern.1em g.\ }
\newcommand*\cf{c.\kern.1em f.\ }
\newcommand*\almev{a.\kern.1em e.\ }
\definecolor{ama-iro}{RGB}{0, 158, 243.0}
\definecolor{fuyu-gaki}{RGB}{251, 74, 52}
\definecolor{momiji}{RGB}{245, 70, 111}
\definecolor{hotaru-bi}{RGB}{229,221,58} 
\definecolor{kon-peki}{RGB}{1,120,217}
\definecolor{shin-kai}{RGB}{77,98,152}
\definecolor{shin-ryoku}{RGB}{1,145,97}
\definecolor{yama-budo}{RGB}{171,14,122}
\crefname{observation}{Observation}{Observations}
\crefname{claim}{Claim}{Claims}
\crefname{fact}{Fact}{Facts}
  \declaretheorem[style=standard,name=Constraint,numbered=yes]{constraint}
\renewcommand{\theconstraint}{%
  \ifnum\value{constraint}=5
    IVb%
  \else
    \Roman{constraint}%
  \fi
}
\crefname{constraint}{Constraint}{Constraints}
\newcommand{\ignore}[1]{}
\DeclareMathOperator{\supp}{supp}   
\DeclareMathOperator{\poly}{poly}
\newcommand{\dist}{\mathsf{dist}}
\newcommand{\Var}[1]{\mathrm{Var} \left[ #1 \right]}
\newcommand{\Ex}[1]{\bE \left[ #1 \right]}
\renewcommand{\Pr}[1]{\bP \left[ #1 \right]} 
\newcommand{\Pru}[2]{\underset{ #1 }\bP \left[ #2 \right]}
\newcommand{\Pruc}[3]{\underset{ #1 }\bP \left[ #2 \;\; \mid \;\; #3 \right]}
\newcommand*{\define}{\mathrel{\vcenter{\baselineskip0.5ex \lineskiplimit0pt
                      \hbox{\scriptsize.}\hbox{\scriptsize.}}}%
                      =}
\newcommand{\ceil}[1]{\ensuremath{\lceil #1 \rceil}}
\DeclarePairedDelimiter{\abs}{\lvert}{\rvert}
\newcommand{\ind}[1]{\mathds{1} \left[ #1 \right] }
\newcommand{\zo}{\{0,1\}}
\newcommand{\cH}{\ensuremath{\mathcal{H}}}
\newcommand{\cX}{\ensuremath{\mathcal{X}}}
\newcommand{\bE}{\ensuremath{\mathbb{E}}}
\newcommand{\bN}{\ensuremath{\mathbb{N}}}
\newcommand{\bP}{\ensuremath{\mathbb{P}}}
\newcommand{\bR}{\ensuremath{\mathbb{R}}}
\newcommand{\Poi}{\mathsf{Poi}}
\newcommand{\ACCEPT}{\mathsf{Accept}}
\newcommand{\REJECT}{\mathsf{Reject}}
\newcommand{\successprob}{\sigma}
\newcommand{\VC}{\mathsf{VC}}
\newcommand{\TV}{\mathsf{TV}}
\newcommand{\distTV}{\dist_\TV}
\newcommand{\DIST}{\mathsf{DIST}}
\newcommand{\FUN}{\mathsf{FUN}}
\newcommand{\eff}{\mathsf{eff}}
\renewcommand{\epsilon}{\varepsilon}
\begin{document}

\date{}

\maketitle

\begin{abstract}
Consider two problems about an unknown probability distribution $p$:
\begin{enumerate}
\item How many samples from $p$ are required to test if $p$ is
supported on $n$ elements or not? Specifically, given samples from $p$, determine
whether it is supported on at most $n$ elements, or it is ``$\epsilon$-far'' (in total
variation distance) from being supported on $n$ elements.
\item Given $m$ samples from $p$, what is the largest lower bound on its support size
that we can produce?
\end{enumerate}
The best known upper bound for problem (1) uses a general algorithm for learning the histogram of
the distribution $p$, which requires $\Theta(\tfrac{n}{\epsilon^2 \log n})$ samples. We show that
testing can be done more efficiently than learning the histogram, using only $O(\tfrac{n}{\epsilon
\log n} \log(1/\epsilon))$ samples, nearly matching the best known lower bound of
$\Omega(\tfrac{n}{\epsilon \log n})$. This algorithm also provides a better solution to problem (2),
producing larger lower bounds on support size than what follows from previous work. The proof relies
on an analysis of Chebyshev polynomial approximations \emph{outside} the range where they are
designed to be good approximations, and the paper is intended as an accessible self-contained
exposition of the Chebyshev polynomial method.
\end{abstract}



\section{Introduction}

Sadly, it is often necessary to make decisions based on probability distributions. For example, you
may need to make a decision based on the support size of an unknown distribution:

\begin{example}
\label{example:forest}
There is a population of fish in the lake, which may be described as a probability distribution
$p$ over the set of possible fish species. You want to preserve a specimen of each species.
You have no idea how many fish or fish species there are, and you only have
10000 sample jars. Is that enough, or should you buy more? You would like to decide between: 
\begin{enumerate}[itemsep=0pt,topsep=0.3em]
\item There are at most 10000 species of fish in the lake. (You have enough jars.)
\item Any collection of 10000 species will miss at least 0.1\% of the population, \ie 0.1\% of the
fish will belong to species \emph{not} in the collection. (You need more jars.)
\end{enumerate}
\ignore{You are happy with specimens representing 99.9\% of the \emph{fish}, as opposed to 99.9\% of
the \emph{species}, since the vast majority of species may comprise only a tiny fraction of fish.}
The na\"ive strategy is to sample random fish until you have filled all the jars, and if you
continue to see new species, buy more jars. Of course, you prefer to predict in advance if you need
more jars. How many fish do you need to sample before you can make this decision (and be correct
with probability at least 99\%)?  
\end{example}
The best known algorithms for making this type of decision work by learning the histogram of the
distribution (the unordered multiset of nonzero probabilities densities $\{
p_i | p_i \neq 0\}$). In this paper we show there is a more sample-efficient way to do it. Formally,
the problem is to distinguish between distributions with support size at most $n$, and those which
are \emph{$\epsilon$-far} from having support size at most $n$. We say $p$ is \emph{$\epsilon$-far}
from having support size at most $n$ when, for every probability distribution $q$ with support size
$|\supp(q)| \leq n$, the total variation (TV) distance between $p$ and $q$ is $\distTV(p,q) >
\epsilon$. Then:
\begin{definition}[Testing Support Size]
\label{def:testing-support-size}
A \emph{support-size tester} with sample complexity $m(n,\epsilon,\successprob)$ is an algorithm $A$
which takes as input the parameters $n \in \bN$, $\epsilon \in (0,1)$, and $\successprob \in (0,1)$.
It draws a multiset $\bm S$ of $m = m(n,\epsilon,\successprob)$ independent samples from an unknown
probability distribution $p$ over $\bN$, and its output must satisfy:
\begin{enumerate}[itemsep=0pt, topsep=0.3em]
\item If $|\supp(p)| \leq n$ then $\Pru{\bm S}{A(\bm S) \text{ outputs } \ACCEPT} \geq \successprob$; and
\item If $p$ is \emph{$\epsilon$-far} from having support size at most $n$, then
$\Pru{\bm S}{A(\bm S) \text{ outputs } \REJECT} \geq \successprob$.
\end{enumerate}
Unless otherwise noted, we set $\successprob = 3/4$.
\end{definition}

Testing support size is a basic statistical decision problem (see \eg the textbook \cite[\S 11.4]{Gol17} and
recent work \cite{GR23,AF24,AFL24,KLR25}) that underlies many other commonly studied tasks.  For example,
it is a decision version of the problem of \emph{estimating} support size -- a problem whose history
dates back to Fisher \cite{FCW43}, Goodman \cite{Goodman49}, and Good \& Turing \cite{Good53}; see
\cite{BF93} for a survey, and more recent work including \cite{RRSS09,VV11stoc,VV17,WY19,HO19}.
Testing support size is also
important for understanding the \emph{testing vs.~learning} question in property testing.

\subsection{Testing \texorpdfstring{vs.~Learning Distributions}{vs. Learning Distributions}}

Testing support size is a \emph{distribution testing} problem, which
is a type of \emph{property testing}.  A basic technique in property testing is
the testing-by-learning approach of \cite{GGR98}, where the tester learns an approximation of
the input and makes its decision based on this approximation. One of the guiding questions in
property testing is: When can this technique be beaten?

It requires $\Theta(n/\epsilon^2)$ samples to \emph{learn} a distribution $p$ over domain $[n]$, up
to TV distance $\epsilon$ (see \eg the survey \cite{Can20}). Remarkable and surprising recent work
\cite{VV11stoc,VV17,HJW18,HO19pml} proves that the \emph{histogram} of distributions $p$ over $[n]$
can be learned with only $\Theta(\tfrac{n}{\epsilon^2 \log n})$ samples --- a vanishing fraction of
the domain. This means one can test the support size -- or perform several other tasks -- using
$O(\tfrac{n}{\epsilon^2 \log n})$ samples to learn the histogram.  As noted in the textbook
\cite{Gol17} and recent work \cite{GR23,AFL24}, this testing-by-learning algorithm gives the best
known upper bound for testing support size when the true support size is promised to satisfy
$|\supp(p)| = O(n)$. Without this promise there does not appear to be a simple way to obtain a
similar bound from known results (see \cite[pp.21]{GR23} and the discussion in
\cref{section:support-assumption}, where we sketch arguments to obtain bounds of
$O(\tfrac{n}{\epsilon^3 \log n})$ from \cite{VV16}). We show that the testing-by-learning algorithm
can be beaten, while also removing any restriction on the true support size:
\begin{theorem}
\label{res:intro-main}
For all $n \in \bN$ and $\epsilon \in (0,1)$, the sample complexity of testing support size of an
unknown distribution $p$ (over any countable domain) is at most
\[
  m(n,\epsilon) = O\left(\frac{n}{\epsilon \log n} \cdot \min\left\{ \log(1/\epsilon), \log n \right\} \right) \,.
\]
\end{theorem}
In terms of \cref{example:forest}, if you have $n$ jars, you can decide whether to buy more jars
after taking $O(\tfrac{n}{\epsilon \log n}\log(1/\epsilon))$ samples, filling
less than a sublinear $O(\tfrac{\log(1/\epsilon)}{\epsilon \log n})$-fraction of the jars.
\cref{res:intro-main} nearly matches the best known lower bound of $\Omega(\tfrac{n}{\epsilon \log
n})$ (which can be deduced from lower bounds of \cite{VV11stoc,WY19}). An important note about this
problem is that, in the setting of \cref{res:intro-main}, the related problem of \emph{estimating}
the support size is impossible: the unknown support size of $p$ is unbounded and the probability
densities $p_i$ can be arbitrarily small. If one makes assumptions to avoid these issues, then there
are tight bounds of $\Theta(\tfrac{k}{\log k} \log^2(1/\epsilon))$ for estimating the support size
up to $\pm \epsilon k$ when each nonzero density $p_i$ is promised to satisfy $p_i > 1/k$
\cite{WY19,HO19} (see also \cite{RRSS09,VV11stoc,ADOS17,VV17}). Recalling \cref{example:forest}, $k$
would be the total number of \emph{fish} in the lake, whereas in our results $n$ is the number of
\emph{species}.

\subsection{Good Lower Bounds on Support Size}
Quoting I.~J.~Good in \cite{BF93}, ``I don't believe it is usually possible to estimate the number
of unseen species [\ie support size]\dots but only an approximate lower bound to that number.'' So,
in lieu of an estimate, what is the best lower bound we can get?
\begin{question}
\label{question:good-lb}
Given $m$ samples from a distribution $p$ and parameter $\epsilon \in (0,1)$, what is the biggest
number $\widehat S$ that we can output, while still satisfying $\widehat S \leq
(1+\epsilon)|\supp(p)|$?
\end{question}
Our algorithm can produce lower bounds as large as $\Omega(\tfrac{\epsilon}{\log(1/\epsilon)} m \log
m)$ out of only $m$ samples. Let us formalize the quality of these lower bounds.  A reasonable
target for such a ``Good'' lower bound $\widehat S$ is that it should exceed the
\emph{$\epsilon$-effective support size}, the smallest number of elements covering $1-\epsilon$
probability mass:

\begin{definition}[Effective support size]
\label{def:effective}
For any probability distribution $p$ over $\bN$ and any $\epsilon \in (0,1)$, we define the
\emph{$\epsilon$-effective support size} $\eff_\epsilon(p)$ as the smallest number $k \in \bN$
such that there exists distribution $q$ with $|\supp(q)|=k$ and $\dist_\TV(p,q) \leq \epsilon$.
\end{definition}

\noindent
With no assumptions on the distribution, the effective support size is a more natural target for
estimation, because it accounts for the fact that an arbitrarily large number of elements in the
support may comprise an arbitrarily small probability mass. Effective support size was used in
\cite{CDS18,BCG19}, and algorithms for estimating it were analyzed in \cite{Gol19vdf,NT23,Gol25},
but these algorithms assume the algorithm can learn the probability densities $p_i$ by quering
element $i$. Our algorithm provides a Good lower bound using only samples:

\begin{corollary}
\label{res:intro-good-lb}
For all $n \in \bN$ and $\epsilon \in (0,1)$, there is an algorithm which draws at most
\[
  O\left(\frac{n}{\epsilon \log n} \cdot \min\{ \log(1/\epsilon), \log n \} \right)
\]
samples from an arbitrary distribution $p$, and outputs a number $\widehat S$ which satisfies (with
probability at least $3/4$)
\[
  \min\{ \eff_\epsilon(p), n \} \leq \widehat S \leq (1+\epsilon) |\supp(p)| \,.
\]
\end{corollary}
\noindent
We are not aware of any prior work focusing on this type of guarantee, although it seems
quite natural: the value $\widehat S$ output by the algorithm can be used for problems like
\cref{example:forest} to predict the resources required for a task depending on support size. The
algorithm will tell us either:
\begin{enumerate}[itemsep=0pt]
\item If $\widehat S < n$ (which would be the case if we were promised, say, $|\supp(p)| < n/2$),
then $\widehat S$ units (jars) suffice to cover the distribution up to the removal of $\epsilon$
mass, and fewer than $\approx \widehat S$ units will fail to cover the whole distribution.
Furthermore, we obtain this estimate using fewer samples than required to learn the histogram.
\item If $\widehat S \geq n$, then we have learned that we need more than $\approx \widehat S$ units
to cover the distribution.
\end{enumerate}
Let us translate this result into an answer for \cref{question:good-lb}. For a fixed sample size
$m$, the na\"ive approach, taking $\widehat S$ to be the number of unique elements appearing in the
sample, produces a lower bound $\widehat S \leq |\supp(p)|$ that exceeds $\min\{\eff_\epsilon(p),
\Theta(\epsilon m)\}$ (see \cref{res:naive-estimator}). The histogram learners (if they were proved
to succeed on this task without an assumption on true support size) produce a lower bound $\widehat
S \leq (1+\epsilon)|\supp(p)|$ that exceeds $\min\{\eff_\epsilon(p), \Theta(\epsilon^2 m \log m)\}$.
Our algorithm improves this lower bound up to $\min\{\eff_\epsilon(p),
\Theta(\tfrac{\epsilon}{\log(1/\epsilon)} m \log m)\}$.  

\subsection{Testing \texorpdfstring{vs.~Learning Boolean Functions}{vs. Learning Boolean Functions}}

Testing support size of distributions is also important for understanding testing vs.~learning of
Boolean functions. Arguably the most well understood model of learning is the distribution-free
sample-based PAC model: The algorithm receives samples of the form $(x, f(x))$ where $x \sim p$ is
drawn from an unknown distribution, and labeled by an unknown function $f : \cX \to \zo$, which is
promised to belong to a hypothesis class $\cH$. The algorithm should output
a function $g \in \cH$ such that $\mathbb{P}_{x \sim p}[ f(x) \neq g(x) ] \leq \epsilon$. (Requiring
$g \in \cH$ is called \emph{proper} PAC learning.)

Whereas the PAC learner assumes the input $f$ belongs to the hypothesis class $\cH$, a \emph{tester}
for $\cH$ is an algorithm which tests this assumption: 

\begin{definition}[Distribution-free sample-based testing; see formal \cref{def:testing-functions}]
Given labeled samples $(x, f(x))$ with $x \sim p$ drawn from unknown distribution $p$ and labeled by
unknown function $f$, decide whether (1) $f \in \cH$ or (2) $f$ is $\epsilon$-far from all functions
$g \in \cH$, meaning $\Pru{x \sim p}{f(x) \neq g(x)} \geq \epsilon$.
\end{definition}

Testing $\cH$ can be done by running a proper PAC learner and checking if it worked \cite{GGR98}.
One motivation for testing algorithms is to aid in model selection, \ie choosing an appropriate
hypothesis class $\cH$ for learning. To be useful for this, the tester should be more
efficient than the learner. The sample complexity of proper PAC learning is between
$\Omega(\VC/\epsilon)$ and $O( \tfrac{\VC}{\epsilon} \log(1/\epsilon) )$, where $\VC$ is the VC
dimension of $\cH$ (and there are examples where either bound is tight) \cite{BEHW89,Han16,Han19}.
But there is no similar characterization of the sample complexity for \emph{testing}, and it is not
known when it is possible to do better than the testing-by-PAC-learning method.

In fact, very little is known about property testing in
the distribution-free sample-based setting, and there are not many positive results (see \eg
\cite{GR16,Gol17,RR20,BFH21,RR22}). For example, the basic class $\cH_n$ of functions $f \colon \bN
\to \zo$ with $|f^{-1}(1)| \leq n$ has a tight bound of $\Theta(\VC/\epsilon)$ for proper learning
with samples, but for testing with samples, tight bounds in terms of both $\VC$ and $\epsilon$ are
not known\footnote{If the domain $\bN$ is replaced with $[Cn]$ for constant $C$, and queries are
allowed, then $O(1/\epsilon^2)$ queries suffice.}\!\!. Using the lower bounds of \cite{VV17,WY19}
for support size estimation, \cite{BFH21} show lower bounds of the form $\Omega(\tfrac{\VC}{\epsilon
\log \VC})$ for this class and several others (\eg halfspaces, $k$-alternating functions \&c.). This
still leaves room for interesting upper bounds: any bound of $o(\VC/\epsilon)$ means that the tester
is doing something significantly different from testing-by-PAC-learning.

Indeed, \cite{GR16} show an upper bound of $O(\tfrac{\VC}{\epsilon^2 \log \VC})$ samples for testing
$\cH_n$ (with the extra promise that $f^{-1}(1) \subset [2n]$), by learning the \emph{distribution}
instead of the \emph{function}: their algorithm learns the histogram of the underlying
distribution $p$ on the subdomain $f^{-1}(1)$, with some adjustments to handle the fact that the
probability mass of $p$ inside $f^{-1}(1)$ may be small. This shows that the lower bound of
\cite{BFH21} from support-size estimation is sometimes tight in terms of the VC dimension, and hints
at a closer connection to distribution testing.  We give a simple proof of a tighter and cleaner
relationship -- that testing support size of distributions and functions are essentially the same
problem.

\begin{theorem}[Equivalence of testing support size for distributions and functions]
\label{res:intro-functions}
Let $m^\DIST(n,\epsilon,\successprob)$ be the sample
complexity for testing support size with success probability $\successprob$.
Let $m^\FUN(n,\epsilon, \successprob)$ be
the sample complexity for distribution-free sample-based testing $\cH_n$ with success
probability $\successprob$.
Then $\forall$ $n \in \bN$, $\epsilon, \successprob \in (0,1)$, and $\xi \in (0, 1-\successprob)$,
\[
  m^\DIST(n,\epsilon,\successprob) \leq m^\FUN(n,\epsilon,\successprob)
  \leq m^\DIST(n,\epsilon,\successprob+\xi) + O\left(\tfrac{\log(1/\xi)}{\epsilon}\right) \,.
\]
\end{theorem}

With \cref{res:intro-main}, this improves the best upper bound for testing $\cH_n$ from
$O(\tfrac{\VC}{\epsilon^2 \log \VC})$ \cite{GR16} to $O(\tfrac{\VC}{\epsilon \log \VC}
\log(1/\epsilon))$. So, not only can $\cH_n$ be tested more efficiently than learned, but
even more efficiently than the histogram of the
underlying distribution can be learned.

We note also that there are reductions from testing the class $\cH_n$ to testing several other
classes like $k$-alternating functions, halfspaces, and decision trees \cite{BFH21,FH23}, which
emphasizes the role of testing support size of distributions to the testing vs.~learning question
for Boolean functions.

\subsection{Techniques and Open Problems}

The first part of our proof closely follows the optimal upper bound of \cite{WY19} for the support
size \emph{estimation} problem. In short, if domain element $i$ has probability density $p_i$ and
appears $N_i$ times in the sample, the estimator of \cite{WY19} outputs $\sum_i f(N_i)$, where
$f(N_i)$ is constructed such that $\Ex{f(\bm N_i)} = Q(p_i)$ where $Q(p_i)$ is a function satisfying
$Q(p_i) \approx \ind{p_i > 0}$. If this can be done, then
\[
    \Ex{ \sum_i f(\bm N_i) } = \sum_i Q(p_i) \approx |\supp(p)| .
\]
They make the necessary assumption that every nonzero density satisfies $p_i > 1/n$, allowing them
to use Chebyshev polynomials to define a suitable $Q$ where $Q(p_i) \approx \ind{p_i > 1/n}$. 
\cref{section:test-statistic} explains why Chebyshev polynomials arise naturally.

We follow the same strategy, although we no longer have any assumption on the nonzero densities.  We
construct a Chebyshev polynomial approximation $Q(p_i) \approx \ind{p_i > \ell}$ for carefully
chosen $\ell$ (depending on $\epsilon$ and $n$). The main challenge is to account for the densities
$0 < p_i < \ell$, where the Chebyshev polynomial is not ``designed'' to be a good approximation, \ie
$Q(p_i) \not\approx \ind{p_i > 0}$.

\paragraph*{Main idea:} We establish a tradeoff between the number of ``heavy elements'' with $p_i >
\ell$ (on which the polynomial approximation satisfies $Q(p_i) \approx 1$), and the value of the
estimator on the ``light elements'' with $p_i \leq \ell$. This is proved in
\cref{section:improved-bound}, with an easier version giving a worse bound in
\cref{section:correctness_and_parameters}. The easier version uses concavity of $Q$ on $(0, \ell)$
to show $Q(p_i) \gtrsim p_i/\ell$ (\cref{res:q-light-bounds}).  In this case, we use $\ell \approx
\epsilon/n$.  This gives us a tradeoff of the form
\begin{equation}
\label{eq:techniques-linear}
    \sum_i Q(p_i) \approx  \#\{\text{heavy elements}\} + \frac{n}{\epsilon} \sum_{\text{light }
i} p_i .
\end{equation}
Then, what remains to show is that, when $p$ is $\epsilon$-far from $|\supp(p)| \leq n$, this
quantity is large (and then bound the variance of the estimator, which depends on the degree of the
polynomial approximation). To get a better bound on the sample size and prove our main theorem, we
use a different setting of parameters in our polynomial approximation, and we establish a more
complicated tradeoff than \cref{eq:techniques-linear} (see \cref{res:Q*-worst-case}), essentially
\begin{equation}
    \sum_i Q(p_i) \geq \left(n + \frac{\epsilon}{p^*}\right) Q(p^*),
\end{equation}
where $p^*$ is the $n^{th}$ largest density and $\epsilon$ is the distance to $|\supp(p)|\leq n$,
and we characterize the worst-case behaviour of this tradeoff. 

\ignore{
In more detail, the idea is to construct a test statistic $\widehat S$ that is small if $|\supp(p)| \leq n$ and
large if $p$ is $\epsilon$-far from $|\supp(p)| \leq n$. The estimator will be of the form
\[
    \widehat S = \sum_{i \in \bN} (1 + f(N_i)) \,,
\]
where $N_i$ is the number of times $i \in \bN$ appears in the sample, and $f$ is a carefully chosen
function satisfying $f(0) = -1$ (so that elements outside the support contribute nothing). If the
sample size was large enough to guarantee that every $i \in \supp(p)$ appears in
the sample, we could take $f(j) = 0$ for all $j \geq 1$ (which gives the ``plug-in estimator'' that
counts the number of observed elements), but this is not the case, so we must choose
$f(j)$ to extrapolate over the unseen elements.

A standard technique is to analyze the Poissonized version of the tester, where each element $i$
appears in the sample with multiplicity $N_i \sim \Poi(mp_i)$, with $\Poi(\cdot)$ denoting the
Poisson distribution and $m$ being (roughly) the desired sample size.  When we do this, the expected
value of the test statistic becomes
\[
    \Ex{\widehat S} = \sum_{i \in \bN} \left(1 + e^{-mp_i} P(p_i)\right),
\]
where $P$ is a polynomial whose coefficients are determined by the choice of $f$. Our goal is now to
choose a polynomial $P(x)$ which has both $P(0) = -1$ and $P(x) \approx 0$ for $x > 0$. Chebyshev
polynomials are uniquely well-suited to this task, allowing us to achieve $P(0) = -1$ and $P(x)
\approx 0$ on a chosen ``safe interval'' $[\ell, r]$. Indeed, in the setting where there is a lower
bound of $p_i \geq 1/n$ on all nonzero probability densities, \cite{WY19} use a safe interval with
$\ell = 1/n$, so that $e^{-mp_i} P(p_i) = \pm \epsilon$ for all elements in the support,
allowing to accurately estimate $|\supp(p)|$.

Unlike \cite{WY19}, we do not have any lower bounds on the values of $p_i$, and it is not possible
to get an accurate estimate of $|\supp(p)|$. Instead, we use Chebyshev polynomials that are good
approximations on a wider ``safe interval'' $[\ell, r]$, and we must also analyze what happens to
densities $p_i$ outside the safe interval, where the Chebyshev polynomials are \emph{not} good
approximations.  $\widehat S$ is essentially an underestimate of $|\supp(p)|$, so we must show that
it is not too small when $p$ is $\epsilon$-far. We observe that, if $p$ is $\epsilon$-far, then it
either has enough densities $p_i$ in the safe interval $[\ell, r]$ to make $\widehat S$ large, or it
has \emph{many} small densities $p_i < \ell$. If it has \emph{many} small densities, we use a
careful analysis of the Chebyshev polynomial $P$, based on the fact that it is concave on the range
$(0,\ell)$, to show that $\widehat S$ will be large enough for the tester to reject, even if it is a
severe underestimate.

We give two proofs of this last claim: the first proof sets the boundary of the safe interval at
$\ell = O(\epsilon/n)$. This proof (completed in \cref{section:correctness_and_parameters}) is
simpler and clearer but only gives an upper bound of $O(\tfrac{n}{\epsilon \log
n}\log^2(1/\epsilon))$ (already an improvement on the best known bound), and we include it for the
purpose of exposition. The second proof is more technical and gives the improved bound in
\cref{res:intro-main}, by raising $\ell$ to $\ell = O(\tfrac{\epsilon}{n}\log(1/\epsilon))$. This is
done by combining the concavity argument with a characterization of the worst-case behaviour of the
estimator in terms of a differential inequality, completed in \cref{section:improved-bound}. Our
proof strategy does not seem to allow improving the parameters any further than this
(\cref{remark:no-better-parameter}).
}

\paragraph*{Open questions.}

The sample complexity upper bound from \cref{res:intro-main} leaves a small gap compared to the best
known lower bound of $\Omega(\tfrac{n}{\epsilon \log n})$ from \cite{VV11stoc,WY19}. Can we close
this gap? We remark that our proof strategy does not allow us to improve any of our parameters (see
\cref{remark:no-better-parameter}), and therefore we suspect that our upper bound is tight, unless
there is a better algorithmic strategy (say, a non-linear estimator).

We may also ask for a \emph{tolerant} tester \cite{PRR06}, \ie an algorithm that accepts not only
when $p$ has support size at most $n$, but also when $p$ is \emph{$\epsilon'$-close} to such a
distribution for some $\epsilon' < \epsilon$. Equivalently, we wish to distinguish between
$\eff_{\epsilon'}(p) \le n$ and $\eff_\epsilon(p) > n$. Our tester from \cref{res:intro-main}, like
any sample-based tester, enjoys the very small amount of tolerance $\epsilon' = \Theta(1 /
m(n,\epsilon))$. Under the promise that the true support size satisfies $|\supp(p)| = O(n)$,
algorithms for learning the histogram \cite{VV11stoc,VV17,HJW18,HO19pml} imply an upper bound of
$O\left(\tfrac{n}{(\epsilon-\epsilon')^2 \log n}\right)$ for tolerant testing of support size, and
in \cref{section:support-assumption} we discuss how the techniques of \cite{VV16} may plausibly
allow one to drop the support size assumption at the cost of higher dependence on
$\epsilon-\epsilon'$. Can we achieve $\tfrac{n}{\log n} \cdot \widetilde
O\left(\tfrac{1}{(\epsilon-\epsilon')^2}\right)$?

\subsection{Organization}

One may treat the first part of our paper (up to \cref{remark:recover-wy}) as an
alternative self-contained exposition of \cite{WY19}, and we intend the paper to be as clear and
accessible an explanation of the Chebyshev polynomial method as possible (we also refer the reader
to the textbook \cite{WY20} on polynomial methods in statistics).

\begin{description}[topsep=0pt,itemsep=0em]
\item[\cref{section:test-statistic}] sets up the testing algorithm using generic parameters.
\item[\cref{section:parameters}] places
\cref{constraint:d_log,constraint:right_tail,constraint:variance,constraint:ell} on the parameters
in order to guarantee the success of the tester, and then in
\cref{section:correctness_and_parameters} we optimize the parameters to complete a weaker (but
simpler) version of \cref{res:intro-main}. The stronger version follows by replacing
\cref{constraint:ell} with the looser \cref{constraint:better_ell} in \cref{section:improved-bound}.
\item[\cref{section:good-lower-bound}] proves that the algorithm outputs a lower bound on
support size, \cref{res:intro-good-lb}.
\item[\cref{section:functions}] proves equivalence of testing support size of
distributions and functions, \cref{res:intro-functions}.
\end{description}

\section{Defining a Test Statistic}
\label{section:test-statistic}

We begin by defining a ``test statistic'' $\widehat S$ which the tester will use to make its
decision -- the tester will output \textsf{Accept} if $\widehat S$ is small and output
\textsf{Reject} if it is large. But our test statistic will require $\log(1/\epsilon) < \log(n)$, so
we first handle the case where the parameter $\epsilon$ is very small.

\subsection{Small \texorpdfstring{$\epsilon$}{epsilon}}
\label{section:small-epsilon}

If $\epsilon$ is small enough that $\log(1/\epsilon) = \Omega(\log n)$, we will use the following
simple tester. This result is folklore (appearing \eg in \cite{GR23,AF24,AFL24}) but we include a
proof for the sake of completeness. We start with the following simple observation:
\begin{observation}
    \label{obs:eff}
    For any $n \in \bN$, $\epsilon \in (0,1)$, and probability distribution $p$, $p$ is
    $\epsilon$-far from having support size at most $n$ if and only if $\eff_\epsilon(p) > n$.
\end{observation}

\noindent
Using this observation, we give

\begin{proposition}
\label{res:naive-estimator}
There is an algorithm which, given inputs $n \in \bN$, $\epsilon \in (0,1)$, and sample access to an
arbitrary distribution $p$, draws at most $O(n/\epsilon)$ samples from $p$ and outputs a number
$\widehat{\bm S}$ which (with probability at least $3/4$) satisfies
\[
  \min\{\eff_\epsilon(p), n \} \leq \widehat{\bm S} \leq |\supp(p)| \,.
\]
In particular, there is a support size tester using $O(n/\epsilon)$ samples, obtained by
running this algorithm with parameters $n+1, \epsilon$ and outputting $\ACCEPT$ if and only if
$\widehat{\bm S} \leq n$.
\end{proposition}
\begin{proof}
Write $k \define \min\{n, \eff_\epsilon(p)\}$.  Let $m \define 10 n / \epsilon$. The algorithm draws
$m$ independent random samples from $p$ and outputs $\widehat{\bm S}$, defined as the number
of distinct domain elements in the sample.

It is clear that $\widehat{\bm S} \leq |\supp(p)|$ with probability 1. To show that $\widehat{\bm S}
\geq \min\{\eff_\epsilon(p), n\}$, suppose we draw an infinite sequence of independent random
samples $\bm S = \{ \bm{s}_1, \bm{s}_2, \dotsc \}$ from $p$. For each $i \in \bN$, define the random
variable $\bm{t}_i$ as the smallest index $\bm{t}_i \in [m]$ such that the multiset $\{ \bm{s}_1,
\dotsc, \bm{s}_{\bm{t}_i} \}$ is supported on $i$ unique elements. Note that $\bm{t}_1 =
1$.

\begin{claim}
For all $i \in \{2, \dotsc, k\}$, $\Ex{ \bm{t}_i - \bm{t}_{i-1} } \leq 1/\epsilon$.
\end{claim}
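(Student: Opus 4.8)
The plan is to view $\bm{t}_i - \bm{t}_{i-1}$ as the waiting time to observe a new domain element once $i-1$ distinct elements have already been seen, to show that at every step of this wait the probability of seeing a new element strictly exceeds $\epsilon$, and to conclude that the expected waiting time is at most $1/\epsilon$. The whole argument is a one-element-at-a-time coupon-collector estimate; the role of the hypothesis $i \leq k \leq \eff_\epsilon(p)$ is exactly to guarantee that the ``seen'' elements never cover more than a $(1-\epsilon)$-fraction of the mass.

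First I would isolate the deterministic fact. Fix $i \in \{2,\dotsc,k\}$, and for any time $t$ with $\bm{t}_{i-1} \leq t < \bm{t}_i$ let $\bm{V}_t \define \{\bm{s}_1,\dotsc,\bm{s}_t\}$ be the set of distinct elements observed so far; it has exactly $i-1$ elements. Let $q$ be the distribution obtained from $p$ by relocating all mass lying outside $\bm{V}_t$ onto a single element of $\bm{V}_t$; then $|\supp(q)| = i-1$ and $\dist_\TV(p,q) = p(\overline{\bm{V}_t})$ (an elementary computation: the two distributions differ only at the chosen element and on $\overline{\bm{V}_t}$, contributing $p(\overline{\bm{V}_t})$ each to $\tfrac12\sum_x|p(x)-q(x)|$). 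Since $i-1 < i \leq k \leq \eff_\epsilon(p)$, \cref{def:effective} forbids $\dist_\TV(p,q) \leq \epsilon$, so $p(\overline{\bm{V}_t}) > \epsilon$. In words: as long as fewer than $k$ distinct elements have appeared, a fresh sample is a previously unseen element with probability strictly more than $\epsilon$.

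Next I would convert this into the expectation bound using the strong Markov property at the stopping time $\bm{t}_{i-1}$. Conditioned on $\cF_{\bm{t}_{i-1}} \define \sigma(\bm{s}_1,\dotsc,\bm{s}_{\bm{t}_{i-1}})$, the samples $\bm{s}_{\bm{t}_{i-1}+1},\bm{s}_{\bm{t}_{i-1}+2},\dotsc$ are i.i.d.\ $\sim p$ and the $(i-1)$-element set $\bm{V}_{\bm{t}_{i-1}}$ is determined; moreover $\bm{V}_t = \bm{V}_{\bm{t}_{i-1}}$ throughout $\bm{t}_{i-1} \leq t < \bm{t}_i$, so the event $\{\bm{t}_i - \bm{t}_{i-1} > j\}$ is exactly the event that the next $j$ samples all land in $\bm{V}_{\bm{t}_{i-1}}$. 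Hence, for every $j \geq 1$, $\Pr{\bm{t}_i - \bm{t}_{i-1} > j \mid \cF_{\bm{t}_{i-1}}} = p(\bm{V}_{\bm{t}_{i-1}})^{j} = \bigl(1 - p(\overline{\bm{V}_{\bm{t}_{i-1}}})\bigr)^{j} < (1-\epsilon)^{j}$, and taking expectations gives $\Pr{\bm{t}_i - \bm{t}_{i-1} > j} < (1-\epsilon)^{j}$. Since $\bm{t}_i - \bm{t}_{i-1} \geq 1$ always, the tail-sum formula yields $\Ex{\bm{t}_i - \bm{t}_{i-1}} = \sum_{j \geq 0}\Pr{\bm{t}_i - \bm{t}_{i-1} > j} \leq 1 + \sum_{j \geq 1}(1-\epsilon)^{j} = 1/\epsilon$.

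The only genuinely delicate step is this last piece of probabilistic bookkeeping: applying the strong Markov property at the random time $\bm{t}_{i-1}$ and using that the set of observed elements is frozen on the interval $[\bm{t}_{i-1}, \bm{t}_i)$. (One can avoid any subtlety by instead fixing the realization of $\bm{t}_{i-1}$ and the value of $\bm{V}_{\bm{t}_{i-1}}$, arguing conditionally, and then averaging.) The remaining ingredients — the identity $\dist_\TV(p,q) = p(\overline{\bm{V}_t})$ for the reshaped distribution, and the geometric series — are routine. Finally, finiteness of $\bm{t}_i$ for $i \leq k$, implicit in the statement, follows either from $k \leq |\supp(p)|$ or a posteriori from $\Pr{\bm{t}_i - \bm{t}_{i-1} > j} \to 0$.
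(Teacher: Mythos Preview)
Your proof is correct and follows essentially the same approach as the paper's: both condition on the stopping time $\bm{t}_{i-1}$, use that any set of $i-1 < \eff_\epsilon(p)$ elements misses more than $\epsilon$ mass (the paper via \cref{obs:eff}, you via the explicit reshaped distribution $q$), and conclude that the waiting time is stochastically dominated by a geometric with success probability $\epsilon$. Your write-up is more explicit about the tail-sum computation and the strong Markov bookkeeping, but the argument is the same.
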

\begin{proof}[Proof of claim]
Write $\bm{S}_t \define \{ \bm{s}_1, \dotsc, \bm{s}_t \}$ as the prefix of $\bm{S}$ up to element
$t$.  Fix any $t_{i-1}$. Conditional on the event $\bm{t}_{i-1} = t_{i-1}$, the multiset
$\bm{S}_{t_{i-1}} = \{\bm{s}_1, \dotsc, \bm{s}_{t_{i-1}}\}$ is supported on $i-1 < k$ elements. Then
$p( \supp(\bm{S}_{t_{i-1} }) ) < 1-\epsilon$ since $p$ is $\epsilon$-far from having support size at
most $k-1$ by \cref{obs:eff}.
Therefore for every $t > t_{i-1}$ we have $\Pruc{}{ \bm{s}_t \not\in \bm{S}_{t_{i-1}}
}{\bm{t}_{i-1} = t_{i-1}} > \epsilon$, so $\Ex{ \bm{t}_i - \bm{t}_{i-1} } \leq 1/\epsilon$.
\end{proof}
From this claim we may deduce $\Ex{\bm{t}_k} \leq \bm{t}_1 + (k-1)/\epsilon \leq n/\epsilon$,
so by Markov's inequality
\[
  \Pr{\bm{t}_k > 10n/\epsilon} \leq 1/10 \,.
\]
Therefore if $m \geq 10n/\epsilon$ the tester will see at least $k$ unique elements in a sample of
size $m$, with probability at least $9/10$.
\end{proof}

\subsection{The test statistic}

We now turn to the case where $\epsilon$ is large. Specifically, we fix a small universal constant
$a \in (0,1)$ (\eg $a = 1/128$ suffices; we prioritize clarity of exposition over optimizing
constants) and assume:

\begin{assumption}
\label{assumption:epsilon}
For every $n \in \bN$, we assume $n^{-a} < \epsilon < 1/3$,
so that $\log(1/\epsilon) < a \log n$.
\end{assumption}

We will write $\bm{S} \subset \bN$ for the (random) multiset of samples received by the algorithm.

\begin{definition}[Sample Histogram]
For a fixed multiset $S \subset \bN$, the \emph{sample histogram} is the sequence $N_i$ where $N_i$
is the number of times $i \in \bN$ appears in $S$. If $\bm S$ is a random multiset then we write
$\bm{N}_i$.
\end{definition}

The goal will be to find the best function $f$ to plug in to the following definition of our test
statistic; we will show how to choose $f$ in \cref{section:choosing_f} and the remainder of the
paper, but for now we leave it undetermined.

\begin{definition}[Test statistic]
\label{def:test-statistic}
For a given function $f : \{0\} \cup \bN \to \bR$ (which is required to satisfy $f(0) = -1$),
we define a test statistic as follows. On random sample $\bm S$,
\[
    \bm{\widehat S} \define \sum_{i \in \bN} (1 + f(\bm{N_i}) ) \,.
\]
\end{definition}

Given an appropriate function $f$ and resulting test statistic, we define our support-size tester:

\begin{definition}[Support-Size Tester]
\label{def:tester}
Given parameters $n \in \bN$ and $\epsilon \in (0,1)$ our tester chooses $m$ independent samples
$\bm{S}$ from the distribution $p$, and outputs \textsf{Accept} if and only if
\[
  \bm{\widehat S} < (1 + \epsilon/2) n \,.
\]
\end{definition}

\begin{remark}
Our test statistic is a \emph{linear estimator} (see \eg \cite{VV11stoc,VV11focs,VV17,WY19}) because
it can be written as
\[
  \widehat{\bm S} = \sum_{j=1}^m \bm{F_j} \cdot (1+f(j)) \,,
\]
where $\bm{F_j}$ is the number of elements $i \in \bN$ which occur $j$ times in the sample $\bm{S}$,
\ie $\bm{N_i} = j$. The sequence $\bm{F_1}, \bm{F_2}, \dotsc$ is known as the \emph{fingerprint}.
\end{remark}

A standard trick to ease the analysis is to consider instead the \emph{Poissonized} version of the
algorithm.
\begin{definition}[Poissonized Support-Size Tester]
\label{def:poisson-tester}
Given parameters $n \in \bN$ and $\epsilon \in (0,1)$ our tester chooses $\bm{m} \sim \Poi(m)$ and
then takes $\bm{m}$ independent samples $\bm{S}$ from the distribution $p$, and outputs
\textsf{Accept} if and only if
\[
  \bm{\widehat S} < (1 + \epsilon/2) n \,.
\]
\end{definition}

The advantages of Poissonization are these (see \eg the survey \cite{Can20}):

\begin{fact}
\label{fact:poisson}
Let $\bm S$ be the sample of size $\bm m \sim \Poi(m)$ drawn by the Poissonized tester.
Then the random variables $\bm{N_i}$ are mutually independent and are distributed as
\[
  \bm{N_i} \sim \Poi(m p_i) \,.
\]
\end{fact}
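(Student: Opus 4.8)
The plan is to compute the joint probability mass function of the counts directly and observe that it factors. Fix an arbitrary \emph{finite} set $T \subset \bN$ of indices and target multiplicities $(n_i)_{i \in T}$ with each $n_i \in \{0\} \cup \bN$. It suffices to prove
\[
  \Pr{\bm{N_i} = n_i \text{ for all } i \in T} = \prod_{i \in T} e^{-mp_i}\,\frac{(mp_i)^{n_i}}{n_i!} \,,
\]
because, letting $T$ range over all finite subsets of $\bN$, this simultaneously identifies each marginal as $\bm{N_i} \sim \Poi(mp_i)$ and establishes mutual independence --- independence of an infinite family of random variables meaning, by definition, independence of every finite subfamily.

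To prove the displayed identity, first I would condition on the total sample size $\bm m = k$. Conditioned on $\bm m = k$, the $\bm m$ samples are i.i.d.\ draws from $p$, so the vector of counts in the categories $\{i : i \in T\}$ together with the lumped category ``outside $T$'' is multinomial with $k$ trials and category probabilities $(p_i)_{i \in T}$ and $q \define 1 - \sum_{i \in T} p_i$. Hence, writing $r \define \sum_{i \in T} n_i$, we have $\Pr{\bm{N_i} = n_i \text{ for all } i \in T \mid \bm m = k} = \binom{k}{(n_i)_{i \in T},\, k - r} \bigl(\prod_{i \in T} p_i^{n_i}\bigr) q^{k-r}$ when $k \geq r$, and $0$ otherwise.

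Next I would sum over $k$ against the Poisson weights $e^{-m} m^k/k!$. The factor $k!$ cancels; pulling $e^{-m}$, $1/\prod_{i\in T} n_i!$, and $\prod_{i \in T} (mp_i)^{n_i}$ (via $m^k = m^r m^{k-r}$ and $m^r = \prod_{i \in T} m^{n_i}$) out of the sum leaves $\sum_{k \geq r} (mq)^{k-r}/(k-r)! = e^{mq}$. Since $e^{-m} e^{mq} = e^{-m(1-q)} = e^{-m \sum_{i \in T} p_i} = \prod_{i \in T} e^{-mp_i}$ --- here using that $p$ is a probability distribution, so $\sum_{i \in \bN} p_i = 1$ converges and $q$ is well-defined --- we recover exactly the claimed product.

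The computation is entirely routine (it is the standard ``Poisson thinning''/``coloring'' argument), so there is no real obstacle; the only points meriting attention are the infinite-domain bookkeeping just noted and the degenerate case $q = 0$ (when $\supp(p) \subseteq T$), handled by the conventions $0^0 = 1$ and $\sum_{k \geq r}(m\cdot 0)^{k-r}/(k-r)! = 1 = e^{0}$. One could alternatively package the same calculation via probability generating functions, but the direct p.m.f.\ argument is the most self-contained.
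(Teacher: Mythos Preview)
Your proof is correct and is exactly the standard Poisson-thinning computation. The paper does not actually prove this statement: it is stated as a \emph{Fact} with a pointer to the survey \cite{Can20}, so your write-up supplies a proof where the paper gives none.
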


\begin{fact}
\label{fact:poissonization}
If there is a Poissonized support-size tester with sample complexity $m(n,\epsilon)$ and success
probability $\successprob$, then there is a standard support-size tester with sample complexity at
most $O(m(n,\epsilon))$ and success probability $0.99 \successprob$.
\end{fact}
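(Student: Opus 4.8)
The plan is the standard de-Poissonization argument. Let $T$ be the Poissonized support-size tester with sample complexity $m = m(n,\epsilon)$ and success probability $\successprob$, and set $M \define 2m$. I would build a standard tester $T'$ as follows: $T'$ draws $M$ i.i.d.\ samples $\bm s_1, \dots, \bm s_M$ from $p$ and, independently, a value $\bm m \sim \Poi(m)$; if $\bm m \le M$ it runs $T$ on the prefix $(\bm s_1, \dots, \bm s_{\bm m})$ and copies its output, and if $\bm m > M$ it outputs $\REJECT$. Either way $T'$ uses exactly $M = O(m(n,\epsilon))$ samples.

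For correctness, the key point is that, conditioned on $\bm m = k \le M$, the prefix $(\bm s_1, \dots, \bm s_k)$ is exactly $k$ i.i.d.\ samples from $p$ (the $\bm s_i$ are drawn independently of $\bm m$), so on the event $\{\bm m \le M\}$ the behaviour of $T'$ is distributionally identical to that of the Poissonized tester $T$. Consequently, for any output event $E$,
\[
  \Pr{T'\text{ runs }T\text{ and outputs }E,\ \bm m \le M} \;=\; \Pr{T\text{ outputs }E,\ \bm m \le M} \;\ge\; \Pr{T\text{ outputs }E} - \Pr{\Poi(m) > M} \,,
\]
where the probabilities on the right refer to the Poissonized process. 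Taking $E = \{\ACCEPT\}$ in the case $|\supp(p)| \le n$ and $E = \{\REJECT\}$ in the case that $p$ is $\epsilon$-far from support size $n$ (in the latter case the branch $\bm m > M$ only helps, since it too outputs $\REJECT$, so there is in fact no loss there), we conclude that $T'$ is correct with probability at least $\successprob - \Pr{\Poi(m) > 2m}$.

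The last ingredient is the standard Poisson upper-tail estimate $\Pr{\Poi(m) > 2m} \le (e/4)^{m} \le e^{-m/3}$, which is below $0.01\,\successprob$ as soon as $m$ exceeds a threshold depending only on $\successprob$ (for the default $\successprob = 3/4$, say $m \ge 15$); this yields the claimed success probability $0.99\,\successprob$. The only point that needs a little bookkeeping — and the step I would flag as the "obstacle," though it is not a conceptual one — is the regime of small $m$, where $\Poi(m)$ does not concentrate. There one simply uses $M = 2m + K$ for a universal constant $K$ in place of $M = 2m$ (still $O(m)$, since $m \ge 1$), chosen large enough that $\Pr{\Poi(m) > 2m + K} \le 0.01\,\successprob$ for the finitely many small values of $m$; equivalently, one may assume without loss of generality that $m$ is larger than a suitable constant, since otherwise the problem is already being solved with $O(1)$ samples. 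Everything else is a routine simulation computation.
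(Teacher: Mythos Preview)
Your argument is the standard de-Poissonization simulation and is correct; the coupling of the prefix with the Poissonized sample, the tail bound $\Pr{\Poi(m) > 2m} \le (e/4)^m$, and the handling of small $m$ via an additive constant are all fine. Note, however, that the paper does not actually supply a proof of this fact: it is stated as a known result with a pointer to the survey \cite{Can20}, so there is no paper proof to compare against. Your write-up is exactly the kind of proof that reference would contain.
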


\subsection{Choosing a function \texorpdfstring{$f$}{f}}
\label{section:choosing_f}

Now we see how to choose the function $f$ to complete \cref{def:test-statistic}. To motivate the
choice, compute the expected value of the statistic:

\begin{proposition}
For any given finitely supported $f \colon \{0\} \cup \bN \to \bR$ with $f(0) = -1$ and parameter
$m$, the (Poissonized) test statistic $\widehat{\bm S}$ satisfies
\begin{equation}
\label{eq:mean-P}
  \Ex{ \widehat{ \bm S } }
  = \sum_{i \in \bN} ( 1 + e^{-mp_i} P(p_i) )
\end{equation}
where $P(x)$ is the polynomial
\[
  P(x) = \sum_{j=0}^\infty f(j) \frac{m^j}{j!} x^j \,.
\]
\end{proposition}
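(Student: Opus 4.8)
The plan is to push the expectation inside the sum over domain elements, use the Poissonization fact to identify the law of each $\bm{N_i}$, and then recognize the resulting power series in $p_i$ as the polynomial $P$. First I would apply linearity of expectation to \cref{def:test-statistic}: since $f$ is finitely supported, only finitely many domain elements $i$ can contribute a nonzero term $f(\bm{N_i})$ for any fixed outcome, but more to the point the summand $1 + f(\bm{N_i})$ equals $1$ whenever $\bm{N_i} = 0$ (as $f(0) = -1$), and all but finitely many $p_i$ are zero when $\supp(p)$ is finite — in the general countable case one should note that $\sum_i \Pr{\bm{N_i} \geq 1} = \sum_i (1 - e^{-mp_i}) \leq m \sum_i p_i = m < \infty$, so the sum $\sum_i (1 + f(\bm{N_i}))$ converges absolutely in expectation and linearity applies. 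This gives $\Ex{\widehat{\bm S}} = \sum_{i \in \bN} \left(1 + \Ex{f(\bm{N_i})}\right)$.

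Next I would compute $\Ex{f(\bm{N_i})}$ for a single element. By \cref{fact:poisson}, $\bm{N_i} \sim \Poi(mp_i)$, so
\[
  \Ex{f(\bm{N_i})} = \sum_{j=0}^\infty f(j) \cdot \Pr{\bm{N_i} = j}
  = \sum_{j=0}^\infty f(j) \, e^{-mp_i} \frac{(mp_i)^j}{j!}
  = e^{-mp_i} \sum_{j=0}^\infty f(j) \frac{m^j}{j!} p_i^j \,.
\]
The interchange of $f(j)$ with the Poisson pmf and the factoring out of $e^{-mp_i}$ are immediate because $f$ is finitely supported, so this is a finite sum with no convergence subtlety. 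Recognizing the remaining sum as $P(p_i)$ by the definition of $P$ yields $\Ex{f(\bm{N_i})} = e^{-mp_i} P(p_i)$, and substituting back into the expression for $\Ex{\widehat{\bm S}}$ gives \eqref{eq:mean-P}.

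I do not expect any real obstacle here: the statement is essentially a bookkeeping identity, and the only point requiring a word of care is the legitimacy of the infinite sum over $i$ and the interchange with expectation, which is handled either by the finite-support assumption on $p$ or, in full generality, by the absolute-convergence bound $\sum_i \Pr{\bm{N_i} \geq 1} \leq m$ noted above together with the boundedness of $f$ on its (finite) support.
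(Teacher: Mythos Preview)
Your proposal is correct and follows essentially the same route as the paper: apply linearity of expectation, plug in the Poisson pmf from \cref{fact:poisson}, and recognize the resulting power series as $P(p_i)$. The only difference is that you add a brief justification for the interchange of sum and expectation, which the paper omits.
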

\begin{proof}
Using \cref{fact:poisson},
\begin{align*}
        \Ex{\bm{\widehat S}}
        &= \sum_{i \in \bN} \Ex{1 + f(\bm{N_i}) }
         = \sum_{i \in \bN} \left( 1 + \sum_{j \ge 0} \Pr{\bm{N}_i = j} f(j) \right) 
        = \sum_{i} \left( 1 + \sum_{j \ge 0} \frac{e^{-mp_i}(mp_i)^j}{j!} f(j) \right) \\
        &= \sum_{i} \left(
            1 + e^{-mp_i} \sum_{j \ge 0} \frac{m^j f(j)}{j!} p_i^j \right)
        = \sum_{i} \left( 1 + e^{-mp_i} P(p_i) \right). \qedhere
\end{align*}
\end{proof}
For convenience, we define
\begin{definition}
\label{def:Q}
For given polynomial $P(x)$, define
\[
  Q(x) \define 1 + e^{-mx} P(x) \,,
\]
\end{definition}
\noindent
so that $Q(p_i)$ quantifies the contribution of element $i \in \bN$
to the expected value of $\bm{\widehat S}$:
\begin{equation}
\label{eq:mean-Q}
  \Ex{\bm{\widehat S}} = \sum_{i \in \bN} Q(p_i) \,.
\end{equation}
If we could set $Q(x) = \ind{x > 0}$ then $\Ex{\bm{\widehat S}} = |\supp(p)|$.  But this
means $P(0) = -1$ while $P(x) = 0$ for all $x > 0$, which is not a polynomial.
Our goal is therefore to choose a polynomial $P$ which approximates this function as closely as
possible.

Specifically, if we choose any coefficients $a_1, \dotsc, a_d$, we may then define
\[
    f(j) \define \begin{cases}
      a_j \frac{j!}{m^j} &\text{ if } j \in [d] \\
      -1 &\text{ if } j = 0 \\
      0 &\text{ if } j > d 
    \end{cases}
\]
to obtain the degree-$d$ polynomial
\[
  P_d(x) \define \sum_{j=0}^\infty f(j) \frac{m^j}{j!} x^j = \left(\sum_{j=1}^d a_j x^j\right) - 1
  \,,
\]
which satisfies the necessary condition $P_d(0) = -1$. The other condition that we want $P_d(x)$ to
satisfy is that $P_d(x) \approx 0$ for $x > 0$. This cannot be achieved by a low-degree polynomial,
but we can instead ask for $P_d(x) \approx 0$ inside a chosen ``safe interval'' $[\ell, r]$. In other words,
we want a polynomial which satisfies $P_d(0) = -1$ and
\[
  \max_{x \in [\ell, r]} |P_d(x)| \leq \delta \,,
\]
for some small $\delta$. The lowest-degree polynomials which satisfy these conditions are known
as the (shifted and scaled) \emph{Chebyshev polynomials}, which we define in
\cref{section:chebyshev_polynomials}.

\begin{remark}
In the support-size \emph{estimation} problem, there is a lower bound $p_i \geq 1/n$ on the
nonzero densities, so \cite{WY19} choose a ``safe interval'' $[\ell, r]$ with $\ell = 1/n$,
so that there is no density $p_i$ to the left of the interval, $p_i \in (0,\ell)$.
In our problem, we do not have this guarantee, and therefore we must handle values $p_i$ that do not
fall in the ``safe'' interval $[\ell, r]$.
\end{remark}

\subsection{Chebyshev polynomials and the definition of \texorpdfstring{$P_d$}{Pd}}
\label{section:chebyshev_polynomials}

For a given degree $d$, the Chebyshev polynomial $T_d(x)$ is the polynomial which grows fastest on
$x > 1$ while satisfying $|T_d(x)| \leq 1$ in the interval $x \in [-1,1]$.

\begin{definition}[Chebyshev polynomials]
We write $T_d(x)$ for the degree-$d$ Chebyshev polynomial. These can be defined recursively as
\begin{align*}
  T_0(x) &\define 1 \\
  T_1(x) &\define x \\
  T_d(x) &\define 2x \cdot T_{d-1}(x) - T_{d-2}(x) \,.
\end{align*}
$T_d(x)$ may also be computed via the following closed-form formula:
    \label{fact:chebyshev-formula}
    For all $d \in \bN$ and $x \in \bR$, 
    \[
        T_d(x)
        = \frac{1}{2} \left( \left(x + \sqrt{x^2-1}\right)^d
            + \left(x - \sqrt{x^2-1}\right)^d \right) \,.
    \]
\end{definition}

We want the polynomials $P_d(x)$ to be bounded in $[\ell, r]$ and to 
grow fast on $x \in [0,\ell)$ so that $P_d(0) = -1$. So we define a map $\psi$ which translates
$[\ell, r]$ to $[-1,1]$,
\[
  \psi(x) \define -\frac{2x - r - \ell}{r-\ell}
\]
so $\psi(0) = \frac{r+\ell}{r-\ell}$. For convenience we write
\[
  \alpha \define \frac{\ell}{r}
\]
so that $\psi(0) = 1 + \frac{2\alpha}{1-\alpha}$. We now define

\begin{definition}[Shifted \& Scaled Chebyshev Polynomials]
For any $d$, define
\[
  P_d(x) \define - \delta T_d(\psi(x)) \,,
\]
where $\delta \define \frac{1}{T_d\left(1 + \frac{2\alpha}{1-\alpha}\right)}$.
\end{definition}
This way, as required,
\[
  P_d(0) = - \delta T_d(\psi(0)) = - 1 \,, \;\text{ and }\; |P_d(x)| \leq \delta \text{ for } x \in [\ell,
r] \,.
\]
We illustrate the Chebyshev polynomial $T_d(x)$ and the shifted and scaled result $Q(p_i)$
in \cref{fig:cheb-example}. The resulting function values $f(\bm{N_i})$ which define our test
statistic are given in \cref{res:f-values}.

\begin{figure}[t]
    \centering
    \begin{subfigure}{0.46\textwidth}
        \centering
        \includegraphics[width=\textwidth]{"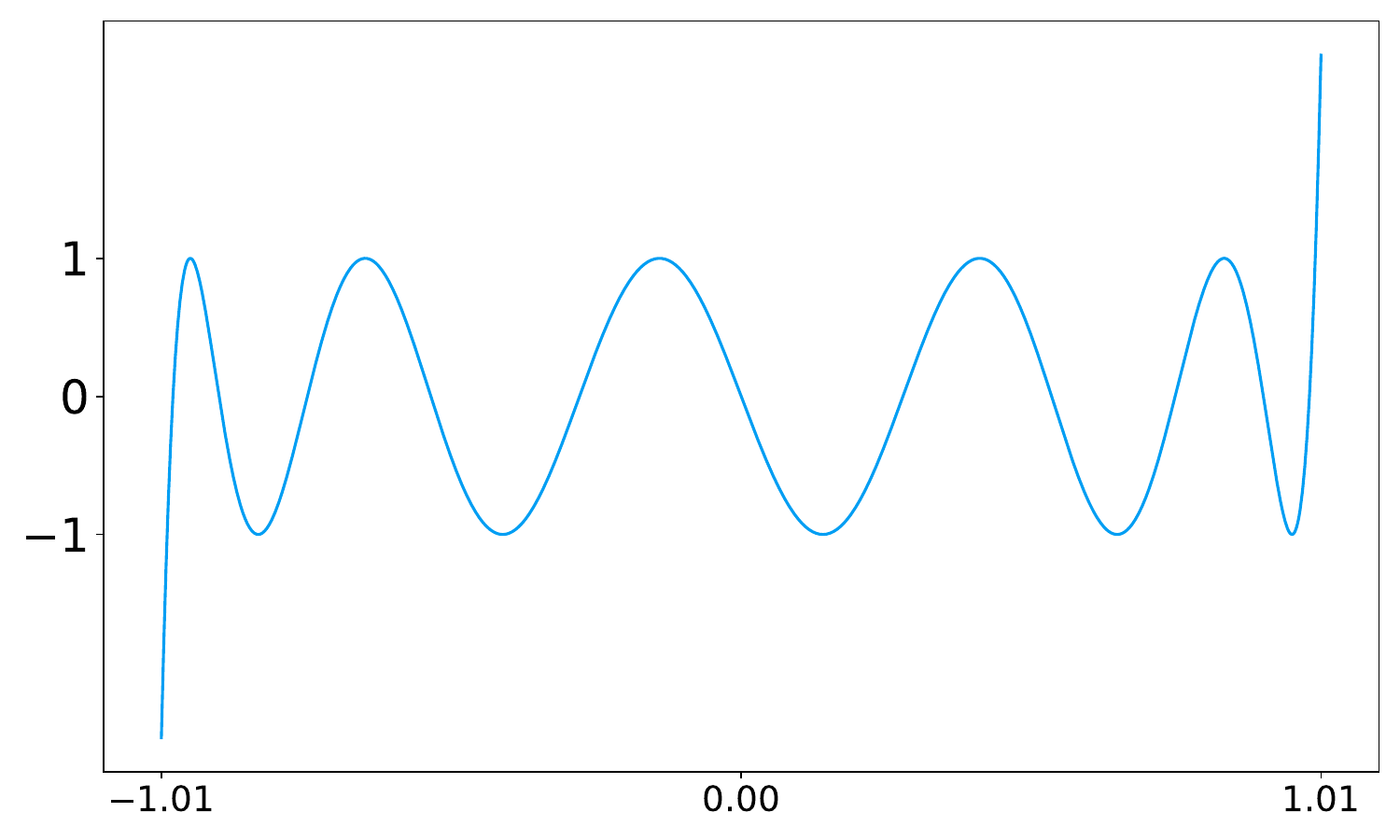"}
        \caption{$T_{11}(x), x \in [-1.01, 1.01]$.}
        \label{fig:cheb-example-a}
    \end{subfigure}
    \hspace{0.05\textwidth} 
    \begin{subfigure}{0.46\textwidth}
        \centering
        \includegraphics[width=\textwidth]{"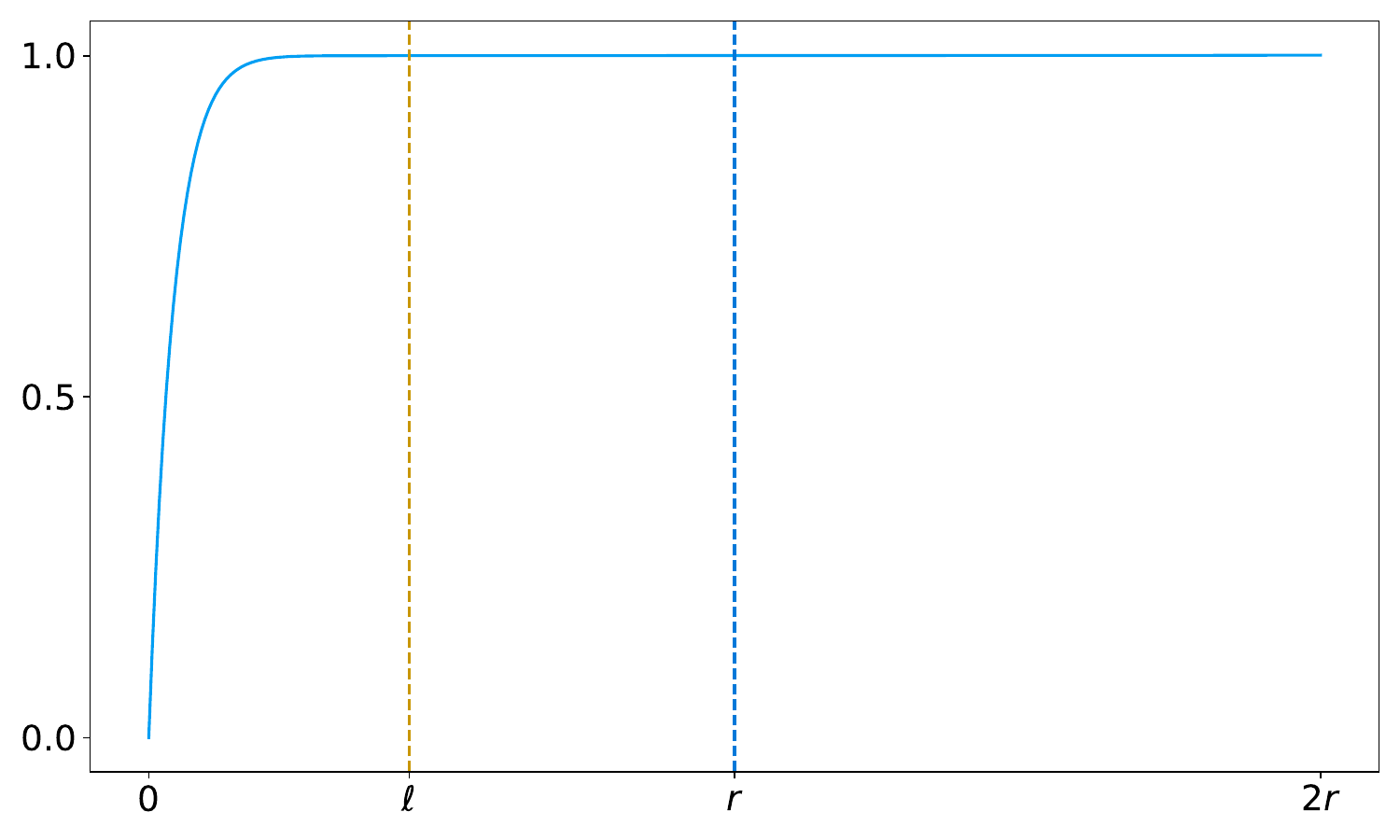"}
        \caption{$Q(p_i) = 1 + e^{-mp_i} P_{11}(p_i)$.}
        \label{fig:cheb-example-b}
    \end{subfigure}
    \caption{The polynomial $T_{11}(x)$ and the resulting $Q(p_i) = 1 + e^{-mp_i} P_d(p_i)$
    for $d=11$ and certain choices of $\ell, r, n, m$. The `safe' interval $[\ell, r]$ is between
the two vertical lines in \cref{fig:cheb-example-b}. See that $Q(p_i)$ is an approximation of the
`idealized' function $Q(p_i) = \ind{p_i > 0}$.}
\label{fig:cheb-example}
\end{figure}

\newpage

\section{Choosing Parameters}
\label{section:parameters}

Our goal is now:

\begin{goal}
\label{goal:main}
Choose degree $d$ and ``safe interval'' $[\ell, r]$ such that the (Poissonized) support-size tester
in \cref{def:poisson-tester}, instantiated with the (shifted and scaled) Chebyshev polynomial $P_d$,
works with the smallest sample size parameter $m$, even though densities $p_i$ may not belong to
$[\ell, r]$.
\end{goal}

To do this, we will set up a series of constraints that these parameters must satisfy in order for
the tester to succeed, and then minimize $m$ with respect to these parameters.

\subsection{Constraint I: Choose \texorpdfstring{$d$}{d} such that \texorpdfstring{$\delta \leq \epsilon$}{delta at most epsilon}}

We will want that $\delta \leq \epsilon$ so that for every density $p_i$ in the ``safe interval''
$[\ell, r]$ we have $Q(p_i) \approx 1$, specifically $|Q(p_i) - 1| \leq \epsilon$. We will write this
constraint in terms of $\ell$ and $r$:

\begin{constraint}
\label{constraint:d_log}
For constant $C_d \define 4\ln(2)$, we will require for any choice of $\ell, r, \epsilon$
that $r \ge 3\ell$ and that the degree $d$ satisfies
\[
  d \geq C_d \sqrt{\frac{r-\ell}{2\ell}} \log \left(\frac{20}{\epsilon} \right) \,.
\]
\end{constraint}

\begin{proposition}
\label{res:d_log}
Assume \cref{constraint:d_log}. Then $\delta \leq \epsilon/20$ and for all $x \in [\ell, r]$,
\[
    | Q(x) - 1 | \leq \delta \le \frac{\epsilon}{20} \,.
\]
\end{proposition}

To prove this, we require a lower bound on the growth rate of the Chebyshev polynomials. This is well
known but we provide a proof for the sake of completeness:

\begin{proposition}
\label{res:cheb-tail-lb}
There is a universal constant $c \define \frac{1}{2\ln(2)}$ such that, for all $d \in \bN$ and
$\gamma \in [0,1]$,
\[
  T_d(1 + \gamma) \geq 2^{c \cdot d \sqrt \gamma - 1} \,.
\]
\end{proposition}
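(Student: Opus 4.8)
The plan is to use the closed-form expression for $T_d$ from the definition box, namely $T_d(x) = \tfrac12\big((x+\sqrt{x^2-1})^d + (x-\sqrt{x^2-1})^d\big)$, which for $x \geq 1$ is a sum of two positive terms. Since we want a lower bound, I would simply discard the second (smaller, positive) term and write $T_d(1+\gamma) \geq \tfrac12\big(1+\gamma+\sqrt{(1+\gamma)^2-1}\big)^d = \tfrac12\big(1+\gamma+\sqrt{2\gamma+\gamma^2}\big)^d$. So it suffices to show $1+\gamma+\sqrt{2\gamma+\gamma^2} \geq 2^{c\sqrt\gamma}$ for $\gamma \in [0,1]$, since then $T_d(1+\gamma) \geq \tfrac12 \cdot 2^{cd\sqrt\gamma} = 2^{cd\sqrt\gamma - 1}$, matching the claim with $c = \tfrac{1}{2\ln 2}$.

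For the remaining one-variable inequality, the natural move is to lower-bound the base: $1+\gamma+\sqrt{2\gamma+\gamma^2} \geq 1 + \sqrt{2\gamma} \geq 1 + \sqrt{\gamma}$ (using $\gamma \geq 0$, and $\sqrt 2 \geq 1$), and then take logarithms — we need $\ln(1+\sqrt\gamma) \geq c\sqrt\gamma\ln 2 = \tfrac{\sqrt\gamma}{2}$. Writing $t = \sqrt\gamma \in [0,1]$, this is the elementary fact $\ln(1+t) \geq t/2$ for $t \in [0,1]$, which holds because $\ln(1+t)$ is concave, equals $0$ at $t=0$, and equals $\ln 2 \geq 1/2$ at $t=1$, so it lies above the chord $t \mapsto t\ln 2 \geq t/2$ on $[0,1]$. (Alternatively: $g(t) = \ln(1+t) - t/2$ has $g(0)=0$ and $g'(t) = \tfrac{1}{1+t} - \tfrac12 = \tfrac{1-t}{2(1+t)} \geq 0$ on $[0,1]$.) Chaining these bounds gives $1+\gamma+\sqrt{2\gamma+\gamma^2} \geq 1+\sqrt\gamma = e^{\ln(1+\sqrt\gamma)} \geq e^{\sqrt\gamma/2} = 2^{c\sqrt\gamma}$, as needed.

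There is essentially no obstacle here — the only thing to be slightly careful about is the choice of how aggressively to throw away terms, since being too lossy (e.g. dropping all the way to $1 + \sqrt{\gamma}$ too early, or losing the factor $\sqrt 2$) could in principle leave a constant that doesn't quite reach $c = \tfrac{1}{2\ln 2}$; but as the computation above shows, $\ln(1+t) \geq t/2$ is exactly tight enough at $t=1$ and there is slack for $t<1$, so the bound goes through cleanly with the stated constant. The whole argument is: closed form, drop the positive minor term, lower-bound the base by $1+\sqrt\gamma$, apply $\ln(1+t)\geq t/2$, exponentiate.
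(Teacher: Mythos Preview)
Your proof is correct and follows essentially the same approach as the paper: use the closed-form formula for $T_d$, drop the smaller positive term, lower-bound the base $1+\gamma+\sqrt{2\gamma+\gamma^2}$ by something of the form $1+\Theta(\sqrt{\gamma})$, and apply an elementary logarithm inequality. The only cosmetic difference is that the paper keeps the base as $1+\sqrt{2\gamma}$ and invokes $\ln(1+x) \geq \tfrac{x}{1+x}$ (bounding $1+\sqrt{2\gamma} \leq 2\sqrt{2}$ since $\gamma \leq 1$), whereas you drop further to $1+\sqrt{\gamma}$ and use $\ln(1+t) \geq t/2$ on $[0,1]$; both routes land exactly on $e^{d\sqrt{\gamma}/2}$.
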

\begin{proof}
    Using \cref{fact:chebyshev-formula}, we have
    \begin{align*}
        T_d(1+\gamma)
        &= \frac{1}{2} \left( \left((1+\gamma) + \sqrt{(1+\gamma)^2-1}\right)^d
            + \left((1+\gamma) - \sqrt{(1+\gamma)^2-1}\right)^d \right) \\
        &\ge \frac{1}{2} \left( 1 + \sqrt{2\gamma} \right)^d
        = \frac{1}{2} e^{d \ln(1 + \sqrt{2\gamma})} \,.
    \end{align*}
    Using the inequality $\ln(1+x) \ge \frac{x}{1+x}$, which is valid for $x > -1$, together with
    the assumption that $\gamma \le 1$, we obtain
    \[
        T_d(1+\gamma)
        \ge \frac{1}{2} e^{d \cdot \frac{\sqrt{2\gamma}}{1 + \sqrt{2\gamma}}}
        \ge \frac{1}{2} e^{d \cdot \frac{\sqrt{2\gamma}}{2\sqrt{2}}}
        = \frac{1}{2} e^{\frac{1}{2} d \sqrt{\gamma}} \,. \qedhere
    \]
\end{proof}

\begin{proof}[Proof of Proposition~\ref{res:d_log}]
Recall $\alpha \define \ell/r$ and $\psi(0) = 1 + \tfrac{2\alpha}{1-\alpha}$.
It suffices to show
\[
  T_d\left(1 + \frac{2\alpha}{1-\alpha}\right) \geq \frac{20}{\epsilon}
\]
so that, for $x \in [\ell, r]$ (which satisfies $\psi(x) \in [-1,1]$ and therefore $T_d(\psi(x)) \in
[-1,1]$),
\[
  P_d(x) = - \delta T_d(\psi(x)) = - \frac{T_d(\psi(x))}{T_d(\psi(0))}
\]
is within the interval $[-\epsilon/20, \epsilon/20]$.
By \cref{constraint:d_log}, $\tfrac{2\alpha}{1-\alpha} \le 1$ since $\alpha \le 1/3$, so
we may apply \cref{res:cheb-tail-lb} to get
\[
    T_d\left(1 + \frac{2\alpha}{1-\alpha}\right)
    \geq 2^{c \cdot d \sqrt{\frac{2\alpha}{1-\alpha}} - 1} \,,
\]
where $c = \frac{1}{2\ln(2)}$ is the constant from \cref{res:cheb-tail-lb}. Therefore when
$d
\geq \frac{2}{c} \sqrt{\frac{r-\ell}{2\ell}} \log(20/\epsilon)
= \frac{2}{c} \sqrt{\frac{1-\alpha}{2\alpha}} \log(20/\epsilon)
\geq \frac{1}{c} \sqrt{\frac{1-\alpha}{2\alpha}} (1 + \log(20/\epsilon))$ we have the result
with $C_d = \frac{2}{c} = 4\ln(2)$.
\end{proof}

\subsection{Constraint II: Choose \texorpdfstring{$m$}{m} such that \texorpdfstring{$Q(p_i) \approx 1$ when $p_i > r$}{Q(pi) approximately 1 when p i > r}}

Next we need to choose large enough $m$ such that the term $e^{-mp_i}$ cancels out the growth of
$P_d(p_i)$ to the right of the ``safe interval''\!, leading to the desired $Q(p_i) = 1 + e^{-mp_i}
P_d(p_i) \approx 1$ in this case. (\cref{fig:constraint_2} shows $1 + P_d(p_i)$ without the
$e^{-mp_i}$ term.)

\begin{figure}
\centering
\includegraphics[width=0.48\textwidth]{"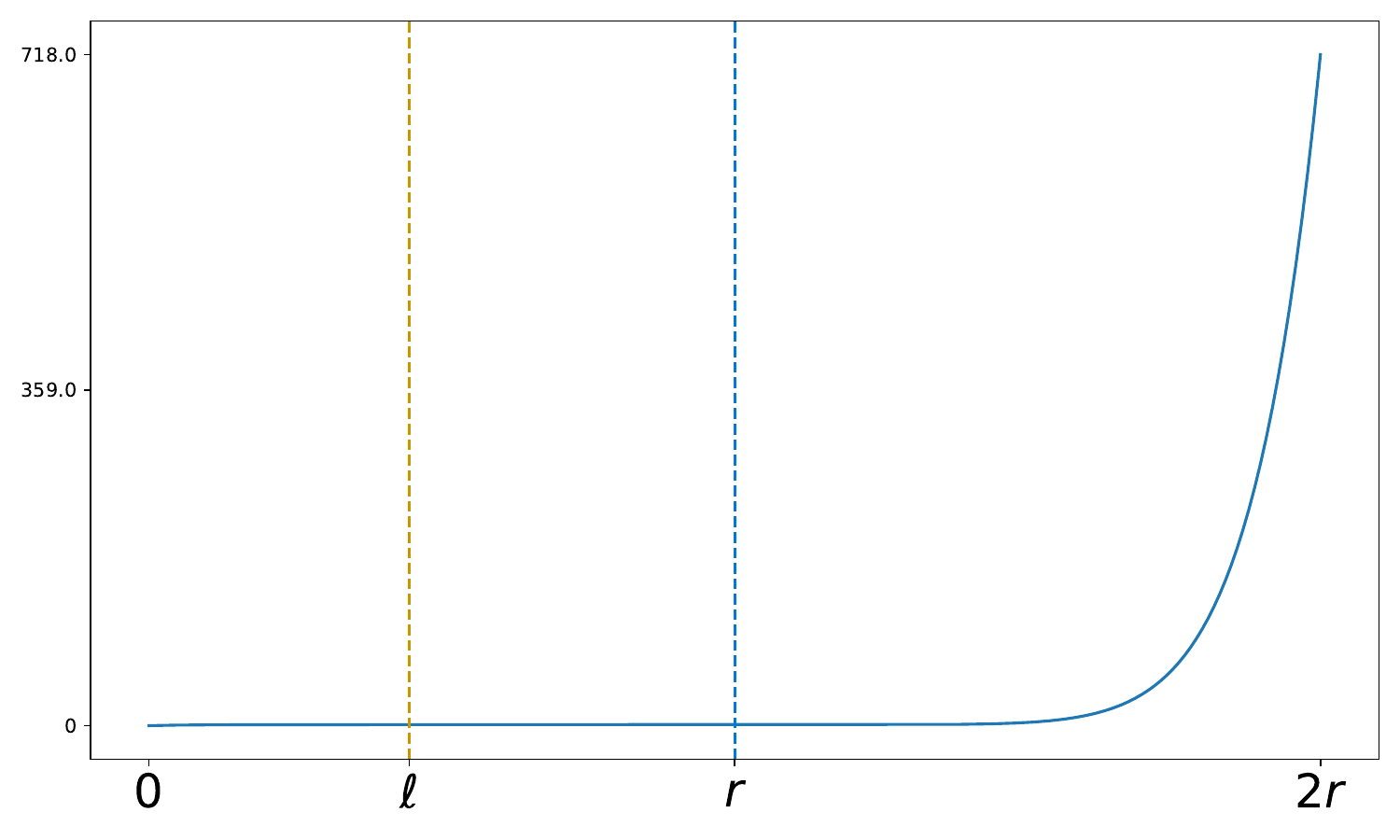"}
\caption{Untamed right tail.}
\label{fig:constraint_2}
\end{figure}%
We place the following constraint:

\begin{constraint}
\label{constraint:right_tail}
We require $m \ge 5.5 \cdot d/(r-\ell)$.
\end{constraint}

\begin{proposition}
\label{res:right_tail}
Assume \cref{constraint:right_tail}. Then 
\[
  \forall x \in (r, 1] \;\colon\; |1 - Q(x)| \leq \delta \,.
\]
\end{proposition}
\begin{proof}
Write $\gamma = 2\frac{x-r}{r-\ell}$ so that
\[
  - \psi(x) =  \frac{2x-r-\ell}{r-\ell}
            = \frac{r-\ell - 2r + 2x}{r-\ell}
            = 1 + \frac{2x-2r}{r-\ell}
            = 1 + \gamma \,.
\]
Note that $|T_d(z)|=|T_d(-z)|$ for all $z$.  We need to bound $|1-Q(x)| = e^{-mx} |P_d(x)| = \delta e^{-mx}
| T_d(\psi(x)) |$ for $x > r$, and $|T_d(\psi(x))| = T_d(-\psi(x)) = T_d(1+\gamma)$.  So we require
$e^{mx} \geq T_d(1 + \gamma)$. Bound $T_d(1+\gamma)$ by
\begin{align*}
T_d(1+\gamma)
&= \frac{1}{2}\left((1 + \gamma + \sqrt{2\gamma + \gamma^2})^d + (1+\gamma - \sqrt{2\gamma +
\gamma^2})^d\right) \\
&\leq (1+\gamma+\sqrt{2\gamma+\gamma^2})^d
\leq e^{d(\gamma+\sqrt{2\gamma+\gamma^2})} \,.
\end{align*}
If $\gamma < 1$ then $\gamma + \sqrt{2\gamma+\gamma^2} < 1 + \sqrt{3}$, and therefore since
$\frac{x}{r-\ell} > \frac{x}{r} > 1$ we have
\[
  d(\gamma + \sqrt{2\gamma + \gamma^2}) \leq d (1+\sqrt{3}) \leq d(1+\sqrt{3}) \frac{x}{r-\ell}
  \leq mx \,.
\]
If $\gamma \geq 1$ then $\gamma + \sqrt{2\gamma+\gamma^2} \leq (1+\sqrt{3})\gamma$, so
\[
  d(\gamma + \sqrt{2\gamma + \gamma^2}) \leq d (1+\sqrt{3}) \cdot 2\frac{x-r}{r-\ell}
    \leq d (1+\sqrt{3}) \cdot 2\frac{x}{r-\ell} \leq mx \,.
\]
We then conclude
\[
  T_d(1+\gamma) \leq e^{d(\gamma + \sqrt{2\gamma + \gamma^2})} \leq e^{mx}. \qedhere
\]
\end{proof}

\subsection{Constraint III: Variance}

We will need the test statistic $\widehat{\bm S}$ to satisfy two conditions.
First, when $|\supp(p)| \leq n$ we need
\[
  \Pr{ \widehat{\bm S} > (1+\epsilon/2)n } \leq 1/4 \,.
\]
Second, when $p$ is $\epsilon$-far from having support size $n$, we need
\[
  \Pr{ \widehat{\bm S} < (1+\epsilon/2)n } \leq 1/4 \,.
\]
We will bound $\Ex{\widehat{\bm S}}$ in each of these cases in \cref{section:means}. But first we
will bound the variance.  We impose the following constraint:

\begin{constraint}
\label{constraint:variance}
We require $m \leq \tfrac{1}{4^4} \epsilon^2 n^2$,
and $d^6 9^d \left(\frac{r+\ell}{r-\ell}\right)^{2d-2} \leq \tfrac{1}{4} m(r-\ell)^2 n^2$.
\end{constraint}

\begin{proposition}
\label{res:variance}
Assume \cref{constraint:d_log,constraint:right_tail,constraint:variance}. Then
$\Var{\widehat{\bm S}} \leq \epsilon^2 \frac{n^2}{4^3}$,
and therefore (by Chebyshev's inequality),
\[
  \Pr{ |\widehat{\bm S} - \Ex{\widehat{\bm S}}| > \frac{\epsilon}{4} n }
  \leq \frac{1}{4} \,.
\]
\end{proposition}

To prove \cref{res:variance}, we will require bounds on the values of $f(N_i)$ in the definition of
\[
  \widehat{\bm S} = \sum_{i \in \bN} ( 1 + f(\bm{N_i}) ) \,.
\]
For this we do some tedious calculations using the coefficients of the polynomial $P_d(x)$. We put
these in \cref{section:coefficients}. The result is:
\begin{restatable}{proposition}{resmaxu}
\label{res:max_u}\RestateRemark
For any $d$ and any $k \in [d]$,
\[
  |f(k)| \leq \delta \cdot d^2 \cdot 3^d \cdot \left( \frac{2d}{m (r-\ell)} \right)^k
         \left(\frac{r+\ell}{r-\ell}\right)^{d-k} \,.
\]
\end{restatable}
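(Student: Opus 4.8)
The plan is to write $P_d$ explicitly in the monomial basis by composing the power series of $T_d$ with the affine change of variables $\psi$, read off the coefficients $a_k$, and bound them term by term. Recall from the construction of $f$ that for $k\in[d]$ we have $f(k)=a_k\cdot k!/m^k$, where $a_k=[x^k]P_d(x)$ and $P_d(x)=-\delta\,T_d(\psi(x))$ with $\psi(x)=A-Bx$, $A\define\frac{r+\ell}{r-\ell}$, $B\define\frac{2}{r-\ell}$. Writing $T_d(y)=\sum_{j=0}^d t_{d,j}y^j$ for the monomial coefficients of the Chebyshev polynomial and expanding $(A-Bx)^j$ by the binomial theorem, the coefficient of $x^k$ is
\[
  a_k = -\delta\,(-B)^k \sum_{j=k}^{d} t_{d,j}\binom{j}{k} A^{\,j-k}\,.
\]
Since $r>\ell>0$ we have $A\ge 1$, so $A^{\,j-k}\le A^{\,d-k}$ for every $k\le j\le d$, and the triangle inequality gives
\[
  |f(k)| = |a_k|\,\frac{k!}{m^k}
  \;\le\; \delta\,B^k\,\frac{k!}{m^k}\,A^{\,d-k} \sum_{j=k}^d |t_{d,j}|\binom{j}{k}\,.
\]

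It remains to control the two ``combinatorial'' factors. For the binomials, $k!\binom{j}{k}=\frac{j!}{(j-k)!}=j(j-1)\cdots(j-k+1)\le j^k\le d^k$, so $\sum_{j=k}^d|t_{d,j}|\,k!\binom{j}{k}\le d^k\sum_{j=0}^d|t_{d,j}|$. For the absolute sum of Chebyshev coefficients I would use the parity/sign structure of $T_d$: from the recursion $T_d=2xT_{d-1}-T_{d-2}$ one gets by induction that $t_{d,j}=0$ unless $j\equiv d\pmod 2$, and that the nonzero coefficients alternate in sign (the top one being positive). Substituting the imaginary unit for $x$ therefore makes every coefficient positive up to a global phase $i^d$, i.e.\ $\sum_j|t_{d,j}|=(-i)^d\,T_d(i)$; evaluating the closed form of \cref{fact:chebyshev-formula} at $x=i$ (where $\sqrt{x^2-1}=i\sqrt2$) yields
\[
  \sum_{j=0}^d |t_{d,j}| = \tfrac12\big((1+\sqrt2)^d+(1-\sqrt2)^d\big) \le (1+\sqrt2)^d \le 3^d\,.
\]
Plugging in $B^k=2^k/(r-\ell)^k$ and assembling the pieces gives
\[
  |f(k)| \;\le\; \delta\cdot 3^d\cdot\left(\frac{2d}{m(r-\ell)}\right)^{\!k}\left(\frac{r+\ell}{r-\ell}\right)^{\!d-k}\,,
\]
which is actually slightly stronger than the claimed bound; the extra factor $d^2$ is slack we can afford (and which one could spend by bounding $\sum_j|t_{d,j}|$ or the binomials more crudely, e.g.\ $\sum_j|t_{d,j}|\le (d+1)\max_j|t_{d,j}|$).

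I do not expect a genuine obstacle: the argument is essentially bookkeeping. The one step that needs a small idea is the evaluation $\sum_j|t_{d,j}|=(-i)^d T_d(i)$, which converts the otherwise awkward ``sum of absolute values of Chebyshev coefficients'' into a clean closed form using only the sign and parity pattern of $T_d$; an alternative is to sum the explicit formula for the Chebyshev coefficients against a Fibonacci-type generating function, reaching the same $(1+\sqrt2)^d$ growth. The place to be careful is purely notational: tracking which powers of $A$, $B$, and $m$ are tied to the exponent $k$ versus $d-k$, so that the final expression lines up with the stated form.
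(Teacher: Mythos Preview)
Your argument is correct and follows the same overall skeleton as the paper---expand $P_d(x)=-\delta\,T_d(\psi(x))$ via the binomial theorem, read off $a_k$, and bound $|f(k)|=|a_k|\,k!/m^k$ term by term---but the way you handle the Chebyshev coefficients is genuinely different and sharper. The paper bounds each coefficient individually by $|b_j|\le d\cdot 3^d$ (a two-line induction on the recurrence $T_d=2xT_{d-1}-T_{d-2}$; see \cref{res:cheb_coefficient_ub}), and then bounds $\sum_{j=k}^d \tfrac{j!}{(j-k)!}A^{j-k}$ by $(d-k+1)\,\tfrac{d!}{(d-k)!}\,A^{d-k}$; the two crude steps are exactly where the extra $d^2$ comes from. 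Your route instead pulls out $A^{d-k}$ first, uses $k!\binom{j}{k}\le d^k$, and evaluates $\sum_j|t_{d,j}|=(-i)^d T_d(i)=\tfrac12\big((1+\sqrt2)^d+(1-\sqrt2)^d\big)\le 3^d$ via the parity/sign pattern of $T_d$. This is a nicer idea and buys you the bound without the $d^2$ slack; the paper's approach buys simplicity (no complex evaluation, no appeal to the sign structure) at the cost of that slack, which is harmless downstream since the variance constraint absorbs polynomial-in-$d$ factors anyway.
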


We may now estimate the variance.

\begin{proof}[Proof of Proposition~\ref{res:variance}]
Recall that each $\bm{N}_i$ is an independent Poisson random variable $\bm{N}_i \sim \Poi(mp_i)$.
Then
\begin{align*}
\Var{\bm{\widehat S}}
&= \sum_{i \in \bN} \Var{1+f(\bm{N_i})}
= \sum_{i \in \bN} \Var{(1+f(\bm{N_i})) \ind{\bm{N}_i > 0}} \\
&\leq \sum_{i \in \bN} \Ex{(1+ f(\bm{N_i}) )^2 \ind{\bm{N}_i > 0}} 
\leq \max_{j \in [d]} (1+ f(j) )^2 \cdot  \sum_{i \in \bN} \Ex{\ind{\bm{N}_i > 0}} \\
&= \max_{j \in [d]} (1+f(j) )^2 \cdot \sum_{i \in \bN} (1 - e^{-mp_i}) 
\leq \max_{j \in [d]} (1+ f(j) )^2 \cdot m \sum_{i \in \bN} p_i
= m \cdot \max_{j \in [d]} (1+ f(j) )^2 \,.
\end{align*}
From \cref{constraint:right_tail}, which implies $m(r-\ell) > 2d$, the upper bound in
\cref{res:max_u} is maximized at $k=1$, so we have
\[
\forall j \in [d] \;:\; |f(j)| \leq \delta \cdot d^2 \cdot 3^d \cdot \frac{2d}{m(r-\ell)} \cdot
\left(\frac{r+\ell}{r-\ell}\right)^{d-1} \,.
\]
If this is at most 1, then the first part of \cref{constraint:variance} gives the desired bound.
\[
  \Var{\widehat{\bm S}} \leq 4m \leq \frac{\epsilon^2 n^2}{4^3}
\]
Otherwise, $(1+|f(j)|)^2 \leq 4f(j)^2$, so (using \cref{constraint:d_log} and \cref{res:d_log} to
ensure $\delta \leq \epsilon/20 \le \epsilon/4^2$), we get from \cref{constraint:variance} that
\begin{align*}
  \Var{\widehat{\bm S}}
  &\leq 4^2 m \delta^2 \cdot d^6 \cdot 9^d \cdot \frac{1}{m^2(r-\ell)^2} \left( \frac{r+\ell}{r-\ell}
\right)^{2d-2}  \\
  &\leq \frac{1}{4^2} \epsilon^2 \cdot d^6 \cdot 9^d \cdot \frac{1}{m(r-\ell)^2} \left( \frac{r+\ell}{r-\ell}
\right)^{2d-2} 
  \leq \epsilon^2 \frac{n^2}{4^3} \,. \qedhere
\end{align*}
\end{proof}

\subsection{Constraint IV: \texorpdfstring{$\mathbb{E}[\widehat{S}] \geq (1+\epsilon)n$ when $p$ is $\epsilon$-far}{expectation of S hat large when p is epsilon-far}}
\label{section:means}

The above \cref{constraint:d_log,constraint:right_tail} ensure that $Q(x) \in 1 \pm \delta$ when $x
\in [\ell, 1]$. The properties of the Chebyshev polynomial also ensure that $Q(x) \leq 1+\delta$ for
$x < \ell$ (see the behaviour of $Q(x)$ for $x \leq \ell$ in \cref{fig:Q_dangerzone}). This means
$\Ex{\widehat{\bm S}} = \sum_{i : p_i > 0} Q(p_i)$ is essentially an underestimate of $|\supp(p)|$, which is
enough to guarantee that the tester will accept $p$ when $|\supp(p)| \leq n$ with high
probability:
\begin{lemma}
  \label{res:completeness}
  Assume \cref{constraint:d_log,constraint:right_tail}. Then
  \[
      \Ex{\bm{\widehat S}} \le (1 + \delta) |\supp(p)|
      < (1 + \epsilon/4) |\supp(p)| \,.
  \]
\end{lemma}
\begin{proof}
  First note that $\psi(x) > 1$ for all $x \in (0, \ell)$, so that $P_d(x) = -\delta
  T_d(\psi(x)) < 0$. Hence $Q(x) = 1 + e^{-mx} P_d(x) < 1$. Combining with
  \cref{res:d_log,res:right_tail} we conclude that, for all $x \in (0, 1]$, $Q(x) \le 1 + \delta$.
  We also have $Q(0) = 0$. Recalling \eqref{eq:mean-Q}, we obtain
  \[
      \Ex{\bm{\widehat S}}
      = \sum_{i \in \bN} Q(p_i)
      \le \sum_{i \in \bN : p_i > 0} (1 + \delta)
      = (1 + \delta) |\supp(p)|
      < (1 + \epsilon/4) |\supp(p)| \,,
  \]
  the last inequality since $\delta < \epsilon/4$ by \cref{res:d_log}.
\end{proof}

\begin{remark}
\label{remark:recover-wy}
If we were guaranteed that all nonzero densities satisfied $p_i \geq \ell$, then
\cref{res:completeness} would also show $\Ex{\widehat{\bm S}} \geq (1-\delta)|\supp(p)|$. In
particular, for the support-size \emph{estimation} task, where the nonzero densities satisfy $p_i >
1/n$, we could simply set $\ell = 1/n$, skip the remainder of this section, and recover the optimal
$O\left(\frac{n}{\log n} \log^2(1/\epsilon)\right)$ bound of \cite{WY19}. 
\end{remark}
Now we need to ensure that $\Ex{\widehat{\bm S}}$ will be large when $p$ is $\epsilon$-far from
having support size $n$. The difficulty is that the densities $p_i$ may be arbitrarily small and lie
outside the ``safe interval'' $[\ell,r]$ where the polynomial approximation guarantees $Q(p_i)
\approx 1$. So we will require properties of the Chebyshev polynomial approximation outside the
interval $[\ell, r]$. We impose the following constraint:

\begin{constraint}
\label{constraint:ell}
We require $\ell \le C_\ell \frac{\epsilon}{n}$, where $C_\ell$ is any constant $C_\ell \leq
1/20$.
\end{constraint}
The constraint will give us the desired guarantee:

\begin{lemma}
    \label{res:soundness}
    Assume \cref{constraint:d_log,constraint:right_tail,constraint:ell}. Then for every $k \leq n$,
    if $p$ is
    $\epsilon$-far from being supported on $k$ elements,
    \[
        \Ex{\bm{\widehat S}} > (1 + 3\epsilon/4) k \,.
    \]
\end{lemma}

\begin{remark}
\label{remark:better-constraint}
\cref{constraint:ell} and \cref{res:soundness} lead to an upper bound of $O\left(\frac{n}{\epsilon
\log n} \log^2(1/\epsilon)\right)$ instead of $O\left(\frac{n}{\epsilon \log n}
\log(1/\epsilon)\right)$.  We put them here to ease the exposition, since the argument is simpler.
The argument for the better bound (using a weaker constraint $\ell \leq C_\ell \tfrac{\epsilon}{n}
\log(1/\epsilon)$) is in \cref{section:improved-bound}.
\end{remark}

\begin{figure} 
\centering
\vspace{-2em}
\includegraphics[width=0.48\textwidth]{"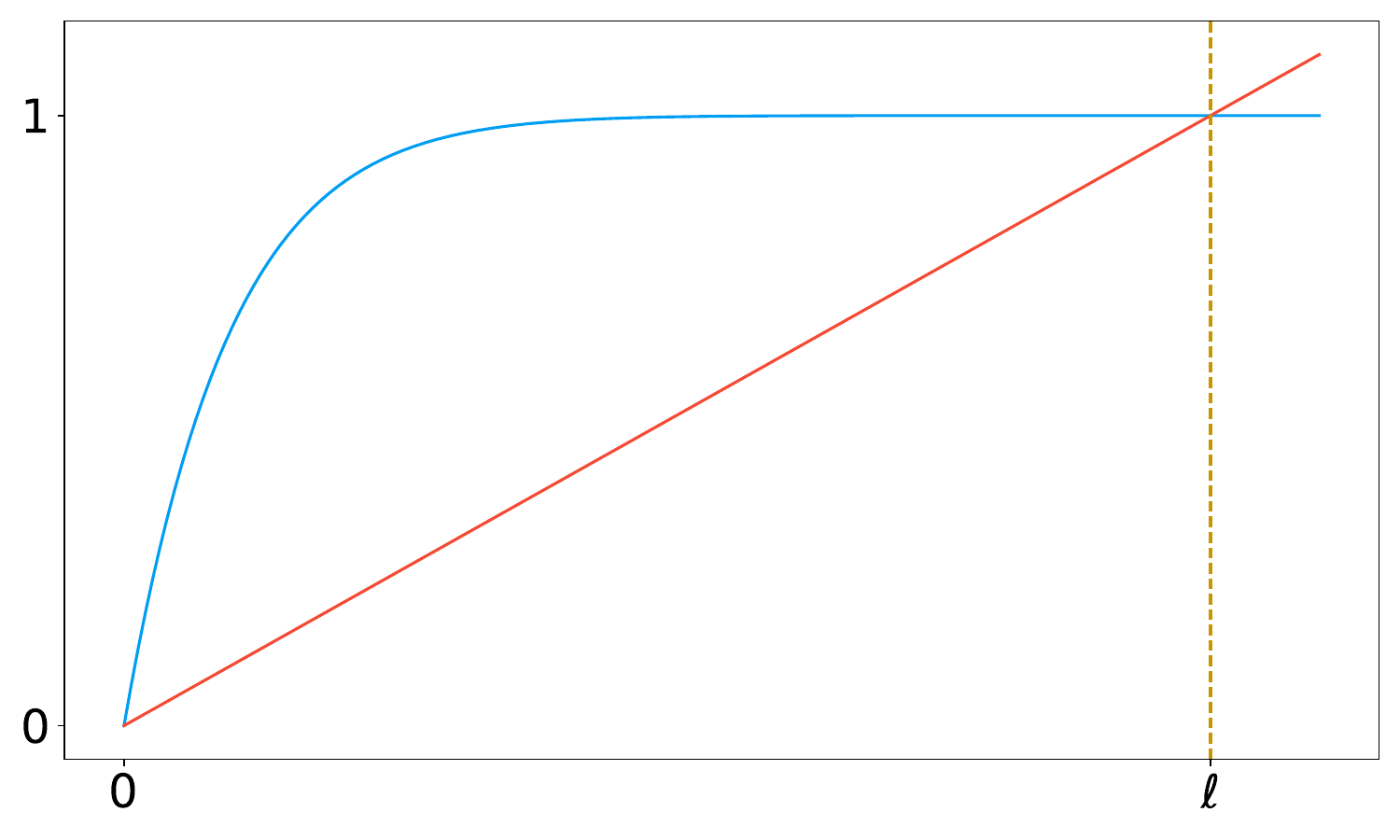"}
\caption{$Q(p_i)$ for $p_i < \ell$, and the linear lower bound $Q(p_i) \geq (1-\delta)\tfrac{p_i}{\ell}$
in \cref{res:q-light-bounds}.}
\label{fig:Q_dangerzone}
\end{figure}

To prove \cref{res:soundness}, we will use the concavity of $P_d(x)$ on $[0, \ell]$ to lower bound
$Q(p_i)$ by the line between the points $(0, 0)$ and $(\ell, Q(\ell))$. This is illustrated in
\cref{fig:Q_dangerzone}, which shows $Q$ compared to the linear lower bound. First, we need a
standard fact about the roots of Chebyshev polynomials.

\begin{fact}
    \label{fact:chebyshev-roots}
    For each $d \in \bN$, $T_d(x)$ has $d$ distinct roots, all within $[-1, 1]$.
\end{fact}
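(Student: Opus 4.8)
The plan is to use the trigonometric parametrization of $T_d$ on $[-1,1]$. First I would establish the identity $T_d(\cos\theta) = \cos(d\theta)$ for every $\theta \in \bR$. This follows directly from the closed-form formula in the definition of $T_d$: for $x = \cos\theta$ we have $x^2-1 = -\sin^2\theta$, so $\sqrt{x^2-1} = \pm i\sin\theta$ and hence $x \pm \sqrt{x^2-1} = \cos\theta \pm i\sin\theta = e^{\pm i\theta}$ (the expression is symmetric under $\sin\theta \mapsto -\sin\theta$, so the branch choice is irrelevant); therefore $T_d(\cos\theta) = \tfrac{1}{2}\bigl(e^{id\theta} + e^{-id\theta}\bigr) = \cos(d\theta)$. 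Equivalently one may prove the same identity by induction on $d$ from the recurrence $T_d = 2xT_{d-1} - T_{d-2}$ together with the product-to-sum formula $2\cos\theta\cos((d-1)\theta) = \cos(d\theta) + \cos((d-2)\theta)$, with base cases $T_0(\cos\theta) = 1$ and $T_1(\cos\theta) = \cos\theta$.

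Next I would locate the roots inside $[-1,1]$. Since $\theta \mapsto \cos\theta$ is a continuous strictly decreasing bijection from $[0,\pi]$ onto $[-1,1]$, every $x \in [-1,1]$ equals $\cos\theta$ for a unique $\theta \in [0,\pi]$, and by the identity above $T_d(x) = 0$ iff $\cos(d\theta) = 0$, i.e. iff $d\theta = \tfrac{(2k-1)\pi}{2}$ for some integer $k$, i.e. $\theta = \tfrac{(2k-1)\pi}{2d}$. Imposing $\theta \in (0,\pi)$ forces $k \in \{1,\dots,d\}$, which gives exactly $d$ distinct angles $\theta_k = \tfrac{(2k-1)\pi}{2d}$; by injectivity of $\cos$ on $[0,\pi]$ these yield exactly $d$ distinct numbers $x_k = \cos\bigl(\tfrac{(2k-1)\pi}{2d}\bigr)$, all lying in the open interval $(-1,1) \subset [-1,1]$, each a root of $T_d$.

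Finally I would invoke the fact that $T_d$ is a polynomial of degree exactly $d$ — immediate from the recurrence, which shows its leading coefficient is $2^{d-1}$ for $d \ge 1$ (and the $d=0$ case is trivial since $T_0 \equiv 1$ has no roots, consistent with the claim) — so $T_d$ has at most $d$ real roots counted with multiplicity. Since I have already exhibited $d$ distinct real roots, these must be all of them; in particular they are simple and, as computed, all contained in $[-1,1]$. This proves the statement.

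There is no genuine obstacle here: this is a classical fact, and the only points meriting a sentence of care are (i) verifying $T_d(\cos\theta) = \cos(d\theta)$ cleanly (dispatched by either the closed form or the one-line induction), and (ii) pinning down the index range $k \in \{1,\dots,d\}$ so that precisely $d$ roots — neither more nor fewer — are produced, which rests on the strict monotonicity of $\cos$ on $[0,\pi]$ together with the degree bound on $T_d$.
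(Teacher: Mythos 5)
Your proof is correct. The paper states this as a standard fact without giving a proof, so there is nothing to compare against; your argument is the classical one. You correctly establish the identity $T_d(\cos\theta) = \cos(d\theta)$ from the paper's closed-form formula (the complex-square-root branch issue is handled cleanly by noting the symmetry of the expression), derive the $d$ distinct roots $x_k = \cos\bigl(\tfrac{(2k-1)\pi}{2d}\bigr)$ from the bijectivity of $\cos$ on $[0,\pi]$, and close the argument with the degree-$d$ bound to conclude these are \emph{all} the roots. The only minor quibble is that for $d \in \bN$ with the convention $\bN = \{1,2,\dots\}$ your side remark on $d=0$ is unnecessary, but it does no harm.
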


\begin{corollary}
    \label{res:chebyshev-convex-tail}
    For each $d \in \bN$, $T_d(x)$ is convex increasing in $[1, +\infty)$.
\end{corollary}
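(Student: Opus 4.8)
The plan is to combine \cref{fact:chebyshev-roots} with the elementary fact that $T_d$ has a positive leading coefficient. For $d \ge 1$, the recursion $T_d(x) = 2x\,T_{d-1}(x) - T_{d-2}(x)$ gives, by induction, that the leading coefficient of $T_d$ equals $2^{d-1} > 0$; together with \cref{fact:chebyshev-roots} this lets us write
\[
  T_d(x) = 2^{d-1}\prod_{i=1}^d (x - r_i), \qquad r_1,\dots,r_d \in [-1,1].
\]
Since $T_d(1) = 1 \neq 0$, no $r_i$ equals $1$, so in fact $r_i \in [-1,1)$ for every $i$, and hence $x - r_i \ge 1 - r_i > 0$ for every $x \ge 1$. (For $d = 0$, $T_0 \equiv 1$ is trivially convex and nondecreasing; for $d \ge 1$ we use the formula, with $d = 1$ giving $T_1(x) = x$ and root $r_1 = 0$.)

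Next I would differentiate the product twice. We have $T_d'(x) = 2^{d-1}\sum_{j=1}^d \prod_{i \neq j}(x - r_i)$, and for $x \ge 1$ every factor in every summand is strictly positive, so $T_d'(x) > 0$; thus $T_d$ is strictly increasing on $[1,\infty)$. Differentiating once more, $T_d''(x) = 2^{d-1}\sum_{j \neq k}\prod_{i \notin \{j,k\}}(x - r_i)$, which is a sum of products of positive numbers (an empty sum, hence $0$, when $d \le 1$), so $T_d''(x) \ge 0$ for all $x \ge 1$. Therefore $T_d$ is convex on $[1,\infty)$, which is the claim.

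There is essentially no obstacle here; the only points requiring care are recording that the leading coefficient is positive — so the signs in the expansions of $T_d'$ and $T_d''$ are unambiguous — and noting that $1$ is not a root, which makes the monotonicity strict and keeps all factors positive on the closed interval $[1,\infty)$. An alternative route that avoids invoking \cref{fact:chebyshev-roots} is the substitution $x = \cosh\theta$ with $\theta \ge 0$, under which $T_d(x) = \cosh(d\theta)$ and $\frac{d}{dx}T_d(x) = \frac{d\,\sinh(d\theta)}{\sinh\theta} \ge 0$; checking that this quantity is nondecreasing in $x$ then reduces to the inequality $d\cosh(d\theta)\sinh\theta \ge \sinh(d\theta)\cosh\theta$ for $\theta > 0$, which takes a little more work, so the factorization argument is preferable.
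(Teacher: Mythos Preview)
Your proof is correct and takes a somewhat different route from the paper's. Both arguments start from \cref{fact:chebyshev-roots} and the positive leading coefficient, but the paper then invokes the Gauss--Lucas theorem to confine the roots of \emph{every} derivative of $T_d$ to $[-1,1]$, and reads off positivity of $T_d'$ and $T_d''$ on $(1,\infty)$ from their eventual behaviour at $+\infty$. You instead write $T_d$ explicitly as $2^{d-1}\prod_i (x-r_i)$ and differentiate the product by hand, so that positivity of $T_d'$ and $T_d''$ on $[1,\infty)$ follows immediately from the positivity of each factor $x-r_i$. Your approach is more elementary in that it avoids Gauss--Lucas entirely; the paper's approach is slightly slicker in that it handles all derivatives at once without any computation. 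Either way the argument is short, and your extra care in noting $T_d(1)=1$ (so no root sits at $1$) lets you work on the closed interval $[1,\infty)$ rather than only $(1,\infty)$.
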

\begin{proof}
    Since $T_d(x)$ has a positive leading coefficient, all of its first $d$ derivatives are
    eventually positive nondecreasing as $x \to +\infty$. Moreover, all of its derivatives have
    their roots within $[-1, 1]$ by \cref{fact:chebyshev-roots} and the Gauss-Lucas theorem. Hence
    all of its first $d$ derivatives are positive on $(1, +\infty)$.
\end{proof}

\begin{corollary}
    \label{res:pd-concave}
    $P_d$ is concave increasing on $[0, \ell]$.
\end{corollary}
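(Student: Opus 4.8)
The claim is that $P_d(x) = -\delta T_d(\psi(x))$ is concave increasing on $[0,\ell]$. The plan is to pull back the known behavior of $T_d$ on $[1,+\infty)$ (Corollary \ref{res:chebyshev-convex-tail}) through the affine map $\psi$, being careful about the sign flip and about the direction in which $\psi$ is monotone.

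First I would record the relevant facts about $\psi$. Since $\psi(x) = -\frac{2x-r-\ell}{r-\ell}$, it is an affine function with negative slope $\psi'(x) = -\frac{2}{r-\ell} < 0$, so $\psi$ is strictly decreasing. We have $\psi(\ell) = 1$ and $\psi(0) = 1 + \frac{2\alpha}{1-\alpha} > 1$, and by monotonicity $\psi$ maps $[0,\ell]$ bijectively onto $[1, \psi(0)] \subset [1,+\infty)$. So on $x \in [0,\ell]$ we are evaluating $T_d$ exactly on the tail $[1,+\infty)$ where Corollary \ref{res:chebyshev-convex-tail} applies.

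Next, the monotonicity: for $x \in [0,\ell]$, $\frac{d}{dx} P_d(x) = -\delta\, T_d'(\psi(x))\, \psi'(x)$. Here $\delta > 0$, $T_d'(\psi(x)) > 0$ since $\psi(x) \ge 1$ and $T_d$ is increasing on $[1,+\infty)$, and $\psi'(x) < 0$; the product of the three sign factors $(-)(+)(-)$ is positive, so $P_d$ is increasing on $[0,\ell]$. For concavity, differentiate once more: since $\psi$ is affine, $\psi'$ is constant and $\psi'' = 0$, so $\frac{d^2}{dx^2} P_d(x) = -\delta\, T_d''(\psi(x))\, (\psi'(x))^2$. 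Now $\delta > 0$, $(\psi'(x))^2 > 0$, and $T_d''(\psi(x)) > 0$ because $\psi(x) \in [1,\psi(0)]$ and $T_d$ is convex on $[1,+\infty)$. Hence $\frac{d^2}{dx^2} P_d(x) < 0$ on $(0,\ell)$, which gives concavity on $[0,\ell]$.

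There is no real obstacle here — the only thing to be careful about is not conflating "$T_d$ convex increasing'' with "$P_d$ convex increasing'': the reflection built into $\psi$ (negative slope) preserves monotonicity after composing with the sign flip from $-\delta$, but it is the sign flip, not the reflection, that turns convexity of $T_d$ into concavity of $P_d$ (reflection alone preserves convexity). I would state this cleanly via the two derivative computations above rather than appealing to geometric intuition. One could alternatively phrase the whole argument through the substitution $P_d(x) = -\delta T_d(1+\gamma(x))$ with $\gamma(x) = \frac{2(\ell-x)}{r-\ell} \ge 0$ decreasing and affine in $x$, which makes the reduction to the tail of $T_d$ even more transparent, but the derivative computation is already short enough to present directly.
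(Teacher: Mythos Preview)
Your proposal is correct and takes essentially the same approach as the paper: both argue that $P_d = -\delta T_d(\psi(x))$ with $\psi$ an affine map sending $[0,\ell]$ onto $[1,\psi(0)] \subset [1,+\infty)$, and then invoke Corollary~\ref{res:chebyshev-convex-tail}. The paper simply asserts this in one sentence, whereas you spell out the first and second derivative computations and the sign bookkeeping explicitly; there is no substantive difference.
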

\begin{proof}
    This follows from the \cref{res:chebyshev-convex-tail} and the fact that $P_d = -\delta
    T_d(\psi(x))$, with $\psi$ an affine function mapping $[0, \ell]$ onto $\left[ 1, 1 +
    \frac{2\alpha}{1-\alpha} \right]$.
\end{proof}

\begin{proposition}
    \label{res:q-light-bounds}
    For each $x \in [0, \ell]$, we have $(1-\delta) \frac{x}{\ell} \le Q(x) \le 1$.
\end{proposition}
\begin{proof}
    Recall that $P_d(0) = -1$ by construction. It is also immediate to check that $T_d(1) = 1$ and
    hence $P_d(\ell) = -\delta T_d(\psi(\ell)) = -\delta T_d(1) = -\delta$. By
    \cref{res:pd-concave}, $P_d(x)$ is concave increasing (and hence negative) on $[0, \ell]$.
    Thus for any $x \in [0, \ell]$ we have
    \[
        Q(x) = 1 + e^{-mx} P_d(x)
        \le 1 + e^{-mx} \cdot 0
        = 1
    \]
    and, on the other hand,
    \[
        Q(x) = 1 + e^{-mx} P_d(x)
        \ge 1 + P_d\left(
            \frac{x}{\ell} \cdot \ell + \left(1-\frac{x}{\ell}\right) \cdot 0
        \right)
        \ge 1 + \left(
            \frac{x}{\ell} \cdot (-\delta) + \left(1-\frac{x}{\ell}\right) \cdot (-1)
        \right)
        = (1-\delta) \frac{x}{\ell} \,.
    \]
\end{proof}

We now calculate the lower bound on the test statistic by partitioning the densities $p_i$ into
weight classes.

\begin{definition}[Heavy and light elements]
    \label{def:heavy}
    An element $i \in \supp(p)$ is called \emph{heavy} if $p_i \ge \ell$;
    otherwise, if $p_i \in (0, \ell)$, it is called \emph{light}.
\end{definition}

\begin{lemma}[Refinement of \cref{res:soundness}]
    \label{res:soundness-refinement}
    Assume \cref{constraint:d_log,constraint:right_tail,constraint:ell}. For any distribution $p$,
    let $n_H$ be the number of heavy elements in $\supp(p)$, and let $\mu_L$ be the total
    probability mass of the light elements in $\supp(p)$. Then
    \begin{equation}
        \label{eq:test-statistic-lb}
        \Ex{\bm{\widehat S}} \ge (1-\delta) \left( n_H + \frac{\mu_L}{\ell} \right) \,.
    \end{equation}
    In particular, for any $k \leq n$, if $p$ is $\epsilon$-far from being supported on $k$
    elements, then
    \[
        \Ex{\bm{\widehat S}} > (1 + 3\epsilon/4) k \,.
    \]
\end{lemma}
\begin{proof}
    Let $H \subset \supp(p)$ denote the set of heavy elements, so that $n_H = |H|$. Recall that, by
    \eqref{eq:mean-Q},
    \[
        \Ex{\bm{\widehat S}}
        = \sum_{i \in \bN} Q(p_i)
        = \sum_{i \in H} Q(p_i) + \sum_{i \in \supp(p) \setminus H} Q(p_i) \,.
    \]
    By \cref{res:d_log,res:right_tail}, the first summation is
    \[
        \sum_{i \in H} Q(p_i)
        \ge \sum_{i \in H} (1-\delta)
        = (1-\delta) n_H \,.
    \]
    By \cref{res:q-light-bounds}, the second summation is
    \[
        \sum_{i \in \supp(p) \setminus H} Q(p_i)
        \ge \sum_{i \in \supp(p) \setminus H} (1-\delta) \frac{p_i}{\ell}
        = (1-\delta) \frac{\mu_L}{\ell} \,.
    \]
    This establishes \eqref{eq:test-statistic-lb}. Now, suppose $p$ is $\epsilon$-far from being
    supported on $k \le n$ elements. We consider two cases.

    \textbf{Case 1.} Suppose $\mu_L > \epsilon/10$. Then, using $\delta \le \epsilon/20$ (from
    \cref{res:d_log}) and $\ell \le \frac{\epsilon}{20n} \leq \frac{\epsilon}{20k}$
    (from \cref{constraint:ell}),
    \[
        \Ex{\bm{\widehat S}}
        \ge (1-\delta) \frac{\mu_L}{\ell}
        > (1 - \epsilon/20) \cdot \frac{\epsilon/10 \cdot 20k}{\epsilon}
        = (1 - \epsilon/20) \cdot 2k
        > (1 + 3\epsilon/4) k \,.
    \]

    \textbf{Case 2.} Suppose $\mu_L \le \epsilon/10$. Then there must exist a partition $H = R \cup
    T$ of the heavy elements such that $|R| = k$ and $p(T) > 9\epsilon/10$, since otherwise $p$ would
    be $\epsilon$-close to some distribution supported on at most $k$ elements. In fact, we can
    choose this partition such that $p(i) > p(j)$ for every $i \in R$ and $j \in T$. It follows that
    $p_j \le 1/k$ for every $j \in T$, since otherwise we would have $p(R) > \frac{1}{k} \cdot |R| =
    1$. Hence
    \[
        |T| \ge \frac{p(T)}{1/k} > \frac{9\epsilon/10}{1/k} = \frac{9 k \epsilon}{10} \,.
    \]
    Thus we have
    \begin{align*}
        \Ex{\bm{\widehat S}}
        &\ge (1-\delta) n_H
        \ge (1-\epsilon/20) \left( |R| + |T| \right)
        > (1-\epsilon/20) \left( k + \frac{9 k \epsilon}{10} \right) \\
        &= (1-\epsilon/20) \cdot (1 + 9\epsilon/10) k
        > (1 + 9\epsilon/10 - \epsilon/20 \cdot 2) k
        > (1 + 3\epsilon/4) k \,. \qedhere
    \end{align*}
\end{proof}

\subsection{Correctness of the tester and optimizing the parameters}
\label{section:correctness_and_parameters}

Let us repeat our list of constraints:
\begin{itemize}
\item \cref{constraint:d_log}: $r \ge 3\ell$ and
    $d \geq C_d \sqrt{\frac{r-\ell}{2\ell}} \log \left(\frac{20}{\epsilon} \right)$, where $C_d =
    4\ln(2)$.
\item \cref{constraint:right_tail}: $m \geq 5.5 d/(r-\ell)$.
\item \cref{constraint:variance}:
    $m \leq \tfrac{1}{4^4} \epsilon^2 n^2$ and
    $d^6 9^d \left(\frac{r+\ell}{r-\ell}\right)^{2d-2} \leq \tfrac{1}{4} m(r-\ell)^2 n^2$.
\item \cref{constraint:ell}: $\ell \leq C_\ell \frac{\epsilon}{n}$ for any $C_\ell \leq 1/20$.
\end{itemize}
We show in \cref{section:correctness} that if these constraints are satisfied, then the tester defined in \cref{def:tester}
will be correct. Then we optimize $m$ according to these constraints in \cref{section:optimizing}.

\subsubsection{Correctness}
\label{section:correctness}

\begin{lemma}[Correctness (Poissonized)]
\label{res:poisson-correctness}
Suppose $n, \epsilon$ satisfy \cref{assumption:epsilon}, and that
\cref{constraint:d_log,constraint:right_tail,constraint:variance,constraint:ell} are satisfied with
some sample size $m = m(n,\epsilon)$. Then the Poissonized test statistic $\widehat{\bm S}$
in \cref{def:test-statistic} satisfies
\[
  (1+3\epsilon/4) \min\{ \eff_\epsilon(p)-1, n \} < \Ex{ \widehat{\bm S}} < (1+\epsilon/4)|\supp(p)|
\]
and
\[
  \Pr{ |\widehat{\bm S} - \Ex{\widehat{\bm S}}| > \epsilon n/4} \le 1/4 \,.
\]
In particular, if $p$ is $\epsilon$-far from having support size at most $n$, then $\eff_\epsilon(p)
> n$ (\cref{obs:eff}),
so $\widehat{\bm S} > (1+\epsilon/2)n$ with probability at least $3/4$; and if $|\supp(p)|
\leq n$ then $\widehat{\bm S} < (1+\epsilon/2)n$ with probability at least $3/4$. So the algorithm
in \cref{def:poisson-tester} is a correct (Poissonized) support-size tester with sample complexity
$m$.
\end{lemma}
\begin{proof}
    \cref{res:completeness} gives that $\Ex{\bm{\widehat S}} < (1 + \epsilon/4) |\supp(p)|$, while
\cref{res:soundness} gives that $\Ex{\bm{\widehat S}} > (1 + 3\epsilon/4) \min\{\eff_\epsilon(p)-1,
n\}$.  Combining these with Chebyshev's inequality via \cref{res:variance} gives the result.
\end{proof}

Translating to the standard (non-Poissonized) testing model using \cref{fact:poissonization}:

\begin{corollary}[Correctness (Standard)]
\label{res:correctness}
Suppose that, for $n, \epsilon$ satisfying \cref{assumption:epsilon},
\cref{constraint:d_log,constraint:right_tail,constraint:variance,constraint:ell} are satisfied with
some sample size $m = m(n,\epsilon)$. Then (also for $n, \epsilon$ satisfying
\cref{assumption:epsilon}), there is a support-size tester with sample complexity $O( m(n,\epsilon)
)$.
\end{corollary}

\subsubsection{Optimizing the parameters}
\label{section:optimizing}

Now we complete \cref{goal:main} and find the setting of parameters which minimizes $m$ while
satisfying all of our constraints. The (weaker version of the) main \cref{res:intro-main} follows
from \cref{res:optimization}, using \cref{res:correctness}, and \cref{res:naive-estimator} to handle
small $\epsilon$ that do not satisfy \cref{assumption:epsilon}.

\begin{proposition}
    \label{res:optimization}
    Under \cref{assumption:epsilon}, we may satisfy all of the
    constraints in such a way that
    \[
        m(n,\epsilon)
        = O\left( \frac{n}{\epsilon \log n} \log^2\left(\tfrac{1}{\epsilon}\right) \right) \,.
    \]
\end{proposition}

    We first sketch some rough calculations to motivate the asymptotic dependence that each of
    $\ell$, $r$, $d$ and $m$ should have on $n$ and $\epsilon$. Combining
    \cref{constraint:right_tail,constraint:d_log}, we obtain
    \begin{equation}
        \label{eq:rough-m}
        m \gtrsim \frac{d}{r} \gtrsim \sqrt{\frac{1}{\ell r}} \log\left(\frac{1}{\epsilon}\right)
        \,.
    \end{equation}
    Thus, to allow $m$ to be as small as possible, we should make $\ell$ and $r$ as large as
    possible. We already have the bound $\ell \lesssim \frac{\epsilon}{n}$ from
    \cref{constraint:ell}. As we make $r$ larger, $d$ needs to grow due to \cref{constraint:d_log},
    which can only happen as long as the bound
    $d^6 9^d \left(\frac{r+\ell}{r-\ell}\right)^{2d-2} \leq \tfrac{1}{4} m(r-\ell)^2 n^2$ is
    preserved. The left-hand side of this inequality is exponential in $d$ while
    the right-hand side is polynomial in $n$, suggesting that we need $d = O(\log n)$. Since $d
    \gtrsim \sqrt{\frac{r}{\ell}} \log\left(\frac{1}{\epsilon}\right)$, this suggests setting
    \[
        r \approx \ell \cdot \frac{\log^2 n}{\log^2(1/\epsilon)} \,.
    \]
    Plugging back into \eqref{eq:rough-m}, we obtain
    \begin{equation}
    \label{eq:rough_m}
        m \gtrsim \frac{1}{\ell} \cdot \frac{\log^2(1/\epsilon)}{\log n}
        \approx \frac{n}{\epsilon \log n} \log^2\left(\frac{1}{\epsilon}\right) \,.
    \end{equation}
    \begin{remark}
      In \cref{section:improved-bound} we replace \cref{constraint:ell} with the looser
      \cref{constraint:better_ell} that $\ell \leq C_\ell \tfrac{\epsilon}{n} \log(1/\epsilon)$, which we
      can see from \cref{eq:rough_m} should improve $m$ by a $\log(1/\epsilon)$ factor.
    \end{remark}
    Let us now make this plan rigorous. (We have not tried too hard to optimize the constants.)

\begin{proof}
    We require \cref{assumption:epsilon} with $a \define 1/128$. We set
    \begin{itemize}
        \item $\ell \define C_\ell \frac{\epsilon}{n}$ for $C_\ell \define 1/20$.
        \item $r \define C_r \frac{\epsilon}{n} \frac{\log^2 n}{\log^2(1/\epsilon)}$, for constant
            $C_r \define 4 a^2 C_\ell$.
        \item $d \define \ceil{C_d \sqrt{\frac{r-\ell}{2\ell}} \log \left(\frac{20}{\epsilon}
            \right)}$, recalling that $C_d = 4\ln(2)$.
        \item $m \define 5.5 \frac{d}{r-\ell}$.
    \end{itemize}
    We claim that \cref{constraint:d_log,constraint:right_tail,constraint:variance,constraint:ell}
    are satisfied. \cref{constraint:ell,constraint:right_tail} are immediately satisfied by the
    choices of $m$ and $\ell$, while \cref{constraint:d_log} is satisfied by the choice of $d$, and
    because $r \ge 4\ell$ since
    \[
        \frac{r}{\ell} = \frac{C_r \log^2 n}{C_\ell \log^2(1/\epsilon)}
        > 4 a^2 \cdot \frac{\log^2 n}{(a \log n)^2}
        = 4 \,.
    \]
    Let us now verify \cref{constraint:variance}. Start with a calculation of
$\tfrac{d}{r-\ell}$, which also gives the claimed bound of $m = O(\tfrac{n}{\epsilon \log
n}\log^2(1/\epsilon))$:
    \begin{align*}
        \frac{d}{r-\ell}
        &\le \frac{2C_d \sqrt{\frac{r-\ell}{2\ell}} \log \left(\frac{20}{\epsilon} \right)}{r-\ell}
        = \frac{2C_d \log \left(\frac{20}{\epsilon} \right)}{\sqrt{2\ell(r-\ell)}}
        \le \frac{2C_d \log \left(\frac{20}{\epsilon} \right)}{\sqrt{2\ell \cdot r/2}}
        = \frac{2C_d \log \left(\frac{20}{\epsilon} \right)}{
            \sqrt{\frac{\epsilon}{n} \cdot C_\ell \cdot
                C_r \frac{\epsilon}{n} \frac{\log^2 n}{\log^2(1/\epsilon)}}} \\
        &= \frac{2C_d}{\sqrt{C_\ell C_r}} \cdot \frac{n}{\epsilon} \cdot \frac{\log(1/\epsilon) \log(20/\epsilon)}{\log n} 
        \leq \frac{12 C_d}{\sqrt{C_\ell C_r}} \cdot \frac{n}{\epsilon} \cdot \frac{\log^2(1/\epsilon)}{\log n} 
        = \frac{6 C_d}{C_\ell a} \cdot \frac{n}{\epsilon} \cdot \frac{\log^2(1/\epsilon)}{\log n} \,,
    \end{align*}
    the last inequality because $\epsilon < 1/2$ by \cref{assumption:epsilon}, so $\log(1/\epsilon)
    > 1$ and hence $\log(20/\epsilon) = \log(20) + \log(1/\epsilon) < 5 + \log(1/\epsilon) <
    6\log(1/\epsilon)$.

    To verify \cref{constraint:variance}, we start with checking the inequality $d^6 9^d
\left(\frac{r+\ell}{r-\ell}\right)^{2d-2} \leq \tfrac{1}{4} m(r-\ell)^2 n^2$. Using $m = 5.5
\frac{d}{r-\ell}$, the right-hand side is at least
    \begin{align*}
      \frac{1}{4} m (r-\ell)^2 n^2
      &\geq \frac{1}{4} \cdot 5.5 \cdot d (r - \ell) n^2 \,,
    \end{align*}
    so it suffices to prove
    \[
      \frac{d}{r-\ell} d^4 9^d \left(\frac{r+\ell}{r-\ell}\right)^{2d-2} \leq \frac{5.5}{4} n^2 \,.
    \]
    Using the upper bound on $\frac{d}{r-\ell}$ and the upper bound $\tfrac{r+\ell}{r-\ell} \leq
    \tfrac{(5/4)r}{(3/4)r} = 5/3$, it suffices to prove
    \begin{equation}
      \label{eq:sufficient-d-bound}
      d^4 5^{2(d-1)} \frac{6 C_d}{C_\ell} \frac{1}{a} \cdot \frac{n}{\epsilon}
        \frac{\log^2(1/\epsilon)}{\log n} \leq \frac{5.5}{4 \cdot 9} n^2 
      \iff
      5^{2(d-1)} \leq \frac{5.5 C_\ell}{4 \cdot 9 \cdot 6 C_d} \cdot a \cdot
        \frac{\epsilon n}{d^4} \cdot \frac{\log n}{\log^2(1/\epsilon)} \,.
    \end{equation}
    Upper bound $d$ by
    \begin{align*}
        d-1
        &\le C_d \sqrt{\frac{r-\ell}{2\ell}} \log \left(\frac{20}{\epsilon}\right)
        \le \frac{C_d}{\sqrt{2}} \sqrt{\frac{r}{\ell}} \cdot
            \log \left(\frac{20}{\epsilon}\right)
        \le \frac{C_d}{\sqrt{2}} \sqrt{\frac{C_r}{C_\ell}} \frac{\log(n)}{\log(1/\epsilon)}\cdot
            \log \left(\frac{20}{\epsilon}\right) \\
        &= \frac{2 C_d}{\sqrt{2}} \cdot a \cdot \frac{\log(n)}{\log(1/\epsilon)}\cdot
            \log \left(\frac{20}{\epsilon}\right)
        \leq 12\frac{C_d}{\sqrt{2}} \cdot a \cdot \log n \,.
    \end{align*}
    Lower bound the right-hand side of \eqref{eq:sufficient-d-bound} using the inequality above via
    $d \le 2(d-1)$, and $\epsilon \ge n^{-a}$, which implies that $\log(1/\epsilon) \le a \log n$,
    to obtain
    \begin{align*}
        \frac{5.5 C_\ell}{4 \cdot 9 \cdot 6 C_d} \cdot a \cdot
            \frac{\epsilon n}{d^4} \cdot \frac{\log n}{\log^2(1/\epsilon)}
        &\ge \frac{5.5 C_\ell}{4 \cdot 9 \cdot 6 C_d} \cdot a \cdot
            \frac{n^{1-a}}{\left(2 \cdot 12\frac{C_d}{\sqrt{2}} \cdot a \cdot \log n\right)^4}
            \cdot \frac{\log n}{a^2 \log^2 n} \\
        &= K \cdot \frac{1}{a^5} \cdot n^{1-a} \cdot \frac{1}{\log^5 n}
    \end{align*}
    for constant $K = \frac{5.5 \cdot 4 C_\ell}{4 \cdot 9 \cdot 6 \cdot 2^4 \cdot 12^4 C_d^5}$.
    Hence, returning to \eqref{eq:sufficient-d-bound}, it suffices to establish
    \[
      n^{\tfrac{24\log(5)}{\sqrt 2} C_d \cdot a} \leq K \cdot \frac{1}{a^5} \cdot n^{1-a} \cdot
        \frac{1}{\log^5 n} \,.
    \]
    We have $\frac{24 \log(5)}{\sqrt{2}} C_d a \le 0.8536$ while $\frac{K}{a^5} \ge 3.2$, and indeed
    the inequality
    \[
        n^{0.8536} \le 3.2 \cdot \frac{n^{127/128}}{\log^5 n}
    \]
    holds for sufficiently large $n$ (say, $n \ge 5.4 \cdot 10^{84}$).

    It remains to verify the inequality $m \leq \frac{1}{4^4} \epsilon^2 n^2$. We have
    \[
        m = 5.5 \frac{d}{r-\ell}
        \le 5.5 \cdot \frac{6 C_d}{C_\ell} \cdot \frac{1}{a} \cdot \frac{n}{\epsilon}
            \cdot \frac{\log^2(1/\epsilon)}{\log n} \,,
    \]
    so (since $\epsilon \geq n^{-a}$ and hence $\log(1/\epsilon) \le a \log n$) it suffices to have
    \begin{align*}
        &5.5 \cdot \frac{6 C_d}{C_\ell} \cdot \frac{1}{a} \cdot \frac{n}{\epsilon}
                \cdot \frac{\log^2(1/\epsilon)}{\log n}
            \le \frac{1}{4^4} \epsilon^2 n^2
        \impliedby
            \epsilon^3 \geq 4^4 \cdot 5.5 \cdot 6 \cdot \frac{C_d}{C_\ell} \cdot \frac{1}{a}
            \cdot \frac{1}{n} \cdot \frac{a^2 \log^2 n}{\log n}
        \\
        &\quad \impliedby
            \frac{n^{1-3a}}{\log n} \geq a \cdot 4^4 \cdot 5.5 \cdot 6 \cdot \frac{C_d}{C_\ell}
        \impliedby \frac{n^{125/128}}{\log n} \ge 3660 \,,
    \end{align*}
    which holds for sufficiently large $n$ (say, $n \ge 8 \cdot 10^4$).
\end{proof}

\subsection{Constraint IVb: Improvements from loosening constraint IV}
\label{section:improved-bound}

In this section we loosen \cref{constraint:ell} and obtain the improved bound of
$O\left(\tfrac{n}{\epsilon \log n}\log(1/\epsilon)\right)$. We replace the earlier constraint of
$\ell \leq C_\ell \tfrac{\epsilon}{n}$ with:

\begin{constraint}
\label{constraint:better_ell}
We require $\ell \leq C_\ell \tfrac{\epsilon}{n} \log(1/\epsilon)$ for any constant $C_\ell$
satisfying $C_\ell \leq \min\left\{\frac{C_d}{4\sqrt{3}}, \frac{1}{3}\right\}$.
\end{constraint}

To get the improved upper bound from this constraint, it is sufficient to replace the earlier
\cref{res:soundness} with:

\begin{lemma}
    \label{res:soundness_improved}
    Assume \cref{constraint:d_log,constraint:right_tail,constraint:better_ell}.  Then for all
    $\epsilon \leq 1/3$, and all $k \leq n$, if $p$ is $\epsilon$-far from being supported on $k$
    elements,
    \[
        \Ex{\bm{\widehat S}} > (1 + 3\epsilon/4) k \,.
    \]
\end{lemma}

Following the same arguments as in \cref{section:correctness_and_parameters} then leads to our main
result:
\begin{theorem}
\label{thm:formal-main}
For all $n \in \bN$ and $\epsilon \in (0,1)$, the sample complexity of testing support size is
at most
\[
  m(n,\epsilon) = O\left(\frac{n}{\epsilon \log n} \cdot \min\left\{ \log\tfrac{1}{\epsilon}, \log
n\right\} \right) \,.
\]
\end{theorem}
\begin{proof}
    The proof of correctness under
    \cref{constraint:d_log,constraint:right_tail,constraint:variance,constraint:better_ell} is the
    same as \cref{res:poisson-correctness}, replacing \cref{res:soundness} with
    \cref{res:soundness_improved}.

    The proof of the improved sample complexity follows from the same calculations as in
    \cref{res:optimization}: starting with the choice of parameters from \cref{res:optimization}, we
    \begin{enumerate}
        \item Scale $\ell$ and $r$ up by a factor of $\log(1/\epsilon)$.
        \item Scale $m$ down by a factor of $\log(1/\epsilon)$.
        \item Leave $d$ unchanged.
    \end{enumerate}
    It is then straightforward to see that each of
    \cref{constraint:d_log,constraint:right_tail,constraint:variance} is still satisfied under the
    new choices of parameters, and that \cref{constraint:better_ell} is indeed also satisfied.
\end{proof}

To show that \cref{constraint:better_ell} is sufficient, we first define a function $\Phi(\lambda)$
to quantify the worst-case behaviour of the sum $\sum_i Q(p_i)$
(\cref{section:better-bound-worst-case}), and then prove a lower bound on $\Phi(\lambda)$
(\cref{section:better-bound-lb}).

\paragraph*{Some notation.}
For $x \in (0,\ell)$, we will frequently write $x = \lambda \ell$
for $\lambda \in (0,1)$ and
\[
  \gamma = \gamma(\lambda) = \frac{2\alpha}{1-\alpha}(1-\lambda)
\]
which satisfies
\begin{equation}
\label{eq:psi-gamma}
  \psi(x) = \frac{r+\ell-2x}{r-\ell} = 1 + \frac{2\ell-2\lambda\ell}{r-\ell} = 1 +
\frac{2\ell(1-\lambda)}{r-\ell} = 1 + \frac{2\alpha}{1-\alpha}(1-\lambda) = 1 + \gamma \,.
\end{equation}

To simplify the analysis, we will define and analyze an auxiliary function $Q^*$ that lower bounds
$Q$.
Recall $Q(x) = 1 + e^{-mx} P_d(x)$.  We define the function
\[
  Q^*(x) \define \begin{cases}
    1 + P_d(x) &\text{ if } x < \ell \\
    1 - \delta &\text{ if } x \geq \ell \,,
  \end{cases}
\]
which satisfies:
\begin{proposition}[Properties of $Q^*$]
\label{res:Q*-properties}
$Q^*$ is continuous, concave and non-decreasing on $[0,1]$, and for all $x \in (0,\ell]$, $Q(x) \geq
Q^*(x)$.  Furthermore, if \cref{constraint:d_log,constraint:right_tail} hold, then $Q(x) \geq
Q^*(x)$ for all $x \in (0,1]$.
\end{proposition}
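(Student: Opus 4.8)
The plan is to check the four assertions about $Q^*$ — continuity, concavity, monotonicity on $[0,1]$, and the pointwise bound $Q \ge Q^*$ — essentially one at a time, reducing each to a property of $P_d$ already established, with almost no new computation.

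First I would verify continuity, which only needs checking at the glue point $x = \ell$: since $\psi(\ell) = 1$ and $T_d(1) = 1$, the left piece evaluates to $1 + P_d(\ell) = 1 - \delta T_d(1) = 1-\delta$, matching the right piece, and on each of $[0,\ell)$ and $[\ell,1]$ the function $Q^*$ is a polynomial (resp.\ a constant) and hence continuous. Next, for concavity and monotonicity on $[0,1]$, I would invoke \cref{res:pd-concave}: on $[0,\ell]$ the left piece $1 + P_d$ is concave increasing, while the right piece is constant, hence concave and non-decreasing. To glue two concave pieces at a point where they agree, the thing to verify is that the one-sided derivative does not jump upward as we cross $x = \ell$; here the left derivative is $P_d'(\ell) \ge 0$ (again from \cref{res:pd-concave}, since $P_d$ is non-decreasing on $[0,\ell]$, so in particular at its right endpoint) while the right derivative is $0$, so the glued function remains concave. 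Monotonicity of both pieces together with agreement at $\ell$ then gives that $Q^*$ is non-decreasing on $[0,1]$.

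For the unconditional bound $Q(x) \ge Q^*(x)$ on $(0,\ell]$, I would first note $P_d(x) < 0$ for all $x \in [0,\ell]$ (it is increasing with $P_d(\ell) = -\delta < 0$); then, since $0 < e^{-mx} < 1$ when $x > 0$, multiplying the negative number $P_d(x)$ by $e^{-mx}$ only increases it, so $Q(x) = 1 + e^{-mx} P_d(x) \ge 1 + P_d(x) = Q^*(x)$. For the conditional bound on $(\ell,1]$, where $Q^*(x) = 1-\delta$, I would simply cite \cref{res:d_log} for $x \in [\ell,r]$ and \cref{res:right_tail} for $x \in (r,1]$, each of which gives $|Q(x) - 1| \le \delta$ and hence $Q(x) \ge 1-\delta$; combining the two ranges yields $Q \ge Q^*$ on all of $(0,1]$ under \cref{constraint:d_log,constraint:right_tail}.

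I expect the only subtle step to be the concave gluing at $x = \ell$: unlike the other parts, it is not a one-line citation but requires observing that the left derivative $P_d'(\ell)$ is nonnegative while the right derivative is $0$, so that the kink at $\ell$ bends the correct way. This is exactly where the work behind \cref{res:pd-concave} (the roots of $T_d$ and the Gauss--Lucas argument) is used; everything else is bookkeeping and appeals to \cref{res:d_log,res:right_tail}.
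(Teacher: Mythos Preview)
Your proposal is correct and follows essentially the same approach as the paper: continuity via $P_d(\ell)=-\delta$, concavity/monotonicity via \cref{res:pd-concave}, negativity of $P_d$ on $[0,\ell]$ to get $Q\ge Q^*$ there, and \cref{res:d_log,res:right_tail} for $x\ge\ell$. The only difference is cosmetic: the paper glosses over the concave gluing at $x=\ell$ (``concave increasing then constant, hence concave''), whereas you make the derivative comparison $P_d'(\ell)\ge 0 \ge 0$ explicit, which is arguably cleaner but not a different idea.
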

\begin{proof}
For $x \leq \ell$, $P_d(x)$ is concave and increasing
(\cref{res:pd-concave}), so $Q^*(x) = 1 + P_d(x)$ is also concave increasing. To ensure continuity,
it suffices to check the value of $Q^*$ at $x=\ell$. Observe that
\[
P_d(\ell) = - \delta T_d(\psi(\ell)) = - \delta T_d(1) = - \delta \,,
\]
so $Q^*(\ell) = 1-\delta =
1 + P_d(\ell)$, which makes $Q^*$ continuous. Then since $Q^*(x)$ is concave increasing on $x \leq
\ell$, and constant $Q^*(x) = Q^*(\ell)$ on $x \geq \ell$, it is concave and non-decreasing on $x
\in (0,1)$.

Now we show $Q(x) \geq Q^*(x)$. If $x \in (0, \ell)$, write $x = \lambda \ell$ so that $\psi(x) = 1
+ \gamma$. Then
\[
  P_d(x) = - \delta T_d(\psi(x)) = -\delta T_d(1 + \gamma) \,,
\]
so $P_d(x) \leq 0$ since $T_d(1+\gamma) \geq 0$ for $\gamma \geq 0$. Since $mx \geq 0$,
\[
  Q(x) = 1 + e^{-mx} P_d(x) \geq 1 + P_d(x) = Q^*(x) \,.
\]
If \cref{constraint:d_log,constraint:right_tail} hold, then for $x \geq \ell$, by
\cref{res:d_log,res:right_tail},
\[
  Q(x) \geq 1 - \delta = Q^*(x) \,. \qedhere
\]
\end{proof}

Since the quantity $\sum_i Q^*(p_i)$ is invariant under permutations of $p$, we may assume:
\begin{assumption}
\label{assumption:sorted}
Without loss of generality, we assume $p$ is sorted, so that $p_1 \geq p_2 \geq \dotsm$.
\end{assumption}

\subsubsection{Worst-case behaviour and the function \texorpdfstring{$\Phi(\lambda)$}{Phi(lambda)}}
\label{section:better-bound-worst-case}

We first show that the worst-case behaviour of $Q^*$ is captured by two variables: the
$n^\text{th}$-largest density $p_n$, and the total mass of elements lighter than $p_n$. Informally,
the worst-case distributions for our test statistic are those concentrating essentially all of their
mass at $p_n$.

\begin{proposition}
    \label{res:Q*-worst-case}
    Let $n \in \bN$ and let $p$ be a (sorted) distribution with $p_n > 0$. Let $\mu \define \sum_{i
    > n} p_i$. Then
    \[
        \sum_i Q^*(p_i) \ge \left(n + \frac{\mu}{p_n}\right) Q^*(p_n) \,.
    \]
\end{proposition}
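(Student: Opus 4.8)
The claim compares $\sum_i Q^*(p_i)$ against the value one gets by "collapsing'' all mass in a particular way. I would prove it by a rearrangement/smoothing argument exploiting the concavity and monotonicity of $Q^*$ established in \cref{res:Q*-properties}. Split the sum into the heavy part $\sum_{i \le n} Q^*(p_i)$ and the light-tail part $\sum_{i > n} Q^*(p_i)$, using \cref{assumption:sorted} so that $p_i \ge p_n$ for $i \le n$ and $p_i \le p_n$ for $i > n$.

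\textbf{Heavy part.} For each $i \le n$, I would use that $Q^*$ is non-decreasing together with $p_i \ge p_n > 0$ to get $Q^*(p_i) \ge Q^*(p_n)$, hence $\sum_{i \le n} Q^*(p_i) \ge n\, Q^*(p_n)$.

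\textbf{Light part.} This is where concavity does the work. For $i > n$ we have $0 < p_i \le p_n \le \ell$ (note $p_n \le \ell$ follows since $p$ has at least $n$ positive densities summing to $1$, forcing $p_n \le 1/n$, and $\ell \ge 1/n$ would not be guaranteed — so more carefully I should only invoke concavity of $Q^*$ on all of $[0,1]$, which \cref{res:Q*-properties} does give). Since $Q^*$ is concave on $[0,1]$ and $Q^*(0) = 1 + P_d(0) = 0$, the function $x \mapsto Q^*(x)/x$ is non-increasing on $(0,1]$; therefore for $0 < p_i \le p_n$ we get $Q^*(p_i) \ge \frac{Q^*(p_n)}{p_n}\, p_i$. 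Summing over $i > n$ yields $\sum_{i>n} Q^*(p_i) \ge \frac{Q^*(p_n)}{p_n} \sum_{i>n} p_i = \frac{\mu}{p_n} Q^*(p_n)$.

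\textbf{Combine.} Adding the two bounds gives $\sum_i Q^*(p_i) \ge \bigl(n + \tfrac{\mu}{p_n}\bigr) Q^*(p_n)$, as desired. The only subtlety — and the main thing to get right — is justifying the "$Q^*(x)/x$ non-increasing'' step cleanly: it needs both concavity of $Q^*$ on $[0,1]$ and the boundary value $Q^*(0)=0$, both of which are in hand from \cref{res:Q*-properties} and the construction $P_d(0)=-1$. I would phrase it as: for $0 < a \le b$, concavity gives $Q^*(a) = Q^*\!\bigl(\tfrac{a}{b} b + (1-\tfrac{a}{b})\cdot 0\bigr) \ge \tfrac{a}{b} Q^*(b) + (1-\tfrac{a}{b}) Q^*(0) = \tfrac{a}{b} Q^*(b)$, which is exactly the inequality used with $a = p_i$, $b = p_n$. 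Everything else is bookkeeping.
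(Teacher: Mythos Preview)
Your proposal is correct and essentially identical to the paper's proof: split at index $n$, use monotonicity of $Q^*$ for $i\le n$, and for $i>n$ use concavity on $[0,1]$ together with $Q^*(0)=0$ via the inequality $Q^*(p_i)=Q^*\bigl(\tfrac{p_i}{p_n}p_n+(1-\tfrac{p_i}{p_n})\cdot 0\bigr)\ge \tfrac{p_i}{p_n}Q^*(p_n)$. The brief detour about whether $p_n\le\ell$ is unnecessary (you correctly self-correct to use concavity on all of $[0,1]$), and the case $p_i=0$ for some $i>n$ is harmless since both sides of the per-term inequality vanish.
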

\begin{proof}
    Recall that $Q^*$ is concave and non-decreasing by \cref{res:Q*-properties}. Since $Q^*$ is
    non-decreasing and $p_i \ge p_n$ for any $i \le n$, we have $Q^*(p_i) \ge Q^*(p_n)$ for any $i
    \le n$. On the other hand, for any $i > n$, we have $p_i \le p_n$ and then, using the concavity
    of $Q^*$,
    \[
        Q^*(p_i)
        = Q^*\left( \frac{p_i}{p_n} \cdot p_n + \left(1-\frac{p_i}{p_n}\right) \cdot 0 \right)
        \ge \frac{p_i}{p_n} \cdot Q^*(p_n) + \left(1-\frac{p_i}{p_n}\right) \cdot Q^*(0)
        = \frac{p_i}{p_n} \cdot Q^*(p_n) \,,
    \]
    the last equality since $Q^*(0) = 1 + P_d(0) = 0$. We conclude that
    \[
        \sum_i Q^*(p_i)
        \ge \sum_{i \le n} Q^*(p_n) + \sum_{i > n} \frac{p_i}{p_n} \cdot Q^*(p_n)
        = \left(n + \frac{\mu}{p_n}\right) Q^*(p_n) \,. \qedhere
    \]
\end{proof}

We can now transform the problem of lower bounding $\sum_i Q^*(p_i)$ into the problem of lower
bounding another function $\Phi(\lambda)$ defined as:

\begin{definition}
For $\lambda \in (0,1]$, define
\[
  \Phi(\lambda) \define \left(1 + \frac{\epsilon}{\lambda \ell n}\right) Q^*(\lambda \ell) \,.
\]
\end{definition}

We will show a lower bound of $\Phi(\lambda) \geq 1+3\epsilon/4$ below, in
\cref{section:better-bound-lb}. Here we show why it works:

\begin{lemma}
    \label{res:Q*_bound_from_phi}
    Assume \cref{constraint:d_log}. Suppose $\Phi(\lambda) \geq 1 + 3\epsilon/4$ for all $\lambda
    \in (0,1)$. Then for any $k \leq n$ and any $p$ that is $\epsilon$-far from having $|\supp(p)| \leq k$,
    \[
        \sum_i Q^*(p_i) > (1+3\epsilon/4) k \,.
    \]
\end{lemma}
\begin{proof}
    Since $p$ is sorted and $\epsilon$-far from having $|\supp(p)| \le k$, we have $p_k > 0$ and
    $\sum_{i > k} p_i > \epsilon$. We also have $p_k \le 1/k$ since $p$ is a probability
    distribution. We consider two cases: $p_k \ge \ell$ and $p_k < \ell$.

    First, suppose $p_k \ge \ell$. Then $Q^*(p_k) = 1-\delta$ by definition, so by
    \cref{res:Q*-worst-case},
    \[
        \sum_i Q^*(p_i)
        > \left( k + \frac{\epsilon}{1/k} \right) (1-\delta)
        = (1+\epsilon) (1-\delta) k
        \ge (1+3\epsilon/4) k \,,
    \]
    the last inequality since $\delta \le \epsilon/20 < \epsilon/8$ by \cref{res:d_log}. Otherwise,
    if $p_k = \lambda \ell$ for some $\lambda \in (0, 1)$, then \cref{res:Q*-worst-case} gives
    \[
        \sum_i Q^*(p_i)
        > \left( k + \frac{\epsilon}{\lambda \ell} \right) Q^*(\lambda \ell)
        = k \left(1 + \frac{\epsilon}{\lambda k \ell} \right) Q^*(\lambda \ell)
        \geq \Phi(\lambda) \cdot k
        \ge (1+3\epsilon/4) k \,. \qedhere
    \]
\end{proof}

\subsubsection{Lower bound on \texorpdfstring{$\Phi(\lambda)$}{Phi(lambda)}}
\label{section:better-bound-lb}

Our goal is to prove $\Phi(\lambda) \geq 1+3\epsilon/4$. \cref{fig:phi} shows a plot of
$\Phi(\lambda)$ compared to $1 + \epsilon$ for certain example settings of parameters.

\begin{figure}[h!]
    \centering
    \begin{subfigure}{0.46\textwidth}
        \centering
        \includegraphics[width=\textwidth]{"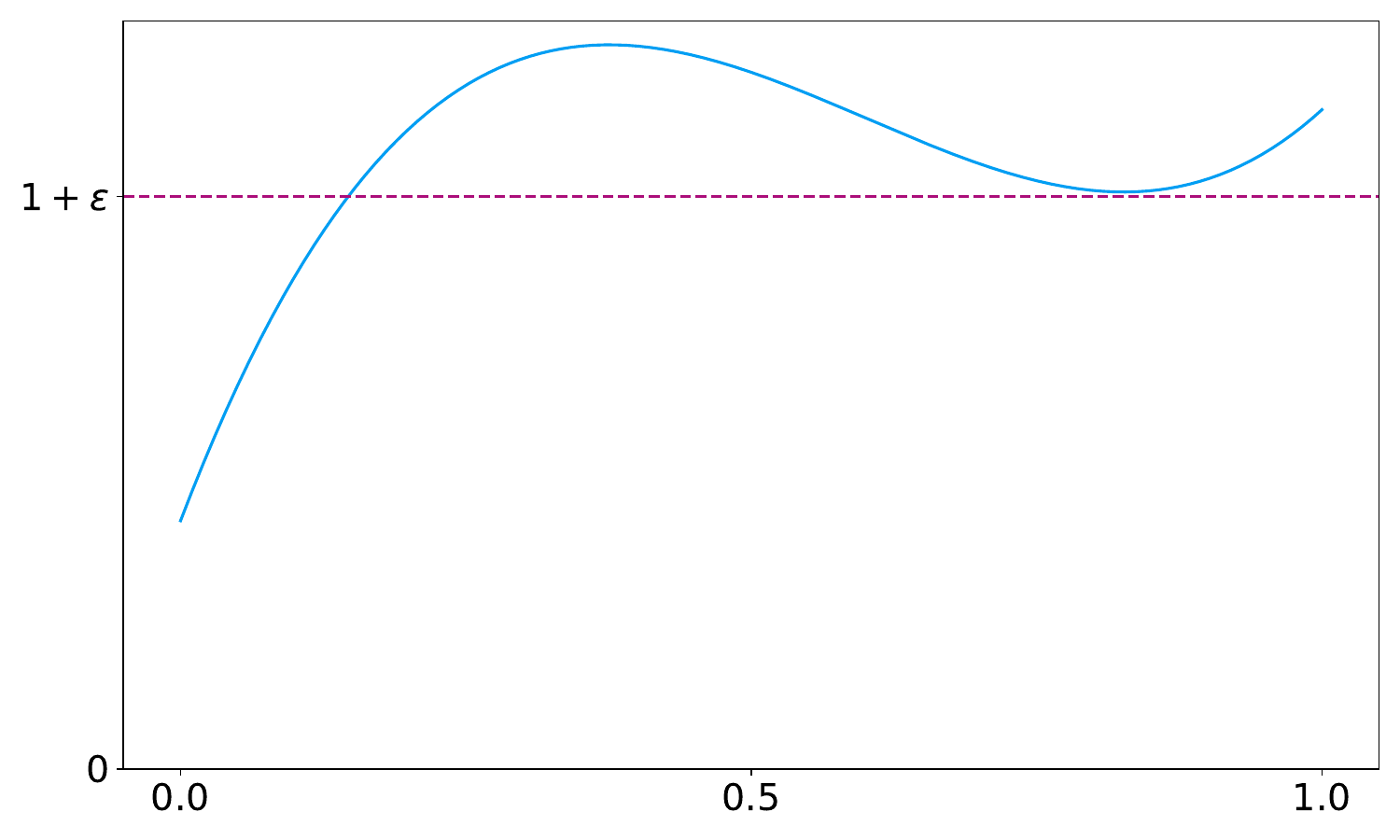"}
        \caption{$\Phi(\lambda)$ when $d$ is too small.}
        \label{fig:phi_bad}
    \end{subfigure}
    \hspace{0.05\textwidth} 
    \begin{subfigure}{0.46\textwidth}
        \centering
        \includegraphics[width=\textwidth]{"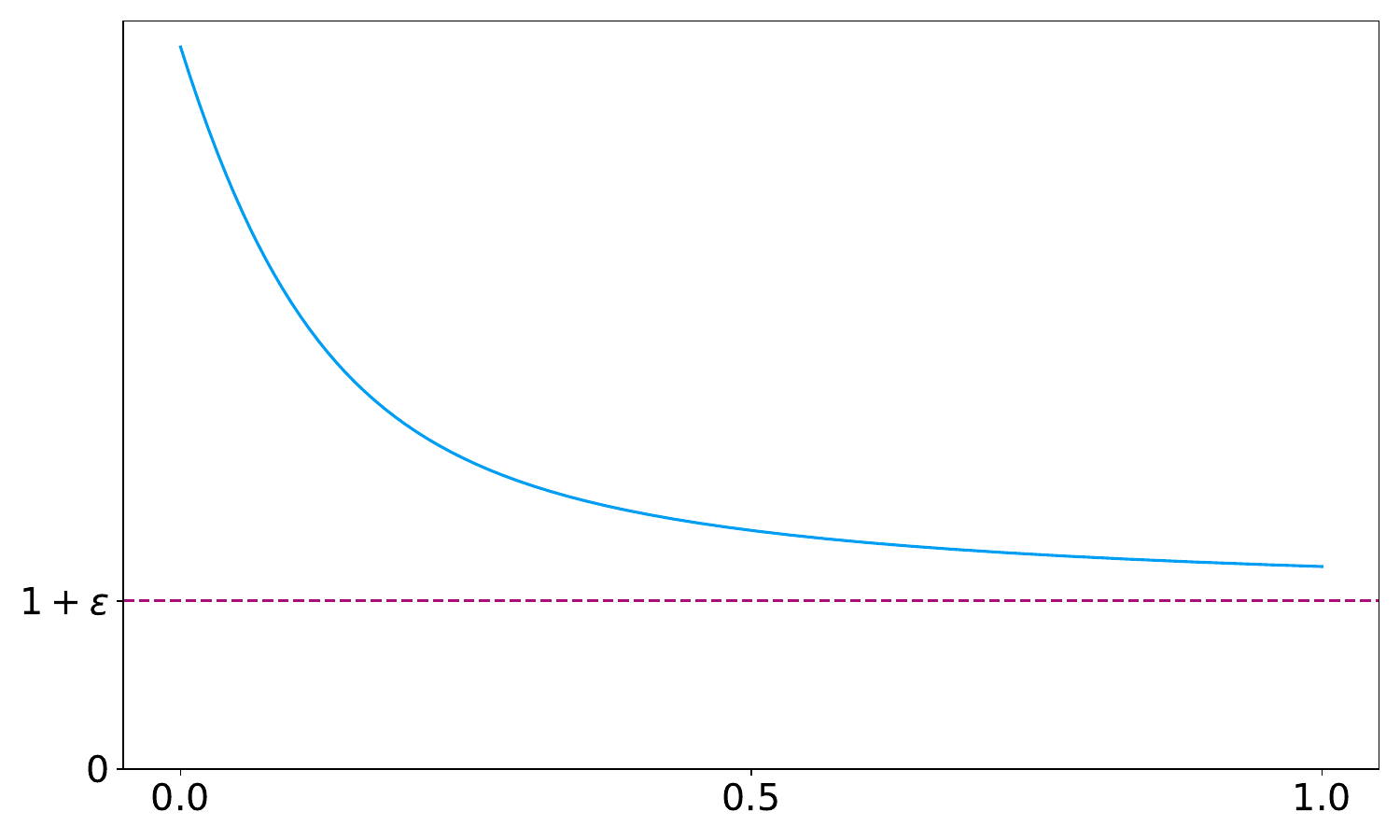"}
        \caption{$\Phi(\lambda)$ under \cref{constraint:d_log,constraint:better_ell}.}
        \label{fig:phi_good}
    \end{subfigure}
    \caption{The function $\Phi(\lambda)$ with bad and good parameters.}
    \label{fig:phi}
\end{figure}

Note that $\Phi(\lambda) = \left(1 + \frac{\epsilon}{\lambda \ell n}\right) Q^*(\lambda \ell)$ is
undefined on $\lambda = 0$. We will proceed in two steps:
\begin{enumerate}
\item Show that $\Phi(\lambda) \ge 1+3\epsilon/4$ at $\lambda = 1$ and in the limit $\lambda \to
    0^+$;
\item Show that $\Phi(\lambda) \ge 1+3\epsilon/4$ within $\lambda \in (0,1)$ by studying the
    derivative $\Phi'(\lambda)$.
\end{enumerate}

The case $\lambda = 1$ is the easiest to handle:

\begin{proposition}
    \label{res:phi-at-1}
    Assume \cref{constraint:d_log,constraint:better_ell}. Then for all $n \in \bN$ and $\epsilon \in
    (0, 1)$, $\Phi(1) \ge 1 + 3\epsilon/4$.
\end{proposition}
\begin{proof}
    Note that, by \cref{res:d_log}, we have $\delta \le \epsilon/20 < \epsilon/8$. When $\lambda =
    1$, we have $Q^*(\lambda \ell) = 1-\delta$ by definition and hence, using the inequality $\ell
    \le \tfrac{\epsilon}{n} \log(1/\epsilon)$ obtained from \cref{constraint:better_ell},
    \[
        \Phi(1)
        = \left(1 + \frac{\epsilon}{\ell n}\right) Q^*(\ell)
        \ge \left(1 + \frac{1}{\log(1/\epsilon)}\right) (1-\delta)
        > (1 + \epsilon) (1-\delta)
        \ge 1 + 3\epsilon/4 \,. \qedhere
    \]
\end{proof}

To reason about $\lim_{\lambda \to 0^+} \Phi(\lambda)$, we will use L'Hôpital's rule, for which we
will need bounds on the derivative $T_d'(x)$ of the Chebyshev polynomials. We state the following
bound, whose proof proceeds by a direct calculation and is deferred to \cref{section:td-prime}.

\begin{restatable}{proposition}{restdprimelb}
    \label{res:td-prime-lb}\RestateRemark
    For any $d \in \bN$ and $\gamma \in (0,1)$, we have
    \[
        T_d'(1+\gamma) \ge \frac{d}{\sqrt{3\gamma}} \left(T_d(1+\gamma) - 1\right) \,.
    \]
\end{restatable}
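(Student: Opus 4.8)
The plan is to argue directly from the closed-form expression for the Chebyshev polynomials in \cref{fact:chebyshev-formula}, combined with the monotonicity of $T_d$ on $[1,\infty)$ from \cref{res:chebyshev-convex-tail}. Fix $d \in \bN$, write $x = 1+\gamma$, and set $u \define x + \sqrt{x^2-1} \ge 1$. Since $(x+\sqrt{x^2-1})(x-\sqrt{x^2-1}) = 1$, we have $x-\sqrt{x^2-1} = 1/u$, so \cref{fact:chebyshev-formula} reads $T_d(x) = \tfrac12(u^d + u^{-d})$. Differentiating the closed form (which is smooth on $(1,\infty)$ and agrees there with the polynomial $T_d$), and using $\frac{du}{dx} = 1 + \frac{x}{\sqrt{x^2-1}} = \frac{u}{\sqrt{x^2-1}}$, I get the compact formula
\[
  T_d'(x) = \frac{d}{2\sqrt{x^2-1}}\left(u^d - u^{-d}\right) \,.
\]

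Next I would factor both $T_d(x)-1$ and $T_d'(x)$ through the quantities $u^{d/2}\pm u^{-d/2}$. Because $u^{d/2}u^{-d/2} = 1$, expanding a square gives $T_d(x)-1 = \tfrac12(u^{d/2}-u^{-d/2})^2$, while $u^d - u^{-d} = (u^{d/2}-u^{-d/2})(u^{d/2}+u^{-d/2})$. For $\gamma > 0$ we have $x > 1$, hence $u > 1$, so $T_d(x) > T_d(1) = 1$ by \cref{res:chebyshev-convex-tail}; in particular $T_d(x) - 1 > 0$, so dividing is legitimate and does not flip the inequality. Dividing the two identities,
\[
  \frac{T_d'(x)}{T_d(x)-1}
  = \frac{d}{\sqrt{x^2-1}} \cdot \frac{u^{d/2}+u^{-d/2}}{u^{d/2}-u^{-d/2}}
  \ge \frac{d}{\sqrt{x^2-1}} \,,
\]
the last step because $u^{-d/2} > 0$ makes the ratio at least $1$.

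Finally I would substitute back $\sqrt{x^2-1} = \sqrt{(1+\gamma)^2-1} = \sqrt{2\gamma + \gamma^2}$ and use the hypothesis $\gamma \le 1$, which gives $2\gamma+\gamma^2 \le 3\gamma$ and hence $\sqrt{x^2-1} \le \sqrt{3\gamma}$. Therefore $\frac{d}{\sqrt{x^2-1}} \ge \frac{d}{\sqrt{3\gamma}}$, and multiplying the previous display through by $T_d(1+\gamma)-1 > 0$ yields exactly $T_d'(1+\gamma) \ge \frac{d}{\sqrt{3\gamma}}\left(T_d(1+\gamma)-1\right)$.

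I do not expect a substantial obstacle. The one mildly non-obvious ingredient is noticing that both $T_d(x)-1$ and $u^d-u^{-d}$ factor through $u^{d/2}\pm u^{-d/2}$, which is what isolates the clean lower bound $d/\sqrt{x^2-1}$; after that, the only points needing care are (i) that $T_d(1+\gamma)-1$ is strictly positive, so dividing by it preserves the inequality — here \cref{res:chebyshev-convex-tail} (equivalently, $T_d(1)=1$ together with a positive leading coefficient) does the job — and (ii) that the estimate $2\gamma+\gamma^2 \le 3\gamma$ is precisely where the restriction $\gamma \le 1$ enters. An essentially identical route substitutes $1+\gamma = \cosh\theta$ with $\theta > 0$, under which $T_d(1+\gamma) = \cosh(d\theta)$ and $T_d'(1+\gamma) = d\sinh(d\theta)/\sinh\theta$, and the algebraic factorizations above become the half-angle identities $\cosh(d\theta)-1 = 2\sinh^2(d\theta/2)$ and $\sinh(d\theta) = 2\sinh(d\theta/2)\cosh(d\theta/2)$; I would keep whichever reads more transparently in the final write-up.
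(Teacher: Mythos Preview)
Your proof is correct and follows essentially the same approach as the paper: differentiate the closed-form expression to obtain $T_d'(x) = \frac{d}{2\sqrt{x^2-1}}(u^d - u^{-d})$, then combine the bound $\sqrt{x^2-1} \le \sqrt{3\gamma}$ (from $\gamma \le 1$) with $u^{-d} \le 1$. The only cosmetic difference is that the paper writes $\tfrac12(u^d - u^{-d}) = T_d(1+\gamma) - u^{-d} \ge T_d(1+\gamma) - 1$ directly, whereas you reach the same inequality via the factorization through $u^{d/2}\pm u^{-d/2}$; both routes are equivalent and equally short.
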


We may now complete Step 1.

\begin{proposition}
    \label{res:phi-limit-0}
    Assume \cref{constraint:d_log,constraint:better_ell}. Then the limit $\lim_{\lambda \to 0^+}
    \Phi(\lambda)$ exists, is finite, and satisfies
    \begin{equation}
        \label{eq:phi-lb}
        \lim_{\lambda \to 0^+} \Phi(\lambda) \ge 2 \,.
    \end{equation}
\end{proposition}
\begin{proof}
    Recall from \cref{eq:psi-gamma} that $Q^*(\lambda \ell) = 1 + P_d(\lambda \ell) = 1 -
    \delta T_d(1 + \gamma)$. Then 
    \begin{align*}
        \lim_{\lambda \to 0^+} \Phi(\lambda)
        &= \lim_{\lambda \to 0^+}
            \left(1 + \frac{\epsilon}{\lambda \ell n}\right)
            \left(1 - \delta T_d(1 + \gamma)\right) \\
        &= \lim_{\lambda \to 0^+} \left[ 1 - \delta T_d(1 + \gamma) \right]
            + \frac{\epsilon}{\ell n} \cdot \lim_{\lambda \to 0^+} \left[
                \frac{1 - \delta T_d(1 + \gamma)}{\lambda} \right] \,.
    \end{align*}
    Since $\gamma = \frac{2\alpha}{1-\alpha}(1-\lambda)$, and by the definition of $\delta$, the
    first limit evaluates to
    \[
        1 - \delta T_d\left(1 + \frac{2\alpha}{1-\alpha}\right) = 1 - \delta \cdot \frac{1}{\delta}
        = 0 \,.
    \]
    As for the second limit, note that both the numerator (for the same reason) and denominator go
    to $0$ as $\lambda \to 0^+$. Applying L'Hôpital's rule,
    \begin{align*}
        \lim_{\lambda \to 0^+} \Phi(\lambda)
        &= \frac{\epsilon}{\ell n} \cdot \lim_{\lambda \to 0^+}
        \frac{\frac{\odif}{\odif \lambda} (1 - \delta T_d(1 + \gamma))}
             {\frac{\odif}{\odif \lambda} \lambda}
        = -\frac{\delta \epsilon}{\ell n} \cdot \lim_{\lambda \to 0^+}
            \frac{\odif}{\odif \lambda} T_d\left(1 + \frac{2\alpha}{1-\alpha}(1-\lambda)\right) \\
        &= \frac{\delta \epsilon}{\ell n} \cdot \frac{2\alpha}{1-\alpha} \cdot
            \lim_{\lambda \to 0^+} T_d'\left(1 + \frac{2\alpha}{1-\alpha}(1-\lambda)\right)
        = \frac{\delta \epsilon}{\ell n} \cdot \frac{2\alpha}{1-\alpha} \cdot
            T_d'\left(1 + \frac{2\alpha}{1-\alpha}\right) \,,
    \end{align*}
    where the last equality holds since $T_d'(x)$ is a polynomial and hence continuous, and this
    also establishes the existence and finiteness of the limit. Using \cref{res:td-prime-lb} and the
    definition of~$\delta$,
    \begin{align*}
        \lim_{\lambda \to 0^+} \Phi(\lambda)
        &\ge \frac{\delta \epsilon}{\ell n} \cdot \frac{2\alpha}{1-\alpha}
            \cdot \frac{d}{\sqrt{3 \cdot \frac{2\alpha}{1-\alpha}}}
            \cdot \left(\frac{1}{\delta} - 1\right)
        = \sqrt{\frac{2\alpha}{1-\alpha}} \cdot
            \frac{\delta \epsilon d}{\ell n \sqrt{3}} \cdot \left(\frac{1}{\delta} - 1\right) \,.
    \end{align*}
    Since $\delta \le 1/2$ by \cref{res:d_log}, we have $\frac{1}{\delta} - 1 \ge
    \frac{1}{2\delta}$. Using \cref{constraint:d_log,constraint:better_ell} and the definition
    $\alpha = \ell/r$, we obtain
    \[
        \lim_{\lambda \to 0^+} \Phi(\lambda)
        \ge \sqrt{\frac{2\alpha}{1-\alpha}} \cdot
            \frac{\delta \epsilon}{n\sqrt{3}}
            \cdot \frac{n}{C_\ell \epsilon \log(1/\epsilon)}
            \cdot C_d \sqrt{\frac{1-\alpha}{2\alpha}} \log(1/\epsilon)
            \cdot \frac{1}{2\delta}
        \ge 2 \,. \qedhere
    \]
\end{proof}

\begin{remark}
\label{remark:no-better-parameter}
The last step in the above proof shows that \cref{constraint:better_ell} is the best possible
relaxation of our constraint on $\ell$ for the present proof strategy: if we had $\ell \gg
\tfrac{\epsilon}{n} \log(1/\epsilon)$, then we would only have obtained a $o(1)$ lower bound on
$\lim_{\lambda \to 0^+} \Phi(\lambda)$, but we require a $1 + 3\epsilon/4$ lower bound.
\end{remark}

Thanks to \cref{res:phi-limit-0}, we hereafter consider the continuous extension $\Phi : [0,1] \to
\bR$, \ie we define $\Phi(0) \define \lim_{\lambda \to 0^+} \Phi(\lambda)$.

The next step is to show that $\Phi$ satisfies a certain differential inequality, which will help us
conclude that $\Phi(\lambda) \ge 1 + 3\epsilon/4$ at any critical points where $\Phi'(\lambda) = 0$.
It is convenient to define $L \define \tfrac{\ell n}{\epsilon}$, so that
\[
  \Phi(\lambda) = \left(1 + \frac{1}{L \lambda}\right) (1 - \delta T_d(1+\gamma)) \,.
\]

\begin{lemma}
    \label{res:phi-differential-ineq}
    Assume \cref{constraint:d_log}. Define $A \define \sqrt{\tfrac{1}{3}} \cdot C_d
\log\left(\tfrac{1}{\epsilon}\right)$. Then for all $\lambda \in (0, 1)$,
    \[
        \Phi'(\lambda)
        \ge -\Phi(\lambda)
            \left( A + \frac{1}{\lambda (L\lambda + 1)} \right)
            + (1-\delta)A \left( 1 + \frac{1}{L\lambda} \right) \,.
    \]
\end{lemma}
\begin{proof}
    Recall that $\gamma = \frac{2\alpha}{1-\alpha}(1-\lambda)$ so $\tfrac{\odif}{\odif
\lambda}\gamma = -\frac{2\alpha}{1-\alpha}$. Then
    \begin{align*}
        \Phi'(\lambda) 
        &= \frac{\odif}{\odif \lambda} \left[
            \left(1 + \frac{1}{L \lambda}\right)
            \left(1 - \delta T_d\left(1 + \gamma\right)\right)
        \right] \\
        &= -\frac{1}{L \lambda^2}
                \left(1 - \delta T_d\left(1 + \gamma \right)\right)
            + \left(1 + \frac{1}{L \lambda}\right) \cdot (-\delta) \cdot
                T_d'\left(1 + \gamma \right) \cdot
                \left( -\frac{2\alpha}{1-\alpha} \right) \\
        &= \frac{-\frac{1}{L \lambda^2}}{1 + \frac{1}{L\lambda}} \Phi(\lambda)
            + \delta \cdot \frac{2\alpha}{1-\alpha} \cdot
            \left(1 + \frac{1}{L \lambda}\right) \cdot T_d'(1+\gamma) \\
        &= -\frac{1}{\lambda (L\lambda+1)} \Phi(\lambda)
            + \delta \cdot \frac{2\alpha}{1-\alpha} \cdot
            \left(1 + \frac{1}{L \lambda}\right) \cdot T_d'(1+\gamma) \,.
    \end{align*}
    Using \cref{res:td-prime-lb,constraint:d_log} and recalling that $T_d(1+\gamma) \ge 1$,
    \begin{align*}
        \Phi'(\lambda)
        &\ge -\frac{1}{\lambda (L\lambda+1)} \Phi(\lambda)
            + \delta \cdot \frac{2\alpha}{1-\alpha} \cdot
            \left(1 + \frac{1}{L \lambda}\right) \cdot
                \frac{d}{\sqrt{3 \cdot \frac{2\alpha}{1-\alpha}(1-\lambda)}}
                \left(T_d(1+\gamma) - 1\right) \\
        &\ge -\frac{1}{\lambda (L\lambda+1)} \Phi(\lambda)
            + \delta \cdot \sqrt{\frac{2\alpha}{1-\alpha}} \cdot
            \left(1 + \frac{1}{L \lambda}\right) \cdot
                \frac{C_d \sqrt{\frac{1-\alpha}{2\alpha}} \log \left(\frac{1}{\epsilon} \right)}{\sqrt{3}}
                \left(T_d(1+\gamma) - 1\right) \\
        &= -\frac{1}{\lambda (L\lambda+1)} \Phi(\lambda)
            + \delta \cdot
            \left(1 + \frac{1}{L \lambda}\right) \cdot
                \frac{C_d \log \left(\frac{1}{\epsilon} \right)}{\sqrt{3}}
                \left(T_d(1+\gamma) - 1\right) \\
        &= -\frac{1}{\lambda (L\lambda+1)} \Phi(\lambda)
            + A \left(1 + \frac{1}{L \lambda}\right) \cdot \delta T_d(1+\gamma)
            - \delta A \left(1 + \frac{1}{L \lambda}\right) \,.
    \end{align*}
    Now, we may use the definition of $\Phi$ to rewrite the middle term:
    \[
        \Phi(\lambda) =
        \left(1 + \frac{1}{L \lambda}\right) \left(1 - \delta T_d(1 + \gamma)\right)
        \implies
        \delta T_d(1 + \gamma)
        = 1 - \frac{\Phi(\lambda)}{1 + \frac{1}{L \lambda}} \,.
    \]
    Therefore
    \begin{align*}
        \Phi'(\lambda)
        &\ge -\frac{1}{\lambda (L\lambda+1)} \Phi(\lambda)
            + A \left(1 + \frac{1}{L \lambda}\right)
                \left( 1 - \frac{\Phi(\lambda)}{1 + \frac{1}{L \lambda}} \right)
            - \delta A \left(1 + \frac{1}{L \lambda}\right) \\
        &= -\frac{1}{\lambda (L\lambda+1)} \Phi(\lambda)
            + A \left( 1 + \frac{1}{L\lambda} - \Phi(\lambda) \right)
            - \delta A \left(1 + \frac{1}{L \lambda}\right) \\
        &= -\Phi(\lambda) \left( A + \frac{1}{\lambda (L\lambda+1)} \right)
            + (1-\delta) A \left( 1 + \frac{1}{L\lambda} \right) \,. \qedhere
    \end{align*}
\end{proof}

We may now conclude the argument.

\begin{lemma}
    \label{res:phi_bound}
Assume \cref{constraint:d_log,constraint:better_ell}. Then for all $n \in \bN$ and $\epsilon \leq
1/3$,
\[
  \forall \lambda \in [0,1] \;:\; \Phi(\lambda) \ge 1 + 3\epsilon/4 \,.
\]
\end{lemma}
\begin{proof}
    Note that \cref{res:phi-at-1,res:phi-limit-0} already give the result for $\lambda = 0$ and
    $\lambda = 1$, and moreover $\Phi$ is continuous on $[0,1]$ (by
    \cref{res:Q*-properties}) and differentiable in $(0,1)$ (by the differentiability of $Q^*$ in
    $(0, \ell)$).
    Thus it remains
    to check points $\lambda \in (0,1)$ for which $\Phi'(\lambda) = 0$, if any exist. Suppose
    $\lambda \in (0,1)$ satisfies $\Phi'(\lambda) = 0$. Then \cref{res:phi-differential-ineq}
    implies that
    \[
        0 = \Phi'(\lambda)
        \ge -\Phi(\lambda)
            \left( A + \frac{1}{\lambda (L\lambda + 1)} \right)
            + (1-\delta)A \left( 1 + \frac{1}{L\lambda} \right) \,,
    \]
    where again $A \define \sqrt{\frac{1}{3}} \cdot C_d \log\left(\frac{1}{\epsilon}\right)$.
    Rearranging, we obtain
    \[
        \Phi(\lambda)
        \ge \frac{(1-\delta)A \left( 1 + \frac{1}{L\lambda} \right)}
                 {\left( A + \frac{1}{\lambda (L\lambda + 1)} \right)}
        = \frac{(1-\delta)A \left( \frac{L\lambda + 1}{L\lambda} \right)}
                 {\left(\frac{A \lambda (L\lambda + 1) + 1}{\lambda (L\lambda + 1)} \right)}
        = \frac{(1-\delta) (AL^2\lambda^2 + 2AL\lambda + A)}
                 {AL^2\lambda^2 + AL\lambda + L} \,.
    \]
    Take $K \define A/L \geq \tfrac{C_d}{\sqrt{3}C_\ell}$ which satisfies $K \geq 4$ 
under \cref{constraint:better_ell}. Then the right-hand side is
\begin{align*}
  (1-\delta)\left(1 + \frac{KL^2\lambda + (K-1)L}{KL^3\lambda^2 + KL^2\lambda + L}\right)
  &= (1-\delta)\left(1 + \frac{KL\lambda + (K-1)}{KL^2\lambda^2 + KL\lambda + 1}\right) \,.
\end{align*}
If $L\lambda < 1/K$ then, by preserving only the term $(K-1)$ in the numerator and using the
bound $K L^2 \lambda^2 \le KL\lambda \le 1$ in the denominator, this is at least
\[
  (1-\delta)\left(1 + \frac{K-1}{3}\right) \geq (1-\delta) \cdot 2 \ge 1 + 3\epsilon/4 \,,
\]
the last inequality since $\delta \le \epsilon/20$ by \cref{res:d_log}. Otherwise, by preserving
only the term $KL\lambda$ in the numerator and upper bounding each term in the denominator by the
maximum of the three, it is at least
\begin{align*}
  (1-\delta)\left(1 + KL\lambda \cdot
    \min\left\{ \frac{1}{3KL^2\lambda^2}, \frac{1}{3KL\lambda}, \frac{1}{3} \right\} \right)
&\geq 
  (1-\delta)\left(1 + \min\left\{ \frac{1}{3L}, \frac{1}{3} \right\} \right) \\
&\ge (1-\delta)(1+\epsilon)
\ge 1 + 3\epsilon/4 \,,
\end{align*}
where the first inequality holds since $3KL\lambda \ge 3$ and $\lambda \le 1$, and the
second inequality holds since $L \leq C_\ell \log(1/\epsilon) \leq
\frac{1}{3\epsilon}$ by \cref{constraint:better_ell}.
\end{proof}

We now have all the ingredients to conclude the soundness of the improved tester, and hence finish
the proof of the main theorem.

\begin{proof}[Proof of Lemma~\ref{res:soundness_improved}]
    Combine \cref{eq:mean-Q,res:Q*-properties,res:Q*_bound_from_phi,res:phi_bound} to obtain
    \[
        \Ex{\bm{\widehat S}}
        = \sum_{i \in \bN} Q(p_i)
        \ge \sum_{i \in \bN} Q^*(p_i)
        > (1 + 3\epsilon/4) k \,. \qedhere
    \]
\end{proof}

\section{An Effective Lower Bound on Support Size}
\label{section:good-lower-bound}

To prove \cref{res:intro-good-lb} from the introduction, which we restate here for convenience,
we need to essentially do a binary search for the correct parameters, in order to control the
variance (since the variance depends on the parameter $n$).

\begin{corollary}
    There exists an algorithm $A$ which, given inputs $n \in \bN$, $\epsilon \in (0,1/3)$ and sample
    access to unknown distribution $p$ over $\bN$, draws $O\left(\frac{n}{\epsilon \log n} \cdot
    \min\left\{ \log(1/\epsilon), \log n \right\} \right)$ independent samples from $p$ and outputs
    a number $\widehat{\bm S}$ which satisfies (with probability at least $3/4$ over the samples)
\[
  \min\{ \eff_\epsilon(p), n\} \leq \widehat{\bm{S}} \leq (1+\epsilon)|\supp(p)| \,.
\]
\end{corollary}
\begin{proof}
The algorithm proceeds as follows:
\begin{enumerate}
\item For $i=0$ to $\log n$, perform the following.
\begin{enumerate}
\item Set $n_i \define n/2^i$ and $\delta_i = \tfrac{1}{4 \cdot 2^{i+1}}$.
\item If $n_i, \epsilon$ do not satisfy \cref{assumption:epsilon}, use $O(m(n_i, \epsilon) \cdot
    \log(1/\delta_i))$ samples to obtain an estimate $\widehat{\bm{S}_i}$ satisfying the conditions
    of \cref{res:naive-estimator} with probability at least $1-\delta_i$ (using the median trick to
    boost the error). Output $\widehat{\bm{S}_i}$ and terminate.
\item Otherwise, if $n_i, \epsilon$ satisfy \cref{assumption:epsilon}, use $O(m(n_i, \epsilon) \cdot
    \log(1/\delta_i))$ samples obtain an estimate of the test statistic $\widehat{\bm{S}_i}$ from
    \cref{def:test-statistic}, which (again using the median trick) satisfies the conditions of
    \cref{res:poisson-correctness} with probability at least $1-\delta_i$. If $\widehat{\bm{S}_i}
    \geq n_{i+1}$ output $\widehat{\bm{S}_i}$ and terminate. Otherwise continue.
\end{enumerate}
\item If the algorithm did not terminate in $\log n$ steps, then output $1$.
\end{enumerate}
The number of samples used by the algorithm is at most (for some constant $C > 0$):
\begin{align*}
&C \cdot \sum_{i=0}^{\log n} \frac{n_i}{\log(n_i)} \log\left(\frac{1}{\delta_i}\right)
\frac{1}{\epsilon} \cdot \min\{ \log(1/\epsilon), \log(n_i) \} \\
&\qquad= C \cdot \frac{1}{\epsilon}\sum_{i=0}^{\tfrac{1}{2} \log n} \frac{n/2^i}{\log(n)-i}
    (i + 3) \cdot \min\{ \log(1/\epsilon),
\log(n) - i \} \\
&\qquad\qquad+ C \cdot \frac{1}{\epsilon}\sum_{i=1+\tfrac{1}{2}\log n}^{\log n}
    \frac{n/2^i}{\log(n)-i} (i + 3) \cdot \min\{ \log(1/\epsilon),
\log(n) - i \} \\
&\qquad\leq C \cdot \frac{2n}{\epsilon \log n} \min\{\log(1/\epsilon), \log(n)\} \sum_{i=0}^{\tfrac{1}{2}
\log n} \frac{1}{2^i} (i + 3) + 2C \cdot \frac{1}{\epsilon}\sqrt n \log n \\
&\qquad= O\left( \frac{n}{\epsilon \log n} \min\{ \log(1/\epsilon), \log n \} \right).
\end{align*}

The probability that any of the estimates $\widehat{\bm{S}_i}$ fails to satisfy the conditions in
\cref{res:poisson-correctness} or \cref{res:naive-estimator} (whichever corresponds to the estimator
used in the $i^{\text{th}}$ step) is at most $\sum_{i=0}^{\log n} \delta_i \leq \frac{1}{4}$.
Assuming that every $\widehat{\bm{S}_i}$ satisfies its corresponding condition, we separately prove
the upper and lower bounds on the output of the algorithm.

\begin{claim}
    The output satisfies $\widehat{\bm{S}} \le (1+\epsilon)|\supp(p)|$.
\end{claim}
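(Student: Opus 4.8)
The plan is to work under the conditioning already established above, namely that every estimate $\widehat{\bm{S}_i}$ produced during the run satisfies the guarantee of whichever estimator produced it (\cref{res:naive-estimator} in Step (b), or \cref{res:poisson-correctness} together with the median-boosting in Step (c)), an event of probability at least $3/4$. Given this, I would case-split on how the algorithm halts: (i) it falls through to Step 2 and outputs $1$; (ii) it halts inside Step (b) of some iteration $i$, outputting the naive estimate $\widehat{\bm{S}_i}$; or (iii) it halts inside Step (c) of some iteration $i$ because $\widehat{\bm{S}_i} \ge n_{i+1}$. (If $|\supp(p)| = \infty$ the desired inequality is trivial, since the algorithm always outputs a finite number, so I may assume $|\supp(p)|$ is finite throughout.)

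Cases (i) and (ii) are immediate. In case (i), $p$ being a probability distribution forces $|\supp(p)| \ge 1$, so the output $\widehat{\bm S} = 1 \le (1+\epsilon)|\supp(p)|$. In case (ii), \cref{res:naive-estimator} applied with parameters $n_i, \epsilon$ gives $\widehat{\bm S} = \widehat{\bm{S}_i} \le |\supp(p)| \le (1+\epsilon)|\supp(p)|$ directly.

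The substantive case is (iii). Here \cref{res:poisson-correctness} is available because Step (c) is only reached when $n_i, \epsilon$ satisfy \cref{assumption:epsilon}; combined with the median trick it yields $\widehat{\bm{S}_i} < \Ex{\widehat{\bm{S}_i}} + \epsilon n_i/4 < (1+\epsilon/4)|\supp(p)| + \epsilon n_i/4$. The difficulty is that the additive slack $\epsilon n_i/4$ is measured against the parameter $n_i$, not against $|\supp(p)|$, so this bound alone does not close the argument. To remedy this I would use the termination test itself: since the algorithm halted at iteration $i$ we have $\widehat{\bm{S}_i} \ge n_{i+1} = n_i/2$, and chaining $n_i/2 \le \widehat{\bm{S}_i} < (1+\epsilon/4)|\supp(p)| + \epsilon n_i/4$, then rearranging and using $\epsilon < 1/3$, gives $n_i < \frac{4+\epsilon}{2-\epsilon}|\supp(p)| < 3|\supp(p)|$. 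Plugging this back in, $\epsilon n_i/4 < \tfrac{3\epsilon}{4}|\supp(p)|$, and hence $\widehat{\bm S} = \widehat{\bm{S}_i} < (1 + \epsilon/4 + 3\epsilon/4)|\supp(p)| = (1+\epsilon)|\supp(p)|$, as required.

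The one place that requires care, and hence the only real obstacle, is exactly the observation in case (iii): the concentration guarantee is only with respect to the current parameter $n_i$, so one must exploit the fact that the search halts only once $n_i$ has been driven down to within a constant factor of $|\supp(p)|$ — which is precisely what the threshold test $\widehat{\bm{S}_i} \ge n_i/2$ certifies. The rest is routine bookkeeping across the three ways the algorithm can terminate.
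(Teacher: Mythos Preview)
Your proof is correct and follows essentially the same approach as the paper: both case-split on the three ways the algorithm can terminate, and in the substantive case (iii) both combine the termination condition $\widehat{\bm{S}_i} \ge n_i/2$ with the upper bound from \cref{res:poisson-correctness} to conclude $n_i \le 3|\supp(p)|$, then plug this back in. The only cosmetic difference is that the paper phrases the derivation of $n_i \le 3|\supp(p)|$ as a proof by contradiction, whereas you chain the inequalities directly.
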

\begin{subproof}
    If the algorithm reaches its last step and outputs $1$, there is nothing to show. Similarly, if
    it outputs an estimate $\widehat{\bm{S}_i}$ coming from \cref{res:naive-estimator}, then
    $\widehat{\bm{S}_i} \le |\supp(p)|$ and we are done.

    The remaining case is that the algorithm terminates at some step $i^*$ by outputting an estimate
    $\widehat{\bm{S}_{i^*}}$ satisfying the conditions of \cref{res:poisson-correctness}. We claim
    that $n_{i^*} \le 3|\supp(p)|$. Indeed, assuming that $n_{i^*} > 3|\supp(p)|$ for a
    contradiction, \cref{res:poisson-correctness} implies that
    \begin{align*}
        \widehat{\bm{S}_i}
        &< (1 + \epsilon/4)|\supp(p)| + \frac{\epsilon n_{i^*}}{4}
        < (1 + \epsilon/4) \cdot \frac{n_{i^*}}{3} + \frac{\epsilon n_{i^*}}{4} \\
        &= n_{i^*} \left(\frac{1}{3} + \frac{\epsilon}{12} + \frac{\epsilon}{4}\right)
        = n_{i^*} \left(\frac{1}{3} + \frac{\epsilon}{3}\right)
        < \frac{n_{i^*}}{2}
        = n_{i^*+1} \,,
    \end{align*}
    contradicting the assumption that the algorithm terminates at step $i^*$. Hence $n_{i^*} \le
    3|\supp(p)|$. Thus \cref{res:poisson-correctness} gives
    \[
        \widehat{\bm{S}_i} < (1 + \epsilon/4)|\supp(p)| + \frac{\epsilon n_{i^*}}{4}
        \le (1 + \epsilon/4)|\supp(p)| + \frac{3\epsilon}{4} |\supp(p)|
        = (1 + \epsilon) |\supp(p)| \,.
        \qedhere
    \]
\end{subproof}

\begin{claim}
    The output satisfies $\widehat{\bm{S}} \ge \min\{\eff_\epsilon(p), n\}$.
\end{claim}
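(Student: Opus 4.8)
The plan is to condition on the good event that every estimate $\widehat{\bm{S}_i}$ produced by the algorithm satisfies its stated guarantee (from \cref{res:naive-estimator} in the naive branch, or from \cref{res:poisson-correctness} in the test-statistic branch), which holds with probability at least $3/4$ by the union bound above; the boosted concentration means that on the good event each test-statistic estimate obeys $\widehat{\bm{S}_i} > (1+3\epsilon/4)\min\{\eff_\epsilon(p)-1,n_i\} - \epsilon n_i/4$. Write $k \define \min\{\eff_\epsilon(p),n\}$. I would first record two structural facts. (i) The algorithm never reaches its final line ``output $1$'': the last loop iteration uses a parameter $n_i < 2$, for which \cref{assumption:epsilon} fails (since $\epsilon < 1/3 < 2^{-a} \le n_i^{-a}$), so that iteration enters the naive branch and terminates; hence the algorithm terminates at some step $i^*$ by outputting $\widehat{\bm{S}_{i^*}}$. (ii) Since the naive branch always terminates, every step $i < i^*$ must have used the test-statistic branch without meeting its termination condition, \ie found $\widehat{\bm{S}_i} < n_{i+1}$.

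The key step is a monotonicity lemma: \emph{if step $i$ used the test statistic and did not terminate, then $\eff_\epsilon(p) \le n_{i+1}$.} To prove it I would combine $\widehat{\bm{S}_i} < n_{i+1} = n_i/2$ with the \cref{res:poisson-correctness} lower bound, obtaining
\[
  (1+3\epsilon/4)\min\{\eff_\epsilon(p)-1, n_i\} < \left(\tfrac{1}{2}+\tfrac{\epsilon}{4}\right) n_i < \tfrac{1}{2}\,(1+3\epsilon/4)\,n_i \,,
\]
where the last step is just $\tfrac{1}{2}+\tfrac{\epsilon}{4} < \tfrac{1}{2}+\tfrac{3\epsilon}{8}$. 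Thus $\min\{\eff_\epsilon(p)-1,n_i\} < n_i/2 = n_{i+1} < n_i$, and since the minimum falls strictly below $n_i$ it must equal $\eff_\epsilon(p)-1$, giving $\eff_\epsilon(p)-1 < n_{i+1}$. Specializing to $i = i^*-1$ yields $\eff_\epsilon(p) \le n_{i^*}$ whenever $i^* \ge 1$.

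Finally I would split on the branch that terminated step $i^*$. If it was the naive branch, then $\widehat{\bm{S}_{i^*}} \ge \min\{\eff_\epsilon(p),n_{i^*}\}$, which is exactly $k$ when $i^*=0$ and is at least $\eff_\epsilon(p) \ge k$ when $i^* \ge 1$ by the lemma. If it was the test-statistic branch, then $\widehat{\bm{S}_{i^*}} \ge n_{i^*}/2$ and also $\widehat{\bm{S}_{i^*}} > (1+3\epsilon/4)\min\{\eff_\epsilon(p)-1,n_{i^*}\} - \epsilon n_{i^*}/4$: if $\eff_\epsilon(p) > n_{i^*}$ (which by the lemma forces $i^*=0$, $n_{i^*}=n$, $k=n$) the second bound gives $\widehat{\bm{S}_{i^*}} > (1+\epsilon/2)n > k$, and otherwise $k=\eff_\epsilon(p)$ and $\min\{\eff_\epsilon(p)-1,n_{i^*}\}=\eff_\epsilon(p)-1$, where if $\eff_\epsilon(p) \le n_{i^*}/2$ the first bound $\widehat{\bm{S}_{i^*}} \ge n_{i^*}/2$ already suffices. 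The main obstacle is the remaining sub-case $n_{i^*}/2 < \eff_\epsilon(p) \le n_{i^*}$ in the test-statistic branch: here the termination rule alone only gives $\widehat{\bm{S}_{i^*}} \ge n_{i^*}/2$, which can undershoot $\eff_\epsilon(p)$. The resolution is that the test-statistic branch is invoked only when \cref{assumption:epsilon} holds, which forces $n_{i^*} > \epsilon^{-1/a} = \epsilon^{-128}$, hence $n_{i^*}/2 > 4/\epsilon + 3$ (using $\epsilon < 1/3$), so $\eff_\epsilon(p) > 4/\epsilon + 3$; feeding this and $n_{i^*} < 2\eff_\epsilon(p)$ into the second bound,
\[
  \widehat{\bm{S}_{i^*}} > (1+3\epsilon/4)(\eff_\epsilon(p)-1) - \tfrac{\epsilon}{2}\,\eff_\epsilon(p) = \left(1+\tfrac{\epsilon}{4}\right)\eff_\epsilon(p) - 1 - \tfrac{3\epsilon}{4} \ge \eff_\epsilon(p) = k \,,
\]
which finishes the proof. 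I expect the bookkeeping in the case analysis (and the observation that \cref{assumption:epsilon} secretly makes $n_{i^*}$ astronomically large) to be the only delicate points; everything else is a one-line consequence of the results already established.
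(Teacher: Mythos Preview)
Your proof is correct and uses the same ingredients as the paper, but organizes them differently. The paper defines $i^*$ \emph{a priori} as the smallest index with $\eff_\epsilon(p)-1 \ge n_{i^*+1}$, then shows (a) termination at any $j < i^*$ already gives $\widehat{\bm S}_j \ge n_{j+1} \ge n_{i^*} > \eff_\epsilon(p)-1$, and (b) at step $i^*$ the lower bound from \cref{res:poisson-correctness} forces both a large output and termination. You instead let $i^*$ be the \emph{actual} termination step and prove a monotonicity lemma (non-termination at a test-statistic step $i$ forces $\eff_\epsilon(p)-1 < n_{i+1}$), which is the contrapositive of the paper's implicit ``must terminate by the critical index'' argument. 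Your observation that the ``output $1$'' line is unreachable (because the final loop iteration has $n_i$ small enough to fail \cref{assumption:epsilon} and hence enters the always-terminating na\"ive branch) is a small simplification over the paper, which instead handles the edge case $\eff_\epsilon(p)=1$ separately. Both routes invoke \cref{assumption:epsilon} in the same way at the end to absorb the $-\epsilon n_{i^*}/4$ variance slack.
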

\begin{subproof}
    We first consider the edge case where $\eff_\epsilon(p) = 1$. In this case, the output satisfies
    $\widehat{\bm{S}} \ge \eff_\epsilon(p)$ since the algorithm always outputs at least
    $1$. Going forward, suppose $\eff_\epsilon(p) \ge 2$.

    Now let $i^*$ be the smallest non-negative integer such that $\eff_\epsilon(p)-1 \geq
    n_{i^*+1}$, which exists because $n_{\log(n)+1} = \frac{1}{2} \le \eff_\epsilon(p)-1$, so $i^* =
    \log n$ satisfies the condition. We consider two cases.

    First, suppose that $\eff_\epsilon(p)-1 \ge n_{i^*}$, which implies that $i^* = 0$ by the
    minimality of $i^*$. Then $n = n_0 \le \eff_\epsilon(p)-1$, so $p$ is $\epsilon$-far from having
    support size $n$ by \cref{obs:eff}, and we conclude that the algorithm terminates with output
    satisfying $\widehat{\bm{S}_0} \ge (1+\epsilon/2) n$ (by \cref{res:poisson-correctness}) or
    $\widehat{\bm{S}_0} \ge n$ (by \cref{res:naive-estimator}).

    Otherwise, we have that $n_{i^*} > \eff_\epsilon(p)-1 \geq n_{i^*+1}$. For all $j < i^*$, if the
    algorithm terminates on loop $j$ then it outputs $\widehat{\bm{S}_j} \geq n_{j+1} \geq n_{i^*} >
    \eff_\epsilon(p)-1$, so that $\widehat{\bm{S}_j} \ge \eff_\epsilon(p)$.

    Finally, suppose the algorithm reaches loop $i^*$. If $n_{i^*}, \epsilon$ do not satisfy
    \cref{assumption:epsilon}, then the output $\widehat{\bm{S}_{i^*}}$ satisfies
    $\widehat{\bm{S}_{i^*}} \ge \min\{\eff_\epsilon(p), n_{i^*}\} = \eff_\epsilon(p)$ by
    \cref{res:naive-estimator}.

    Otherwise, $n_{i^*}, \epsilon$ satisfy \cref{assumption:epsilon}, which implies that
    $\eff_\epsilon(p)-1 \ge n_{i^*+1} > 4/\epsilon$; then, since $\min\{\eff_\epsilon(p)-1,
    n_{i^*}\} = \eff_\epsilon(p)-1$, we conclude from \cref{res:poisson-correctness} that
    \begin{align*}
      \widehat{\bm{S}_{i^*}}
      &> (1+3\epsilon/4) (\eff_\epsilon(p)-1) - \epsilon n_{i^*}/4
      = (1+3\epsilon/4) (\eff_\epsilon(p)-1) - \epsilon n_{i^*+1} / 2 \\
      &\geq (1+\epsilon/4) (\eff_\epsilon(p)-1)
      > \eff_\epsilon(p) - 1 + \frac{\epsilon}{4} \cdot \frac{4}{\epsilon}
      = \eff_\epsilon(p) \,,
    \end{align*}
    which is larger than $n_{i^*+1}$, so the algorithm terminates with output satisfying
    $\widehat{\bm{S}_{i^*}} > \eff_\epsilon(p)$.
\end{subproof}

This completes the proof of the corollary.
\end{proof}

\section{Testing Support Size of Functions}
\label{section:functions}

For any $n \in \bN$, we write $\cH_n$ for the set of functions $f \colon \bN \to \zo$ which
satisfy $|f^{-1}(1)| \leq n$.
We will refer to pairs $(f, p)$, consisting of a function $f \colon \bN \to \zo$ and a probability
distribution $p$ over $\bN$, as \emph{function-distribution} pairs. A function-distribution pair
$(f,p)$ is \emph{$\epsilon$-far} from $\cH_n$ if
\[
  \forall h \in \cH_n \;:\; \Pru{\bm x \sim p}{ f(x) \neq h(x) } \geq \epsilon \,.
\]
For any multiset $S \subset \bN$ and function $f \colon \bN \to \zo$, we will write $S_f$ for the
labeled multiset
\[
  S_f \define \{ (x, f(x)) \;\colon\; x \in S \} \,.
\]

\begin{definition}[Testing Support Size of Functions]
\label{def:testing-functions}
A \emph{support-size tester for functions}, with sample complexity
$m(n,\epsilon, \successprob)$, is an algorithm $B$
which receives as input the parameters $n \in \bN$, $\epsilon \in (0,1)$, and $\successprob \in
(0,1)$, and is required to satisfy the following. For any function-distribution pair $(f, p)$
consisting of a function $f \colon \bN \to \zo$ and a probability distribution $p$ over $\bN$,
if $\bm{S}_f$ is a labeled sample of size $m = m(n,\epsilon,\successprob)$ drawn from $p$, then
\begin{enumerate}
\item If $f \in \cH_n$ then $\Pr{ B(\bm{S}_f) \text{ outputs } \ACCEPT } \geq \successprob$; and
\item If $(f,p)$ is $\epsilon$-far from $\cH_n$ then $\Pr{ B(\bm{S}_f) \text{ outputs } \REJECT }
\geq \successprob$.
\end{enumerate}
We write $m^\FUN(n,\epsilon,\successprob)$ for the optimal sample complexity of a support-size
tester for functions.
\end{definition}

We will also write $m^\DIST(n,\epsilon,\sigma)$ for the optimal sample complexity of support-size
testing of distributions, with parameters $n$ and $\epsilon$, and success probability $\successprob$
(replacing $3/4$ in \cref{def:testing-support-size}). The next two propositions establish
\cref{res:intro-functions} from the introduction.

\begin{proposition}
\label{res:dist-to-fun}
For any $n \in \bN$ and $\epsilon \in (0,1)$, and any $\successprob \in (0,1)$,
\[
  m^\DIST(n,\epsilon, \successprob) \leq m^\FUN(n,\epsilon, \successprob) \,.
\]
\end{proposition}
\begin{proof}
We reduce testing support size of distributions, to testing support size of functions. Let $B$ be a
distribution-free sample-based support-size tester for functions on domain $\bN$, with sample
complexity $m = m^\FUN(n,\epsilon)$. Then, given $n$ and $\epsilon$, the support-size tester for input
distribution $p$ is as follows.
\begin{enumerate}
\item Let $\bm S$ be $m$ independent random samples from $p$.
\item Output $B({\bm S}_f)$, where the sample $\bm S$ is labeled by the constant function $f(x)=1$.
\end{enumerate}
Suppose that $p$ is $\epsilon$-far from having $|\supp(p)| \leq n$. Then for any function $g :
\bN \to \zo$ with $|g^{-1}(1)| \leq n$, it must be the case that $\sum_{i \notin g^{-1}(1)} p_i \geq
\epsilon$.  So
\[
  \Pru{\bm S}{B(\bm{S}_f) \text{ outputs } \REJECT} \geq 3/4 \,.
\]
Now suppose $p$ has $|\supp(p)| \leq n$. Let $g \colon \bN \to \zo$ be the function $g(i) = \ind{i
\in \supp(p)}$, so $|g^{-1}(1)| \leq n$. Then
\[
  \Pru{\bm S}{B(\bm{S}_g) \text{ outputs } \ACCEPT} \geq 3/4 \,.
\]
Since $\sum_{i \notin g^{-1}(1)} p_i = 0$, the labeled sample $\bm{S}_f$ has the same distribution
as the labeled sample $\bm{S}_g$. So
\[
  \Pru{\bm S}{B(\bm{S}_f) \text{ outputs } \ACCEPT} \geq 3/4 \,. \qedhere
\]
\end{proof}

\begin{proposition}
\label{res:fun-to-dist}
For any $n \in \bN$ and $\epsilon \in (0,1)$, and any $\successprob, \xi \in (0,1)$ which satisfy
$\successprob + \xi < 1$,
\[
  m^\FUN(n,\epsilon, \successprob) \leq m^\DIST(n,\epsilon, \successprob + \xi) +
  \tfrac{\ln(2/\xi)}{\epsilon} \,.
\]
\end{proposition}
\begin{proof}
Let $A$ be the distribution support-size tester with sample complexity $m^\DIST$.
On input parameters $n$ and $\epsilon$, given sample access to the function-distribuiton pair
$(f,p)$, we define the tester $B$ which performs the following.
\begin{enumerate}
\item Take $m_1 = \tfrac{\ln(2/\xi)}{\epsilon}$ random labeled samples
$\bm{S}^{(1)}_f$. If $\bm{S}^{(1)}_f \cap f^{-1}(1) = \emptyset$, output $\ACCEPT$.
\item Let $\bm{z} \sim \bm{S}^{(1)} \cap f^{-1}(1)$ be chosen uniformly at random from the elements of the multiset
$\bm{S}^{(1)} \cap f^{-1}(1)$.
\item Take $m_2 = m^\DIST(n,\epsilon,\successprob + \xi)$ random labeled samples $\bm{S}^{(2)}_f$
and let $\bm{T}$ be the multiset obtained by replacing each element of $\bm{S}^{(2)} \cap f^{-1}(0)$
with $z$; \ie
\[
\bm{T} \define ( \bm{S}^{(2)} \cap f^{-1}(1) ) \cup \bm{Z}
\]
where  $\bm{Z}$ is the multiset containing $\bm{z}$ with multiplicity $|\bm{S}^{(2)} \cap
f^{-1}(0)|$.
\item Output $A(\bm{T})$.
\end{enumerate}
To establish correctness of this tester, note that, conditional on the event $E \define (\bm{S}^{(1)} \cap
f^{-1}(1) \neq \emptyset)$, $\bm{T}$ is distributed as $m_2$ independent samples from the
distribution $p^{(\bm z)}$ defined as
\[
  p^{(\bm z)}_i \define \begin{cases}
    p_i &\text{ if } i \in f^{-1}(1) \setminus \{\bm z\} \\
    p_i + \sum_{j \in f^{-1}(0)} p_j &\text{ if } i = \bm z \\
    0 &\text{ if } i \in f^{-1}(0) \,.
  \end{cases}
\]
If $|f^{-1}(1)| \leq n$ then it is clear that $|\supp(p^{(\bm z)})| \leq n$. If $(f,p)$ is
$\epsilon$-far from satisfying $|f^{-1}(1)| \leq n$, then the sum of the largest $n$ values $p_i$ on
domain $i \in f^{-1}(1)$ is at most $\sum_{j \in f^{-1}(1)} p_j -\epsilon$.  Therefore the sum of
the largest $n$ values of $p^{(\bm z)}_i$ is at most
\[
  \sum_{j \in f^{-1}(0)} p_j + \sum_{j \in f^{-1}(1)} p_j - \epsilon \leq 1 -\epsilon \,,
\]
so $p^{(\bm z)}$ is $\epsilon$-far from having $|\supp(p^{(\bm z)})| \leq n$.

Suppose that $|f^{-1}(1)| \leq n$. Write $\rho_E \define \Pru{\bm{S}^{(1)}}{E}$. Then
\[
  \Pru{\bm{S}^{(1)}, \bm{S}^{(2)}}{ B \text{ outputs } \ACCEPT }
  = (1-\rho_E) + \rho_E \cdot \Pru{\bm{S}^{(2)}}{ A(\bm{S}^{(2)}) \text{ outputs } \ACCEPT }
  \geq \successprob + \xi \,.
\]
Now suppose $(f,p)$ is $\epsilon$-far from satisfying $|f^{-1}(1)| \leq n$. Then
\[
  \Pru{\bm{S}^{(1)}, \bm{S}^{(2)}}{ B \text{ outputs } \REJECT }
  = \rho_E \cdot \Pru{\bm{S}^{(2)}}{ A(\bm{S}^{(2)}) \text{ outputs } \REJECT }
  \geq \rho_E \cdot (\successprob + \xi) \,.
\]
Since $(f, p)$ is $\epsilon$-far from satisfying $|f^{-1}(1)| \leq n$, it must be the case that
$\sum_{j \in f^{-1}(1)} p_j \geq \epsilon$, so
\[
  \rho_E \geq 1 - (1-\epsilon)^{m_1} \geq 1 - e^{-\epsilon m_1} \geq 1 - \xi/2 \,.
\]
Then
\[
  \Pru{\bm{S}^{(1)}, \bm{S}^{(2)}}{ B \text{ outputs } \REJECT }
  \geq (1-\xi/2) (\successprob + \xi)  \geq \successprob \,.\qedhere
\]
\end{proof}

%


 \newpage
   \printbibliography

 \newpage
  
\appendix

\section{Coefficients of \texorpdfstring{$P_d$}{Pd} and Values of \texorpdfstring{$f$}{f}}
\label{section:coefficients}

\begin{proposition}[Chebyshev Polynomial Coefficients \cite{oeis_cheb_coeffs}]
\label{res:coefficients-formula}
The $j^{th}$ coefficient $b_j$ of $T_d(x)$ is 0 when $j$ has opposite parity as $d$.  Otherwise
\[
  b_j = 2^{j-1} \cdot d \cdot (-1)^{\frac{d-j}{2}}
    \cdot \frac{\left(\frac{d+j}{2}-1\right)!}{\left(\frac{d-j}{2}\right)! \cdot j!} \,.
\]
\end{proposition}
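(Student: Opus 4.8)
The plan is to prove the formula by induction on the degree $d$ (taking $d \ge 1$, since the case $d = 0$ is degenerate: $T_0 \equiv 1$ but the formula would involve $(-1)!$), using the three-term recurrence $T_d(x) = 2x\,T_{d-1}(x) - T_{d-2}(x)$ from the definition of the Chebyshev polynomials. Writing $b_j^{(d)}$ for the coefficient of $x^j$ in $T_d(x)$, the recurrence becomes $b_j^{(d)} = 2 b_{j-1}^{(d-1)} - b_j^{(d-2)}$. From this the parity claim follows immediately by induction, since the conditions $j \equiv d$, $j-1 \equiv d-1$, and $j \equiv d-2$ modulo $2$ are all equivalent, so the three coefficients appearing in the recurrence vanish or survive together. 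The base cases $T_1(x) = x$ and $T_2(x) = 2x^2 - 1$ are checked directly against the claimed formula.

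For the inductive step, fix $d \ge 3$ and $j$ with $j \equiv d \pmod 2$ and $0 \le j \le d$, and substitute the inductive hypotheses for $b_{j-1}^{(d-1)}$ and $b_j^{(d-2)}$ into $b_j^{(d)} = 2 b_{j-1}^{(d-1)} - b_j^{(d-2)}$. Both resulting terms carry the common factor $2^{j-1}(-1)^{(d-j)/2}\bigl(\tfrac{d+j}{2}-2\bigr)!$; after pulling it out and abbreviating $a \define \tfrac{d-j}{2}$ and $b \define \tfrac{d+j}{2}$, the claimed identity reduces to
\[
  (b-2)!\left[\frac{d-1}{a!\,(j-1)!} + \frac{d-2}{(a-1)!\,j!}\right] = d \cdot \frac{(b-1)!}{a!\,j!} \,,
\]
which, after multiplying through by $a!\,j!/(b-2)!$, becomes the purely arithmetic identity $(d-1)j + (d-2)a = d(b-1)$. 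Substituting $a = \tfrac{d-j}{2}$ and $b-1 = \tfrac{d+j-2}{2}$ and expanding both sides verifies this in one line.

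The only subtlety is the behaviour at the two extreme values of $j$, where the symbolic manipulation above would involve $(-1)!$: when $j = d$ the term $b_j^{(d-2)}$ is absent (as $j$ exceeds $\deg T_{d-2}$), and when $j = 0$ (possible only when $d$ is even) the term $b_{j-1}^{(d-1)}$ is absent; in each case the formula should be confirmed by a direct computation — for $j = d$ it reproduces the known leading coefficient $2^{d-1}$, and for $j = 0$ it gives $(-1)^{d/2}$, matching $-b_0^{(d-2)}$. I do not anticipate any genuine difficulty: the work is entirely factorial bookkeeping plus these few edge cases. An alternative would be to extract the coefficients from the generating function $\sum_{d \ge 0} T_d(x)\,t^d = \tfrac{1-tx}{1-2tx+t^2}$, or to expand $T_d(x) = \sum_k \binom{d}{2k}(x^2-1)^k x^{d-2k}$ via the binomial theorem, but the recurrence-based induction is the most direct route.
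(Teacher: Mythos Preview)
Your induction argument is correct; the arithmetic identity $(d-1)j + (d-2)a = d(b-1)$ does check out after substituting $a = \tfrac{d-j}{2}$ and $b = \tfrac{d+j}{2}$, and the two boundary cases $j=d$ and $j=0$ are handled as you describe.

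There is nothing to compare against, however: the paper does not prove this proposition at all. It is stated with a citation to the OEIS entry for Chebyshev coefficients and treated as a known fact, with no accompanying proof. The formula is used only once, in \cref{res:f-values}, to write out an explicit expression for the values $f(k)$ defining the test statistic --- and that expression is itself included ``for the sake of completeness and to generate \cref{fig:f-values}'' rather than for any step in the analysis. The coefficient bound that actually matters for the variance argument (\cref{res:cheb_coefficient_ub}, $|b_j| \le d \cdot 3^d$) is proved separately and directly from the recurrence, without invoking the closed-form formula. So your proof is a genuine addition relative to what the paper supplies, though for a result the authors regard as standard.
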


We put a bound on the largest coefficient of $T_d(x)$. The leading coefficient is $2^{d-1}$, so the
bound will necessarily be exponential in $d$.

\begin{proposition}
\label{res:cheb_coefficient_ub}
For every $d \geq 1$ and every $j \leq d$, the $j^\text{th}$ coefficient of $T_d(x)$
satisfies $|b_j| \leq d \cdot 3^d \leq 9^d$.
\end{proposition}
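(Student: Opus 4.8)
The plan is to prove the stronger statement that the sum of absolute values of \emph{all} coefficients of $T_d$ is at most $(1+\sqrt2)^d$, which in particular bounds each individual $|b_j|$. The key is the sign pattern recorded in \cref{res:coefficients-formula}: the only nonzero coefficients are the $b_j$ with $j \equiv d \pmod 2$, and each such coefficient has sign $(-1)^{(d-j)/2}$, so $|b_j| = (-1)^{(d-j)/2} b_j$. This is exactly the sign pattern produced by evaluating $T_d$ at the imaginary unit $\mathrm{i}$: writing $j = d - 2k$ we have $\mathrm{i}^j = \mathrm{i}^d (\mathrm{i}^2)^{-k} = \mathrm{i}^d (-1)^k$, so
\[
  T_d(\mathrm{i}) = \sum_j b_j \mathrm{i}^j = \mathrm{i}^d \sum_k (-1)^k b_{d-2k} = \mathrm{i}^d \sum_j |b_j| \,,
\]
whence $\sum_j |b_j| = \mathrm{i}^{-d} T_d(\mathrm{i})$.

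It then remains to evaluate $T_d(\mathrm{i})$, which is a direct substitution into the closed-form formula of \cref{fact:chebyshev-formula}. At $x = \mathrm{i}$ we have $x^2 - 1 = -2$ and $x \pm \sqrt{x^2-1} = \mathrm{i}(1 \pm \sqrt2)$ (the choice of branch is irrelevant, as the formula is symmetric in the two roots), so $T_d(\mathrm{i}) = \tfrac{\mathrm{i}^d}{2}\big((1+\sqrt2)^d + (1-\sqrt2)^d\big)$ and therefore $\sum_j |b_j| = \tfrac12\big((1+\sqrt2)^d + (1-\sqrt2)^d\big)$. This is a nonnegative real number, bounded above by $\tfrac12\big((1+\sqrt2)^d + 1\big) \le (1+\sqrt2)^d$ for $d \ge 1$ since $|1-\sqrt2| < 1 \le (1+\sqrt2)^d$. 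As $1 + \sqrt2 < 3$ and $1 \le d \le 3^d$, we conclude $|b_j| \le \sum_j |b_j| < 3^d \le d \cdot 3^d \le 9^d$.

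The only delicate point is the sign-and-power bookkeeping in the first display; once the identity $\sum_j |b_j| = \mathrm{i}^{-d} T_d(\mathrm{i})$ is in hand, everything else is mechanical. If one prefers to avoid complex numbers, a single $|b_j|$ can instead be bounded directly from the factorial formula in \cref{res:coefficients-formula} by rewriting $\tfrac{((d+j)/2 - 1)!}{((d-j)/2)!\,j!} = \tfrac{2}{d+j}\binom{(d+j)/2}{(d-j)/2}$ and applying $\binom{n}{k} \le (en/k)^k$, then checking by elementary one-variable calculus that the resulting expression stays below $3^d$; this route is more tedious and yields a weaker constant, so I would present the argument above.
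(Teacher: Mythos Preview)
Your proof is correct and takes a genuinely different route from the paper. The paper proceeds by a straightforward induction on $d$ using the recurrence $T_d(x) = 2x\,T_{d-1}(x) - T_{d-2}(x)$: the coefficient recursion $b^{(d)}_j = 2b^{(d-1)}_{j-1} - b^{(d-2)}_j$ immediately yields $|b^{(d)}_j| \le 2|b^{(d-1)}_{j-1}| + |b^{(d-2)}_j|$, and the inductive hypothesis $|b^{(d')}_j| \le \max\{1,d'\}\cdot 3^{d'}$ closes the loop. This argument uses nothing beyond the defining recurrence.

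Your approach instead exploits the explicit sign pattern of the coefficients (from \cref{res:coefficients-formula}) to recognize $\sum_j |b_j|$ as $\mathrm{i}^{-d} T_d(\mathrm{i})$, then evaluates via the closed-form formula. This is slicker and actually yields the sharper bound $\sum_j |b_j| = \tfrac12\big((1+\sqrt2)^d + (1-\sqrt2)^d\big) < (1+\sqrt2)^d$, improving on the paper's $d\cdot 3^d$. The trade-offs: your argument needs both \cref{res:coefficients-formula} and the extension of \cref{fact:chebyshev-formula} to complex arguments (the paper states it only for $x \in \bR$, though the extension is immediate since both sides are polynomials in $x$), whereas the paper's induction is entirely self-contained from the recurrence. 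Either approach more than suffices for the downstream application, which only needs a bound exponential in $d$.
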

\begin{proof}
Let $b^{(d)}_j$ be the coefficient of $x^j$ in $T_d(x)$, with $b^{(d)}_{-1} \define 0$. We will
show by induction that for every $d \ge 0$, $|b^{(d)}_j| \leq \max\{1,d\} \cdot 3^d$. In the base
case we have $b^{(0)}_j, b^{(1)}_j \in \zo$. By the recursive definition of $T_d(x)$,
\begin{align*}
  b^{(d)}_j &= 2 \cdot b^{(d-1)}_{j-1} - b^{(d-2)}_j \\
            &\leq 2 \cdot |b^{(d-1)}_{j-1}| + |b^{(d-2)}_j| \\
            &\leq 2 \cdot \max\{1,d-1\} \cdot 3^{d-1} + \max\{1,d-2\} \cdot 3^{d-2} \\
            &\leq 2 \cdot d 3^{d-1} + d 3^{d-2}
            = \left(2 + \frac{1}{3}\right) d 3^{d-1} \leq d 3^d \,.
\end{align*}
The same argument shows $b^{(d)}_j \geq - d 3^d$. Now $d \le 3^d$, so $|b^{(d)}_j| \leq 3^{2d} =
9^d$.
\end{proof}

We can now calculate the coefficients of $P_d$ in terms of $\delta$ and the coefficients $b_j$ of
$T_d$.

\begin{proposition}[Coefficients of $P_d$]
\label{res:coefficients_P}
For given $\ell, r$, the polynomial $P_d(x) = \left(\sum_{k=1}^d a_k x^k\right) - 1$ defined by
$P_d(x) \define -\delta \cdot T_d(\psi(x))$ has coefficients
\[
    a_k = (-1)^{k+1} \cdot 
    \delta \cdot 2^k \sum_{j=k}^d b_j \cdot \frac{1}{(r-\ell)^j} \cdot \binom{j}{k}
\cdot (r+\ell)^{j-k} \,,
\]
where $b_j$ is the $j^\text{th}$ coefficient of $T_d(x)$.
\end{proposition}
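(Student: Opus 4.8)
The plan is a direct computation: expand $P_d(x) = -\delta\, T_d(\psi(x))$ by writing $T_d(y) = \sum_{j=0}^d b_j y^j$ (so that $b_j$ is by definition the $j^\text{th}$ coefficient of $T_d$) and then applying the binomial theorem to each power of $\psi(x)$. First I would rewrite the affine map as $\psi(x) = \frac{(r+\ell) - 2x}{r-\ell}$, so that $\psi(x)^j = \frac{1}{(r-\ell)^j}\big((r+\ell) - 2x\big)^j = \frac{1}{(r-\ell)^j}\sum_{k=0}^j \binom{j}{k}(r+\ell)^{j-k}(-2)^k x^k$. Substituting this into $T_d(\psi(x)) = \sum_{j=0}^d b_j\, \psi(x)^j$ and multiplying through by $-\delta$ produces a double sum over the index set $\{(j,k) : 0 \le k \le j \le d\}$.

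Next I would interchange the order of the two finite sums to collect, for each $k$, the coefficient of $x^k$ in $P_d(x)$, namely $-\delta\, (-2)^k \sum_{j=k}^d b_j\, (r-\ell)^{-j}\binom{j}{k}(r+\ell)^{j-k}$. For $k \ge 1$ this is exactly $a_k = (-1)^{k+1}\,\delta\, 2^k \sum_{j=k}^d b_j\, (r-\ell)^{-j}\binom{j}{k}(r+\ell)^{j-k}$, using $-(-2)^k = (-1)^{k+1} 2^k$, which is the claimed formula.

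Finally, as a consistency check — and to confirm the ``$-1$'' in $P_d(x) = \left(\sum_{k=1}^d a_k x^k\right) - 1$ — I would evaluate the $k=0$ term, which equals $-\delta \sum_{j=0}^d b_j \left(\tfrac{r+\ell}{r-\ell}\right)^j = -\delta\, T_d(\psi(0))$. Since $\psi(0) = \tfrac{r+\ell}{r-\ell} = 1 + \tfrac{2\alpha}{1-\alpha}$ and $\delta = 1/T_d\!\left(1 + \tfrac{2\alpha}{1-\alpha}\right)$, this equals $-1$, as required. There is no real obstacle here beyond careful sign bookkeeping in the binomial expansion; the only mild subtlety is correctly tracking the interchange of the two summations and the exponent $j-k$ on $(r+\ell)$.
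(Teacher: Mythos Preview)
Your proposal is correct and follows essentially the same approach as the paper: expand $T_d(\psi(x))$ via the binomial theorem, interchange the two finite sums, and read off the coefficient of $x^k$. The only difference is cosmetic (the paper introduces shorthand $\alpha = r+\ell$, $\beta = r-\ell$ and does not spell out the $k=0$ consistency check you include).
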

\begin{proof}
For convenience, write $\alpha \define r+\ell$ and $\beta \define r - \ell$.
Observe that
\begin{align*}
\frac{1}{\delta} \cdot P_d(x)
  = -T_d(\psi(x)) 
  &= -\sum_{j=0}^d b_j \cdot (\psi(x))^j \\
  &= -\sum_{j=0}^d b_j \left(-\frac{2x - r - \ell}{r - \ell}\right)^j 
   = -\sum_{j=0}^d b_j \left(\frac{-2x + \alpha}{\beta}\right)^j  \\
  &= -\sum_{j=0}^d b_j \sum_{k=0}^j \binom{j}{k}
        \left(\frac{-2x}{\beta}\right)^k \left(\frac{\alpha}{\beta}\right)^{j-k}.
\end{align*}
For each fixed $k$, the coefficient of $x^k$ in this polynomial is
\[
  -(-2)^k \sum_{j=k}^d b_j \cdot \frac{1}{\beta^j} \cdot \binom{j}{k} \alpha^{j-k} 
  = (-1)^{k+1} 2^k \sum_{j=k}^d b_j \cdot \frac{1}{(r-\ell)^j} \cdot \binom{j}{k} (r+\ell)^{j-k} \,.
  \qedhere
\]
\end{proof}

We combine these calculations into a formula for the values $f(k)$ in the test statistic
(\cref{def:test-statistic}), for the sake of completeness and to generate \cref{fig:f-values}.  A
similar figure appears in \cite{WY19} for their support size estimator.
\begin{figure}[h!]
\centering
\includegraphics[width=0.5\textwidth]{"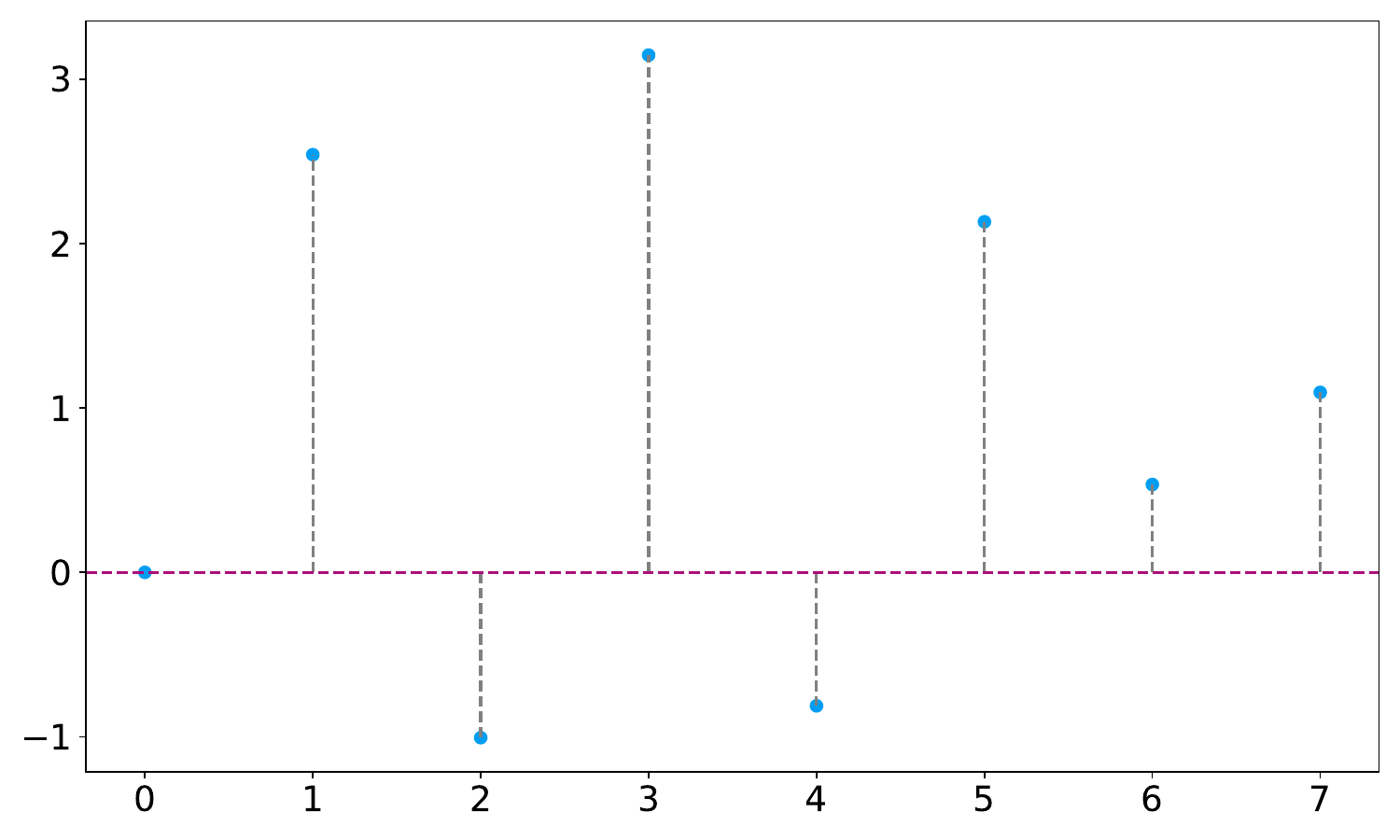"}
\caption{Example values of $1+f(\bm{N_i})$ in the estimator.} 
\label{fig:f-values}
\end{figure}

\begin{proposition}
\label{res:f-values}
For given $\ell, r, m$ and $d$, the function $f$ in \cref{def:test-statistic} obtained from the
polynomial $P_d$ satisfies $f(0) = -1$, $f(k) = 0$ for $k > d$, and for all $k \in [d]$,
\[
    f(k) = (-1)^{k+1} \cdot \delta \cdot d \cdot \frac{1}{m^k}
          \sum_{\substack{j=k \\ j \equiv d \mod 2}}^d (-1)^{\tfrac{d-j}{2}} \left(2^{k+j-1} \cdot 
\frac{\left(\tfrac{d+j}{2}-1\right)!}{\left(\tfrac{d-j}{2}\right)! (j-k)!}\right) \cdot
\frac{(r+\ell)^{j-k}}{(r-\ell)^j} \,.
\]
\end{proposition}
\begin{proof}
\begin{align*}
  f(k) &= a_k \frac{k!}{m^k}
        = (-1)^{k+1} \cdot \delta \cdot 2^k \frac{k!}{m^k}
          \sum_{j=k}^d b_j \cdot \frac{1}{(r-\ell)^j} \cdot \binom{j}{k} \cdot (r + \ell)^{j-k} \\
       &= (-1)^{k+1} \cdot \delta \cdot 2^k \frac{k!}{m^k}
          \sum_{\substack{j=k \\ j \equiv d \mod 2}}^d  (-1)^{\tfrac{d-j}{2}} \left(2^{j-1} \cdot d \cdot
\frac{\left(\tfrac{d+j}{2}-1\right)!}{\left(\tfrac{d-j}{2}\right)! \cdot j!}\right) \cdot
\frac{(r+\ell)^{j-k}}{(r-\ell)^j} \cdot \binom{j}{k}  \\
       &= (-1)^{k+1} \cdot \delta \cdot d \cdot \frac{1}{m^k}
          \sum_{\substack{j=k \\ j \equiv d \mod 2}}^d (-1)^{\tfrac{d-j}{2}} \left(2^{k+j-1} \cdot 
\frac{\left(\tfrac{d+j}{2}-1\right)!}{\left(\tfrac{d-j}{2}\right)! (j-k)!}\right) \cdot
\frac{(r+\ell)^{j-k}}{(r-\ell)^j} \,. \qedhere
\end{align*}
\end{proof}

\resmaxu*
\begin{proof}
Using \cref{res:coefficients_P} with $b_j$
being the $j^\text{th}$ coefficient of $T_d(x)$, and using $|b_j| \leq d 3^d$ from
\cref{res:cheb_coefficient_ub}, we have 
\begin{align*}
  |f(k)| \define |a_k| \frac{k!}{m^k}
      &= \delta \cdot 2^k \frac{k!}{m^k} \cdot \abs*{ \sum_{j=k}^d b_j \cdot \binom{j}{k}
          \cdot \frac{(r+\ell)^{j-k}}{(r-\ell)^j}} 
      \leq \delta \cdot 2^k \frac{k!}{m^k} \cdot \sum_{j=k}^d |b_j| \cdot \binom{j}{k}
          \cdot \frac{(r+\ell)^{j-k}}{(r-\ell)^j} \\
      &\leq \delta \cdot d \cdot 3^d \cdot 2^k \frac{k!}{m^k}
          \cdot \sum_{j=k}^d  \binom{j}{k} \cdot \frac{(r+\ell)^{j-k}}{(r-\ell)^j} \\
      &=  \delta \cdot d \cdot 3^d \cdot 2^k \frac{k!}{m^k (r-\ell)^k}
         \sum_{j=k}^d \binom{j}{k} \left(\frac{r+\ell}{r-\ell}\right)^{j-k} \\
      &=  \delta \cdot d \cdot 3^d \cdot 2^k \frac{1}{m^k (r-\ell)^k}
         \sum_{j=k}^d \frac{j!}{(j-k)!} \left(\frac{r+\ell}{r-\ell}\right)^{j-k} \\
      &\leq  \delta \cdot d \cdot 3^d \cdot 2^k \frac{1}{m^k (r-\ell)^k} (d-k+1) \frac{d!}{(d-k)!}
         \left(\frac{r+\ell}{r-\ell}\right)^{d-k} \\
      &\leq  \delta \cdot d^2 \cdot 3^d \cdot \left( \frac{2d}{m (r-\ell)} \right)^k
         \left(\frac{r+\ell}{r-\ell}\right)^{d-k} \,. \qedhere
\end{align*}
\end{proof}

\section{Derivative of \texorpdfstring{$T_d$}{Td}}
\label{section:td-prime}

The derivative of $T_d$ is often expressed in terms of the so-called \emph{Chebyshev polynomials of
the second kind}. However, since we only require a simple lower bound, we opt for a direct
calculation instead.

\restdprimelb*
\begin{proof}
    Note that $T_d'(1+\gamma) = \frac{\odif}{\odif \gamma} T_d(1+\gamma)$. Using the closed-form
    formula for $T_d$, we have
    \begin{align*}
        \frac{\odif}{\odif \gamma} T_d(1+\gamma)
        &= \frac{1}{2} \left[
            \frac{\odif}{\odif \gamma} \left(1+\gamma + \sqrt{2\gamma+\gamma^2}\right)^d
            + \frac{\odif}{\odif \gamma} \left(1+\gamma - \sqrt{2\gamma+\gamma^2}\right)^d
        \right] \\
        &= \frac{1}{2} \Bigg[
            d \left(1+\gamma + \sqrt{2\gamma+\gamma^2}\right)^{d-1} \left(
                1 + \frac{2+2\gamma}{2\sqrt{2\gamma+\gamma^2}}
            \right) \\
        &\qquad \quad
            + d \left(1+\gamma - \sqrt{2\gamma+\gamma^2}\right)^{d-1} \left(
                1 - \frac{2+2\gamma}{2\sqrt{2\gamma+\gamma^2}}
            \right)
        \Bigg] \\
        &= \frac{d}{2\sqrt{2\gamma+\gamma^2}} \left[
            \left(1+\gamma + \sqrt{2\gamma+\gamma^2}\right)^d
            - \left(1+\gamma - \sqrt{2\gamma+\gamma^2}\right)^d
        \right] \\
        &\ge \frac{d}{\sqrt{3\gamma}} \left[
            \frac{\left(1+\gamma + \sqrt{2\gamma+\gamma^2}\right)^d
                + \left(1+\gamma - \sqrt{2\gamma+\gamma^2}\right)^d}{2}
            - \left(1+\gamma - \sqrt{2\gamma+\gamma^2}\right)^d
        \right] \\
        &\ge \frac{d}{\sqrt{3\gamma}} \left( T_d(1+\gamma) - 1 \right) \,,
    \end{align*}
    where the inequalities used the fact that $\gamma \in (0,1)$ and hence $1 + \gamma -
    \sqrt{2\gamma+\gamma^2} \in (0, 1)$.
\end{proof}

\section{Remark on the Bounded Support Size Assumption}
\label{section:support-assumption}
\label{section:instance-optimal}

As noted in \cite[pp.~21]{GR23}, there seems to be no simple way to obtain a support-size tester
with sample complexity $O\left(\frac{n}{\epsilon^2 \log n}\right)$ from the prior literature,
without a promise that the true support size is at most $O(n)$. By ``simple''\!, we mean using the
known results as a black box without carefully analyzing how the algorithms work. Let us elaborate
on this.

\subsection{Using histogram learner with TV distance guarantee.}
Prior works \cite{VV11stoc,VV17,HJW18} prove the following (informal) statement: There is an algorithm $A$
which draws $m$ samples from an arbitrary probability distribution $p$
and outputs a sorted distribution $q$ with $q_1 \geq q_2 \geq \dotsm$, with the following guarantee.  If
the support size of $p$ satisfies
\[
  |\supp(p)| \leq O\left( \epsilon^2 m \log m \right),
\]
then the output $q$ satisfies $\|q - p^* \|_1 < \epsilon$ with probability at least $3/4$, where
$p^*$ is the sorted copy of $p$, \ie it is a permutation of $p$ such that $p^*_1 \geq p^*_2 \geq
\dotsm$. In particular, this will hold if $|\supp(p)| \leq n$ and we draw $m = \Theta(
\tfrac{n}{\epsilon^2 \log n})$ samples. Using the typical testing-by-learning technique, this
suggests the following tester:
\begin{enumerate}
\item Draw $m = O( \tfrac{n}{\epsilon^2 \log n} )$ samples and produce a sorted distribution
$q$. If input $p$ satisfies $|\supp(p)| \leq n$ then (with high probability) we have
$\|q - p^*\|_1 < \epsilon/2$. Output $\REJECT$ if $q$ is not $\epsilon/2$-close to
having support size at most $n$.
\item Verify that $q$ is indeed close to $p^*$ and output $\ACCEPT$ if this is the case. In
particular, using a \emph{tolerant}\footnote{A tolerant tester should $\ACCEPT$ if the distribution
$p$ is $\epsilon_1$-close to the class, and $\REJECT$ if $p$ is $\epsilon_2$-far from the set, for
$\epsilon_1 < \epsilon_2$.} tester to test whether $p$ is $\epsilon$-close to the set of
permutations of $q$. Unfortunately, the only tolerant tester for
this property that we are aware of is itself obtained by the same testing-by-learning approach of
approximating the histogram (see \eg \cite{Gol17}), which is only guaranteed to work when
$|\supp(p)| \leq n$.
\end{enumerate}
Algorithms for learning the histogram \emph{should} be able to test support size with no
promise on the true support size. One could hope for the following property of the algorithms: Given
$m = O\left(\tfrac{n}{\epsilon^2 \log n}\right)$ samples, if $p$ is $\epsilon$-far from having
support size at most $n$, then with high probability the output $q$ is $\Omega(\epsilon)$-far from
having support size at most $n$. This does not appear to follow from the proven properties of these
algorithms without further analysis. 

\subsection{Using the histogram learner with relative Earth-mover distance guarantee.}
In \cite{VV16}, the assumption of bounded support size is removed from the histogram learner.
However, by necessity, this new algorithm's guarantee is not in terms of the TV distance but rather
the \emph{truncated relative Earth-mover distance} $R_\tau(\cdot, \cdot)$. We will not formally
define this metric here, but we note two of its properties:
\begin{enumerate}
\item $R_\tau$ ignores densities less than $\tau$, \ie if $p, p'$ and  $q, q'$ are pairs of distributions
such that $p_i = p'_i$ whenever $p_i, p'_i > \tau$, and $q_i = q'_i$ whenever $q_i, q'_i > \tau$,
then $R_\tau(p,q) = R_\tau(p',q')$.
\item $\dist_\TV(p^*,q^*) \leq R_0(p,q)$ (see also the exposition \cite{GR20}).
\end{enumerate}
The main result in \cite{VV16} is

\begin{theorem}[Theorem~2 of {\cite{VV16}}]
    \label{res:vv16-main}
    There exists an algorithm satisfying the following for some absolute constant $c$, sufficiently
    large $m$, and any $w \in [1, \log m]$. Given $m$ independent draws from a distribution $p$ with
    histogram $p^*$, with high probability the algorithm outputs a generalized\footnote{A standard
        histogram maps each density value $\alpha$ to the number of elements $i$ satisfying $p_i =
    \alpha$. A generalized histogram may map $\alpha$ to non-integral values as well.} histogram
    $q^*$ satisfying
    \[
        R_{\frac{w}{m \log m}}(p^*, q^*) \le \frac{c}{\sqrt{w}} \,.
    \]
\end{theorem}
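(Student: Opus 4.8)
The plan is to realize the algorithm as a linear program over generalized histograms, in the style of Valiant--Valiant, splitting the analysis into a feasibility step and a distance step. From $m$ (Poissonized) samples record the \emph{fingerprint} $\mathcal F = (\mathcal F_1, \mathcal F_2, \dots)$, where $\mathcal F_j$ counts the domain elements appearing exactly $j$ times. First, each element seen more than $\Theta(\log m)$ times is assigned its empirical density and removed: by a Chernoff bound these ``frequent'' elements have densities $\gtrsim \log m/m$ and are learned to within a $(1 \pm o(1))$ factor, which costs $o(1/\sqrt w)$ in relative earthmover distance and is set aside. For the remaining ``rare'' part, the LP has a nonnegative variable $h(x)$ for each $x$ on a geometric grid of $[\tau', 1]$ with ratio $1 + o(1)$ and $\tau' \asymp \tau = w/(m\log m)$, the validity constraint $\sum_x x\, h(x) \le 1$, and, for each $j$ up to a cutoff of order $\log m$, the constraint $\bigl|\sum_x h(x)\,\mathrm{poi}(mx,j) - \mathcal F_j\bigr| \le \beta_j$, where $\mathrm{poi}(\lambda,j) = e^{-\lambda}\lambda^j/j!$ and $\beta_j$ is a deviation bound of order $\sqrt{\mathbb E[\mathcal F_j]}$ up to logarithmic factors. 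The algorithm returns any feasible $q^*$, with the frequent part added back in.

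First I would prove \textbf{feasibility}: with high probability the true histogram $p^*$, discretized onto the grid (which costs $o(1/\sqrt w)$ in relative earthmover distance), satisfies all the constraints. Under Poissonization (\cref{fact:poisson}) the multiplicities $\bm{N}_i \sim \Poi(mp_i)$ are independent, so $\mathcal F_j = \sum_i \ind{\bm{N}_i = j}$ is a sum of independent indicators with mean $\sum_i \mathrm{poi}(mp_i,j)$; a Bernstein bound together with a union bound over the $O(\log m)$ values of $j$ gives the claimed concentration with $\beta_j$ as above. Hence the LP is feasible and outputs some generalized histogram $q^*$.

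The heart of the argument is the \textbf{distance bound}: every feasible $q^*$ satisfies $R_\tau(p^*, q^*) \le c/\sqrt w$. By the Kantorovich--Rubinstein dual description of this Wasserstein-type distance, $R_\tau(p^*, q^*) = \sup_f \bigl|\sum_x f(x)\,(p^*(x) - q^*(x))\bigr|$ over test functions $f$ that are $1$-Lipschitz in $\log x$ and constant on $(0, \tau]$. Fix such an $f$. After the frequent-element preprocessing, $p^*$ and $q^*$ agree up to $o(1/\sqrt w)$ on densities $\gtrsim \log m/m$, and $f$ is constant below $\tau$, so only the ``transition band'' $x \in [\tau, O(\log m/m)]$ matters --- a band of multiplicative width $\Theta(\log^2 m / w)$. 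On this band I would approximate $f$ by a linear combination $\sum_{j \le d} c_j\,\mathrm{poi}(mx,j)$, equivalently approximate $e^{mx} f(x)$ on the rescaled interval by a degree-$d$ polynomial $\sum_{j \le d} (c_j/j!)(mx)^j$, using Chebyshev approximation; with $d = \Theta(\log m)$ this gives sup-norm error $O(1/\sqrt w)$ on the band (this is exactly where the growth estimate \cref{res:cheb-tail-lb} and the coefficient bound $|b_j| \le 9^d$ from \cref{res:cheb_coefficient_ub} control both the error and the sizes $|c_j|$, and where narrowing the band by the factor $w$ yields the $\sqrt w$). Writing $\sum_x f(x)(p^*(x) - q^*(x))$ as the polynomial part plus the approximation error: the polynomial part telescopes to $\sum_{j \le d} c_j\bigl(\mathbb E_{p^*}[\mathcal F_j] - \mathbb E_{q^*}[\mathcal F_j]\bigr)$, which is at most $\sum_j |c_j|\cdot 2\beta_j = O(1/\sqrt w)$ by the coefficient bound and the choice of $\beta_j$, while the approximation-error part is at most $(\text{total mass on the band}) \times O(1/\sqrt w) = O(1/\sqrt w)$. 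Combining feasibility with this bound proves the theorem.

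The main obstacle is that last step: producing a Chebyshev approximation with \emph{simultaneous} control of (i) the sup-norm error on the transition band in terms of its multiplicative width and the degree $d = \Theta(\log m)$, and (ii) the $\ell_1$-norm of the coefficient vector $(c_j)$, so that the accumulated LP slack $\sum_j |c_j| \beta_j$ stays $O(1/\sqrt w)$ rather than blowing up polynomially in $m$. This is precisely the ``Chebyshev polynomials outside their good range'' phenomenon studied in the rest of this paper, and the $1/\sqrt w$ rate is exactly the dividend of truncating at $\tau = w/(m\log m)$: larger $w$ shrinks the band and sharpens the approximation. The remaining ingredients --- grid discretization, the frequent-element preprocessing, and translating the dual formula back to $R_\tau$ --- are routine.
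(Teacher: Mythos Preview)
This statement is not proved in the paper: it is \emph{quoted verbatim} as Theorem~2 of \cite{VV16} in \cref{section:support-assumption}, where the authors merely cite it as a known result in order to discuss what sample complexity for support-size testing one could extract from prior work. There is therefore no ``paper's own proof'' to compare against.

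That said, your sketch is a reasonable high-level reconstruction of the original Valiant--Valiant argument: the LP over generalized histograms with fingerprint-matching constraints, the frequent/rare split, feasibility via Poisson concentration, and the distance bound via a Kantorovich--Rubinstein dual plus polynomial approximation of the Lipschitz test function. The place where your sketch is loosest is exactly where you flag it: simultaneously controlling the sup-norm approximation error on the transition band and the $\ell_1$-norm of the Poisson-basis coefficients is the technical crux in \cite{VV16}, and the tools from the present paper (\cref{res:cheb-tail-lb}, \cref{res:cheb_coefficient_ub}) are used here for a different construction and do not directly give you that simultaneous control. If you actually want to prove this theorem, you should work from \cite{VV16} rather than from the Chebyshev machinery in this paper.
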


As an application, \cite{VV16} give a procedure to estimate the expected number of unique elements
that would be seen in a sample of size $m \log m$, given a sample of size $m$:

\begin{theorem}[Proposition~1 of {\cite{VV16}}]
    \label{res:vv16-unique}
    Given $m$ samples from an arbitrary distribution $p$, with high probability over the randomness
    of the samples, one can estimate the expected number of unique elements that would be seen in a
    set of $k$ samples drawn from $p$, to within error $k \cdot c \sqrt{\frac{k}{m \log m}}$ for
    some universal constant $c$.
\end{theorem}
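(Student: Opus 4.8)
The plan is to combine the histogram learner of \cref{res:vv16-main} with the observation that ``expected number of distinct elements in $k$ samples'' is a linear functional of the histogram that is Lipschitz with respect to the (truncated) relative earth-mover distance, once rescaled by $k$. First I would set up the functional: writing $p^*$ for the histogram of $p$ and Poissonizing the sample of size $k$, the expected number of distinct elements seen is
\[
  \bar U_p(k) \;=\; \sum_{i}\bigl(1 - e^{-k p_i}\bigr) \;=\; k\sum_{x} x\,p^*(x)\,\phi_k(x), \qquad \phi_k(x) \define \frac{1-e^{-kx}}{kx},
\]
so $\bar U_p(k)$ is $k$ times a linear functional of $p^*$ whose ``profile'' $\phi_k$ is applied against probability mass. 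The exact, non-Poissonized count $\sum_i(1-(1-p_i)^k)$ differs from $\bar U_p(k)$ by a lower-order $O(\sqrt k)$ term, which I would absorb into the final error, and the same remark lets us pass freely between a $\mathrm{Poi}(k)$- and an exact-$k$-sample count. The estimator is then the plug-in $\widehat U \define k\sum_x x\,q^*(x)\,\phi_k(x)$ using the generalized histogram $q^*$ produced by \cref{res:vv16-main}.

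Next I would record the two analytic facts about $\phi_k$ that make the plug-in accurate: (i) $\phi_k$ is bounded, $0 \le \phi_k \le 1$; and (ii) $\phi_k$ is $O(1)$-Lipschitz in the logarithmic metric $\rho(x,y) = |\ln x - \ln y|$ uniformly in $k$, since $\bigl|\tfrac{\odif}{\odif \ln x}\phi_k(x)\bigr| = |(kx)\phi_1'(kx)| \le C$ for an absolute constant $C$ (the function $t \mapsto t\phi_1'(t) = \tfrac{e^{-t}(t+1)-1}{t}$ is bounded on $(0,\infty)$, vanishing at both ends). Because $\rho$ is exactly the cost metric underlying the relative earth-mover distance, facts (i)--(ii) turn any transport between $p^*$ and $q^*$ into a bound on $|\widehat U - \bar U_p(k)|$: transport among densities $\ge \tau$ costs at most $kC\cdot R_\tau(p^*,q^*)$, and the truncation of densities below $\tau$ costs at most $k$ times the oscillation of $\phi_k$ on $(0,\tau)$.

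Then I would instantiate \cref{res:vv16-main} with $w \define \tfrac{m\log m}{k}$, so the truncation threshold is $\tau = \tfrac{w}{m\log m} = \tfrac1k$, which is valid for $k$ in the range $m \le k \le m\log m$; the easier regime $k < m$, where we already have at least $k$ of the $m$ i.i.d.\ samples, can be handled separately. This gives $R_\tau(p^*, q^*) \le c/\sqrt w = c\sqrt{k/(m\log m)}$ with high probability, and hence $|\widehat U - \bar U_p(k)| \le kC\cdot c\sqrt{k/(m\log m)} + (\text{truncation term})$. The claimed error $k\cdot c'\sqrt{k/(m\log m)}$ follows once the truncation term is shown to be of the same order.

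I expect the main obstacle to be exactly that last point: controlling the contribution of densities below $\tau = 1/k$. Each such element contributes $1 - e^{-kp_i} \le kp_i$ to $\bar U_p(k)$, so the naive bound on their total is $k\sum_{p_i<1/k}p_i \le k$, far too weak; one must instead use the precise way the truncated relative earth-mover distance of \cref{res:vv16-main} already pays for sub-$\tau$ mass through the $1/\sqrt w$ term, together with the near-constancy of $\phi_k$ on $(0,1/k)$, to show the truncation error is also $O\!\bigl(k\sqrt{k/(m\log m)}\bigr)$. Verifying the uniform-in-$k$ Lipschitz constant $C$ in (ii) and the $O(\sqrt k)$ Poissonization/debiasing estimates are routine by comparison.
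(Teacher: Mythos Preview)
The paper does not prove this statement; it is quoted as Proposition~1 of \cite{VV16} and invoked as a black box in \cref{section:support-assumption}. There is no proof in the present paper to compare your proposal against.

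Your overall strategy---writing $\bar U_p(k)/k = \int \phi_k\,d\mu_p$ with $\phi_k(x)=(1-e^{-kx})/(kx)$, verifying that $\phi_k$ is $O(1)$-Lipschitz in the log-metric so that relative earth-mover distance controls the plug-in error, and instantiating \cref{res:vv16-main} with $w = (m\log m)/k$---is the natural one. However, the obstacle you flag at the end is real, and your proposed resolution does not work as stated. The function $\phi_k$ is \emph{not} ``near-constant'' on $(0,1/k)$: it ranges from $1-e^{-1}\approx 0.632$ at $x=1/k$ up to $1$ as $x\to 0$, an oscillation of $e^{-1}$ that is independent of $k$. Since (by the property of $R_\tau$ this paper quotes) the truncated distance places no constraint on how mass is arranged below $\tau$, a black-box appeal to \cref{res:vv16-main} with $\tau=1/k$ cannot rule out an additive $\Theta(k)$ error: take $p$ uniform on $2k$ points and $q$ uniform on $k+1$ points, so that all densities lie below $1/k$ and $R_{1/k}(p^*,q^*)=0$, yet $|U_k(p)-U_k(q)|=\Theta(k)$. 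If instead you shrink the threshold to $\tau=\epsilon/k$ so that the oscillation of $\phi_k$ on $(0,\tau)$ is $\Theta(\epsilon)$, then $w=\epsilon\,(m\log m)/k$ and balancing the $k\epsilon$ truncation term against the $kC/\sqrt{w}$ transport term yields only a rate of $k\cdot(k/(m\log m))^{1/3}$, not the claimed $k\cdot(k/(m\log m))^{1/2}$. Thus a purely black-box reduction to the $R_\tau$ guarantee of \cref{res:vv16-main} falls short; the argument in \cite{VV16} must use more---either structural properties of the specific histogram their estimator outputs below the threshold, or a direct construction of the $U_k$ estimator---and your sketch would need to identify and carry out that additional step.
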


We sketch two ways to obtain a support-size tester with sample complexity $\poly(1/\epsilon) \cdot
O(\tfrac{n}{\log n})$ using these theorems.

\paragraph*{1. Estimating Unique Elements.} Combining \cref{res:vv16-unique} with
the na\"ive tester from \cref{section:small-epsilon}, which makes its decision based on the number of
unique elements observed, gives a support size tester with sample complexity $O(\frac{n}{\epsilon^5
\log n})$. The idea is to set $m \define \Theta(\frac{n}{\epsilon^5 \log n})$, use
\cref{res:vv16-unique} to obtain an estimate of the expected number of unique elements that would be
seen in a sample of size $k \define \Theta(n/\epsilon)$ to within error $\epsilon n / 8$, and accept
if and only if this estimate is at most $(1 + \epsilon/8) n$.

When $|\supp(p)| \le n$, the expected number of unique elements is at most $n$, so the tester
accepts. Now, suppose $p$ is $\epsilon$-far from being supported on $n$ elements. We claim that any
set of size at most $(1 + \epsilon/2) n$ misses at least $\epsilon / 2$ mass from $p$. Note that the
$n^{\text{th}}$ largest probability mass in $p$ is at most $1/n$, and the remaining elements of
smaller probability mass (the \emph{light elements}) make up at least $\epsilon$ mass; thus there
are at least $\epsilon n$ light elements. Hence the most probability mass that a set of size
$(1+\epsilon/2) n$ can cover comes from picking the $n$ elements of largest mass plus $\epsilon n /
2$ light elements, for a total of at most $1 - \epsilon + \frac{\epsilon n}{2} \cdot \frac{1}{n} = 1
- \epsilon/2$ mass, \ie $\epsilon/2$ mass is missed.

Therefore, an argument similar to the proof of \cref{res:naive-estimator} shows that the expected
number of unique elements observed in a sample of size $\Theta(n/\epsilon)$ is at least $(1 +
\epsilon/4) n$, so the tester rejects.

\ignore{
\footnote{Let $\bm{X}_i$ indicate whether a new unique element was observed in
    the $i^{\text{th}}$ step, let $\bm{S}_i \define \sum_{j \le i} \bm{X}_i$ be the number of unique
    elements seen in the first $i$ steps, and let $\bm{Y}_i \define \ind{\bm{S}_{i-1} \ge
    (1+\epsilon/2)n \text{ or } \bm{X}_i = 1}$. Then for any $m \in \bN$, $\bm{S}_m \ge
    (1+\epsilon/2)n$ if and only if $\sum_{i \le m} \bm{Y}_i \ge (1+\epsilon/2)n$. Moreover, for any
    setting of $(\bm{Y}_j)_{j<i} = (y_j)_{j<i}$, the conditioned random variable $\bm{Y}_i \,|\,
    (\bm{Y}_j)_{j<i} = (y_j)_{j<i}$ stochastically dominates a Bernoulli random variable with
    parameter $\epsilon/2$ (because either $(1+\epsilon/2)n$ unique elements have been observed, or
    the unseen elements make up at least $\epsilon/2$ mass). Hence a Chernoff bound implies that,
    for $k = \Theta(n/\epsilon)$, $\sum_{i \le k} \bm{Y}_i \ge 2n$ with probability at least $1 -
e^{-\Omega(n)}$. Therefore $\Ex{\bm{S}_k} \ge (1 - e^{-\Omega(n)}) (1+\epsilon/2) n > (1+\epsilon/4)
n$.}
}

\paragraph*{2. Learning the Histogram.} Instead of applying the na\"ive tester on
top of the unique elements estimator, one may hope for a more efficient tester by directly analyzing
the learned histogram $q^*$ from \cref{res:vv16-main}. We expect that one could obtain a support
size tester with sample complexity $O(\frac{n}{\epsilon^3 \log n})$ as follows. By setting $w
\define \Theta(1/\epsilon^2)$ and $m \define \Theta(\frac{n}{\epsilon^3 \log n})$ (at least when
this satisfies the constraint $w \le \log m$ from \cref{res:vv16-main}), the algorithm from
\cref{res:vv16-main} yields $q^*$ satisfying
\begin{equation}
    \label{eq:remd-bound}
    R_{\Theta(\epsilon/n)}(p^*, q^*) \le \Theta(\epsilon) \,.
\end{equation}
It appears that this suffices for testing support size. However, we will not attempt to prove this,
since the bound is worse than our \cref{res:intro-main}, and it is not possible to improve this
bound while treating the algorithm as a black box. This is because, if $\tau \gg \epsilon/n$, we can
construct distributions $p,q$ whose histograms satisfy $R_\tau(p^*,q^*) < \epsilon$ and yet
$|\supp(p)| \leq n$ while $q$ is $\epsilon$-far from having $|\supp(p)| \leq n$.

\ignore{
We may construct probability distributions $p$ and $p'$ such that $p$ has support
size $n$, and in particular assigns probability $4\epsilon/n$ to each of $n/2$ elements (for a total
of $2\epsilon$ mass), while $p'$ is obtained from $p$ by splitting each of those $n/2$ elements into
two elements assigned equal probability $2\epsilon/n$, so that $p'$ is $\epsilon$-far from having
support size $n$. The (non-truncated) relative earth-mover distance between $p$ and $p'$ is
$\Theta(\epsilon)$ (and indeed so is the TV distance between them), suggesting that we cannot make
the right-hand side of \eqref{eq:remd-bound} any larger if we wish to distinguish between $p$ and
$p'$, \ie we should set $w = \Omega(1/\epsilon^2)$. Then, if the threshold $\tau$ for the truncated
relative earth-mover distance $R_\tau$ is $\tau \gg \epsilon/n$, the distributions $p$ and $p'$
would be indistinguishable under $R_\tau$, so we must have $\frac{w}{m \log m} = O(\epsilon/n)$.
Since $w = \Omega(1/\epsilon^2)$, we are forced to set $m = \Omega(\frac{n}{\epsilon^3 \log n})$.
}

\end{document}